\def\d{\mathrm{d}}
\newcommand{\D}{\mathcal {D}}
\newcommand{\C}{\mathcal {C}}
\newcommand{\VaR}{\mathrm{VaR}}
\newcommand{\ES}{\mathrm{ES}}
\newcommand{\E}{\mathbb{E}}
\newcommand{\R}{\mathbb{R}}
\newcommand{\p}{\mathbb{P}}
\newcommand{\M}{\mathcal{M}}
\newcommand{\id}{\mathds{1}}
\newcommand{\lcx}{\prec_{\mathrm{cx}}}
\newcommand{\lst}{\prec_{\mathrm{st}}}
\renewcommand{\)}{\right)}
\renewcommand{\ge}{\geqslant}
\renewcommand{\le}{\leqslant}
\renewcommand{\geq}{\geqslant}
\renewcommand{\leq}{\leqslant}
\renewcommand{\epsilon}{\varepsilon}
\renewcommand{\cdots}{\dots}
\theoremstyle{plain}
\newtheorem{theorem}{Theorem}
\newtheorem{corollary}{Corollary}
\newtheorem{lemma}{Lemma}
\newtheorem{proposition}{Proposition}
\theoremstyle{definition}
\newtheorem{definition}{Definition}
\newtheorem{example}{Example}
\newtheorem{conjecture}{Conjecture}
\theoremstyle{remark}
\newtheorem{remark}{Remark}
\newcommand{\cet}{\begin{center}}
\newcommand{\ecet}{\end{center}}
\begin{document}

\title{Ordering and Inequalities for Mixtures on Risk Aggregation}

\author{
  Yuyu Chen\thanks{\scriptsize Department of Statistics and Actuarial Science, University of Waterloo, Canada. Email: \texttt{y937chen@uwaterloo.ca}}
  \and Peng Liu\thanks{\scriptsize Department of Mathematical Sciences, University of Essex, UK. Email: \texttt{peng.liu@essex.ac.uk}}
  \and Yang Liu\thanks{\scriptsize Corresponding author. Department of Mathematical Sciences, Tsinghua University, China. Email: \texttt{yang-liu16@mails.tsinghua.edu.cn}}
  \and Ruodu Wang\thanks{\scriptsize Department of Statistics and Actuarial Science, University of Waterloo, Canada. Email: \texttt{wang@uwaterloo.ca}}
}

\maketitle

%\tableofcontents

\begin{abstract}
	Aggregation sets, which represent  model uncertainty due to unknown dependence, are an important object in the study of robust risk aggregation. In this paper, we investigate  ordering relations  between two aggregation sets  for which the sets of marginals are related by two simple operations: distribution mixtures and quantile mixtures. Intuitively, these operations ``homogenize"   marginal distributions by making them similar. As a  general conclusion from our results, more ``homogeneous"  marginals lead  to a larger aggregation set, and thus more severe model uncertainty, although the situation for quantile mixtures is much more complicated than   that for distribution mixtures.
We proceed to study inequalities on the worst-case values of  risk measures in risk aggregation, which represent conservative calculation of regulatory  capital.  Among other results, we obtain an order relation on VaR under quantile mixture for marginal distributions with monotone densities.  Numerical results are presented to visualize the theoretical results and further inspire some conjectures. Finally, we provide applications on portfolio diversification under dependence uncertainty and merging p-values in multiple hypothesis testing, and discuss the connection of our results to joint mixability.

\end{abstract}

\textbf{Keywords:}   aggregation set; distribution mixture; quantile mixture; risk measure; joint mixability

%\tableofcontents
\newpage
\section{Introduction}

Robust risk aggregation has been studied extensively with applications in  banking  and insurance. A typical problem in this area is to compute the worst-case values of some risk measures for an aggregate loss with  unknown dependence structure. Two popular regulatory risk measures used in industry are Value-at-Risk (VaR) and the Expected Shortfall (ES); see \cite{MFE15}
%and %\cite{WZ20}
and the references therein.
%Following recent regulatory transition of Basel III in banking, VaR will be replaced by the Expected Shortfall (ES); see
 %\cite{WZ20}
%and the references therein. %One particular issue with the Expected Shortfall is that it requires a finite mean for the risk to be evaluated. Although this assumption is typically satisfied by market return data, it may not be satisfied by operational risks or catastrophe risks; see \cite{MFE15}.Hence, for heavy-tailed distributions (in particular, distributions without a finite mean), $\VaR$ is arguably the most reasonable risk measure to use.
%The worst-case value of VaR for risk aggregation with given marginals has been studied extensively in the recent literature; see e.g., \cite{EPR13, EPRWB14}, \cite{WPY13} and \cite{BJW14}.
The worst-case value of ES in risk aggregation is explicit since ES is a  coherent risk measure (\cite{ADEH99}), whereas the worst-case value of VaR in risk aggregation generally does not admit analytical formulas, which is a known challenging problem (see e.g., \cite{EPR13, EWW15}).
See \cite{CLW18} on robust risk aggregation   for general risk measures, and  \cite{EKP20} on computation of robust risk aggregation using neural networks.

The above robust risk aggregation problem involves taking the supremum of a risk measure over an \textit{aggregation set}. Fix an atomless probability space $(\Omega,\mathcal{F},\p)$ and let $\mathcal M$ be the set of cdfs\footnote{In this paper, we treat probability measures on $\mathcal B(\R)$ and cdfs on $\R$ as equivalent objects.} on $\R$. For $F\in \mathcal M$,   $X\sim F$ means that the cdf of a random variable $X$ is $F$. Moreover, let $\mathcal M_1$ denote the set of cdfs on $\R$ with finite mean. For $\mathbf F=(F_1,\dots,F_n)\in \mathcal M^n$, the  {aggregation set} (\cite{BJW14}) is defined as
\begin{equation}\label{eq:aggregation_set}
\mathcal  D_n(\mathbf F)= \{\mbox{cdf of }X_1+\dots+X_n: X_i\sim F_i, ~i=1,\cdots,n\}.
 \end{equation}
The obvious interpretation is that $\mathcal D_n(\mathbf F)$ fully describes  model uncertainty associated with known marginal distributions $F_1,\dots,F_n$ but unknown dependence structure. The separate modeling of marginals and dependence is a standard practice in quantitative risk modeling, often involving copula techniques; see e.g., \cite{MFE15}.
An analytical characterization of $\D_n(\mathbf F)$ for a given $\mathbf F$ is very difficult and challenging. The only available analytical results are in \cite{MWW19} for standard uniform marginals.

The main objective of this paper  is to compare   model uncertainty of risk aggregation for $\mathbf F, \mathbf G \in \M^n$ which represent two possible  models of marginals.
The strongest form of comparison is set inclusion between two aggregation sets $\mathcal  D_n(\mathbf F )$ and $\mathcal  D_n(\mathbf G)$.
It turns out that such a strong relation may be achievable if  $\mathbf F, \mathbf G \in \mathcal M^n$ are related by the simple operations of distribution mixtures and quantile mixtures. Distribution mixture produces a tuple whose components are convex combinations of the given  distributions and quantile mixture yields a tuple whose components are given by   convex combinations of the given quantiles.
Both types of operations are common in statistics and risk management, as they correspond to simple operations on the parameters in statistical models or on portfolio construction; see Section \ref{sec:application} for an example. Moreover, if $\mathbf G$ is obtained from $\mathbf F$ via a distribution or quantile mixture, then the mean (assumed to be finite) of any element of $\mathcal  D_n(\mathbf G)$ is the same as that of any element  of $\mathcal  D_n(\mathbf F)$, making the comparison fair.
To the best of our knowledge, this paper is the first systematic study   on the order relation between $\D_n(\mathbf F)$ and $\D_n(\mathbf G)$ for different $\mathbf F$ and $\mathbf G$, thus comparing model uncertainty at the level of all possible distributions.
 %More specifically, we are particularly interested in the case where $\mathbf G$ can be obtained from $\mathbf F$ through some arithmetic averaging operation (distribution mixture) or harmonic averaging operation (quantile mixture).

In some cases, a strong comparison via set inclusion is not possible, but we can compare values of a chosen risk measure.
For a law-invariant risk measure\footnote{We conveniently treat law-invariant risk measures as mappings on $\M$, although it is conventional to treat them as mappings on a space of random variables. The two settings are equivalent for law-invariant risk measures. } $\rho: \M \rightarrow \R$, we denote by $\overline{\rho}(\mathbf F)$ the worst-case value of $\rho$ in risk aggregation for $\mathbf F\in \mathcal M^n$, that is,
\begin{align*}
\overline{\rho}(\mathbf F)& = \sup \{\rho(F): F\in \mathcal D_n(\mathbf F)\}.
\end{align*}
We shall compare $ \overline{\rho}(\mathbf F)$ with $\overline{\rho}(\mathbf G)$, thus the worst-case values of a risk measure under model uncertainty, which usually represent conservative calculation of regulatory risk capital (e.g., \cite{EPR13}).
Certainly, $D_n(\mathbf F) \subset D_n(\mathbf G)$ implies $\overline{\rho}(\mathbf F) \leq \overline{\rho}(\mathbf G)$ for all risk measures $\rho$,
implying that the first comparison is stronger than the second one.\footnote{In this paper, the set inclusion ``$\subset$" is non-strict; the strict set inclusion is ``$\subsetneq$".
Similarly, the terms ``increasing" and ``decreasing" are in the non-strict sense.
}

Our study brings insights to two relevant problems in risk management. First, suppose that $\mathbf F$ and $\mathbf G$ are two possible statistical models for the marginal distributions in a risk aggregation setting. Our results allow for a comparison of model uncertainty associated with the two models, regardless of the choice of risk measures.
Although a completely unknown dependence structure is sometimes unrealistic, it is commonly agreed that the dependence structure in a risk model is difficult to accurately specify (e.g., \cite{EPR13} and \cite{BRV17}).
Hence, a comparison of the magnitude of model uncertainty is an important practical issue.
On the other hand,  the general conclusions remain valid even if the marginal distributions are not completely specific (see the discussion  in Section \ref{sec:conc} on the presence of marginal uncertainty), and thus the assumption of known marginal distributions  in our study is not harmful.

Second, our results provide an analytical way to establish   inequalities on the worst-case risk measures in the form $\overline{\rho}(\mathbf F)\le \overline{\rho}(\mathbf G)$. Sometimes the worst-case risk measure is difficult to calculate for $\mathbf F$, but it may be easier to calculate for $\mathbf G$. For instance,  formulas on worst-case VaR are available for some homogeneous marginal distributions in \cite{WPY13} and \cite{PR13},  but explicit results on heterogeneous marginal distributions are limited (see \cite{BLLW20} for a recent treatment). Therefore,  we can use the analytical formula $\overline{\rho}(\mathbf G)$, if available, as an upper bound on  $\overline{\rho}(\mathbf F)$, and this leads to interesting applications in other fields; see Section \ref{sec:application} for applications on portfolio diversification and multiple hypothesis testing and Section \ref{sec:jm} for a connection to joint mixability.

Our theoretical contributions are briefly summarized below.
In Sections \ref{sec:2} and \ref{sec:harmo}, we analyze general relations on distribution and quantile mixtures.
%We show  that $\mathcal D_n(\mathbf F) \subset \mathcal D_n(\mathbf G)$, where $\mathbf G$ can be obtained from $\mathbf F$ through two types of simple operations:  distribution mixtures or  quantile mixtures.
The general message of our results is that the more ``homogeneous" the distribution tuple is, the larger its corresponding aggregation set $\mathcal D_n$ is. %where $\mathbf G$ is a distribution mixture of $\mathbf F$ with the mixture being controlled by the doubly stochastic matrices.
In particular, the set inclusion   is  established for any tuples connected by distribution mixtures in Theorem \ref{th:1}; that is, $\mathcal D_n(\mathbf F) \subset \mathcal D_n(\mathbf G)$ if $\mathbf G$ is a distribution mixture of $\mathbf F$.
%The proof of this result relies on  basic properties of doubly stochastic matrices and aggregation sets. %This ordering relationship still holds for $\mathbf G$ being a larger class of distributions given by the closed convex hull of some location shifts of distribution mixtures of $\mathbf F$.  %We also study the orders of the two aggregation sets but $\mathbf G$ is the quantile mixture of $\mathbf F$.
The problem for quantile mixtures is much more challenging. The set inclusion is established for uniform marginals in Proposition \ref{prop:uniform}. For other families of distributions, such a general relationship does not hold, as discussed with some examples.  %Moreover, we also discuss the ordering relationship between sets constructed by convex order; see Theorem \ref{th:1p} and Proposition \ref{prop:qm}. % and build the ordering for for some special cases of $\mathbf F$ and $\mathbf G$, i.e.,  the marginals of $\mathbf F$ are uniform distributions and the marginals of $\mathbf G$ are the arithmetic average of $\mathbf F$. It is tricky to obtain the ordering relationship for other families of distributions. Some related open questions are presented in Section \ref{sec:conc}.

In Section \ref{sec:4}, we obtain inequalities between the worst-case values of some risk measure $\rho$ in risk aggregation with marginals  related by distribution or quantile mixtures.
Although quantile mixtures do not satisfy the relationship
$\mathcal D_n(\mathbf F) \subset \mathcal D_n(\mathbf G)$ in general,
we can prove an order property between $\overline{\rho}(\mathbf G)$ and $\overline{\rho}(\mathbf F)$ for commonly used risk measures.
Most remarkably, in Theorem \ref{th:scale}, we show that under a monotone density assumption, VaR satisfies this order property for a quantile mixture.
Section \ref{sec:bound} is dedicated to the most interesting special case of Pareto risk aggregation, with a special focus on the case of infinite mean.

Numerical results are presented in Section \ref{sec:numerical} to illustrate the  obtained results. In Section \ref{sec:application}, we provide two applications:   portfolio diversification under dependence uncertainty and merging p-values in multiple hypothesis testing. Some further technical discussions on distribution and quantile mixtures are put in Section \ref{sec:jm}. Section \ref{sec:conc} concludes the paper by presenting several open mathematical challenges related to quantile mixtures.  Some  proofs and further properties of Pareto risk aggregation are put in the   Appendix.

\section{Distribution mixtures}
\label{sec:2}

%\subsection{Ordering results}

In this section we put our focus on one of the two operations: distribution mixture. The main objective is to establish some ordering relationships on the set $\mathcal D_n(\mathbf F)$ and $\mathcal D_n(\mathbf G)$ where $\mathbf G$ is a distribution mixture of $\mathbf F$. For greater generality, we investigate a more general $f$-aggregation set $\D_f(\mathbf F)$, where
%We first summarize some simple properties of $\mathcal D_n$ that   will be useful. For these properties, see Theorem 2.1 and Remark 2.2 of \cite{BJW14}.
$f: \R^n \to \R$ is a measurable and symmetric function.\footnote{A function $f$ is symmetric if $f(\mathbf x) = f(\pi (\mathbf x))$ for any $\mathbf  x \in \R^n$ and  $n$-permutation $\pi$.} Similarly to \eqref{eq:aggregation_set}, for $\textbf{F} = (F_1, \cdots, F_n) \in \M^n$, the $f$-aggregation set is defined as
$$
\D_f(\textbf{F}) = \{\text{cdf of } f\(X_1,\dots, X_n\): X_i \sim F_i, ~ i = 1, \cdots, n \}.
$$
It is clear that $\D_n$, defined in \eqref{eq:aggregation_set}, becomes a specific case of $D_f$ if $f$ is a sum function ($f(x_1, \cdots, x_n) = \sum_{j=1}^n x_j$). We first present some properties of the $f$-aggregation set.
\begin{lemma}\label{lem:f}
	For an $n$-symmetric function $f:\R^n \to \R$, $\mathbf F, \mathbf G \in \mathcal M^n$, $\lambda \in[0,1]$ and an $n$-permutation $\pi$, the following hold.
	\begin{enumerate}[(i)]
		\item $\D_f(\mathbf F) = \D_f(\pi(\mathbf F))$.
		\item $\lambda \D_f(\mathbf F) +(1-\lambda) \D_f(\mathbf G) \subset \D_f(\lambda \mathbf F + (1-\lambda)\mathbf G)$. In particular,
		\begin{enumerate}
			\item $\lambda \D_f(\mathbf F) +(1-\lambda) \D_f(\mathbf F) = \D_f( \mathbf F )$.
			\item  $\D_f(\mathbf F)\cap \D_f(\mathbf G) \subset \D_f(\lambda \mathbf F + (1-\lambda)\mathbf G)$.
		\end{enumerate}
	\end{enumerate}
\end{lemma}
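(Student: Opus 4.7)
The plan is to prove (i) by a direct symmetry argument and (ii) by a single randomization construction using the atomless structure of $(\Omega,\mathcal F,\p)$; the sub-claims (a) and (b) then follow as easy corollaries of the main inclusion in (ii).

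For (i), symmetry of $f$ gives $f(x_1,\dots,x_n)=f(x_{\pi(1)},\dots,x_{\pi(n)})$ for every $\mathbf x\in\R^n$, so any realization $(X_1,\dots,X_n)$ with $X_i\sim F_i$ satisfies $f(X_1,\dots,X_n)=f(X_{\pi(1)},\dots,X_{\pi(n)})$, and the reshuffled vector has marginals $\pi(\mathbf F)$. Hence $\D_f(\mathbf F)\subset \D_f(\pi(\mathbf F))$, and the same argument applied to $\pi^{-1}$ yields the reverse inclusion.

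The substantive step is the main inclusion in (ii). Fix $H_1\in\D_f(\mathbf F)$ and $H_2\in\D_f(\mathbf G)$, realized as $f(\mathbf X)\sim H_1$ with $X_i\sim F_i$ and $f(\mathbf Y)\sim H_2$ with $Y_i\sim G_i$. Since $(\Omega,\mathcal F,\p)$ is atomless, we may enlarge the joint realization so that a Bernoulli$(\lambda)$ variable $\theta$ is defined and independent of $(\mathbf X,\mathbf Y)$. Define $Z_i:=\theta X_i+(1-\theta) Y_i$. Because $\theta$ is $\{0,1\}$-valued, each $Z_i$ equals $X_i$ on $\{\theta=1\}$ and $Y_i$ on $\{\theta=0\}$, and independence of $\theta$ from $(\mathbf X,\mathbf Y)$ gives $Z_i\sim\lambda F_i+(1-\lambda)G_i$. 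The same $0$--$1$ property yields the pointwise identity $f(Z_1,\dots,Z_n)=\theta f(\mathbf X)+(1-\theta) f(\mathbf Y)$, whose cdf is $\lambda H_1+(1-\lambda)H_2$ by conditioning on $\theta$. This places $\lambda H_1+(1-\lambda) H_2$ inside $\D_f(\lambda\mathbf F+(1-\lambda)\mathbf G)$, proving the general inclusion.

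For (a), setting $\mathbf G=\mathbf F$ in (ii) gives $\lambda\D_f(\mathbf F)+(1-\lambda)\D_f(\mathbf F)\subset\D_f(\mathbf F)$, while the reverse inclusion is immediate from writing $H=\lambda H+(1-\lambda) H$ for any $H\in\D_f(\mathbf F)$. For (b), if $H\in\D_f(\mathbf F)\cap\D_f(\mathbf G)$, the same trivial identity exhibits $H$ as an element of $\lambda\D_f(\mathbf F)+(1-\lambda)\D_f(\mathbf G)$, which lies in $\D_f(\lambda\mathbf F+(1-\lambda)\mathbf G)$ by the main inclusion. No step looks like a genuine obstacle; the only subtlety is ensuring that the randomizer $\theta$ can be accommodated alongside $(\mathbf X,\mathbf Y)$, which is exactly what atomlessness of $(\Omega,\mathcal F,\p)$ guarantees.
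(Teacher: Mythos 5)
Your proof is correct and takes essentially the same route as the paper: symmetry of $f$ for (i), and for (ii) the same randomization via an independent Bernoulli$(\lambda)$ indicator (the paper's event $A$ with $\p(A)=\lambda$), together with the key pointwise identity $f(\theta X_1+(1-\theta)Y_1,\dots,\theta X_n+(1-\theta)Y_n)=\theta f(\mathbf X)+(1-\theta)f(\mathbf Y)$. Your explicit derivations of (a) and (b) from the main inclusion are also exactly how the paper intends them to follow.
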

\begin{proof}
	%The following proof is motivated from Theorem 2.1 and Remark 2.2 of \cite{BJW14}.
	(i) holds because of the symmetry of $f$. To prove (ii), for any $H \in \lambda \D_f(\mathbf F) +(1-\lambda) \D_f(\mathbf G)$, there exist $X_1 \sim F_1, \cdots, X_n \sim F_n$, $Y_1\sim G_1, \cdots, Y_n \sim G_n$ and an event $A \in \mathcal F$ independent of $X_1,\cdots, X_n, Y_1,\cdots, Y_n$ such that $\p(A) = \lambda$ and $\(f(X_1, \cdots, X_n) \id_A + f(Y_1, \cdots, Y_n) \id_{A^c}\) \sim H$.
We notice that  \begin{align*} f(X_1, \cdots, X_n) \id_A + f(Y_1, \cdots, Y_n) \id_{A^c}=f\(X_1 \id_A + Y_1 \id_{A^c}, \cdots, X_n \id_A + Y_n \id_{A^c}\),\end{align*}
and $(X_i \id_A + Y_i \id_{A^c}) \sim \lambda F_i + (1-\lambda) G_i$ for any $i = 1, \cdots, n$. Thus we have
 $H\in \D_f(\lambda \mathbf F + (1-\lambda)\mathbf G)$.  This completes the proof of (ii).
\end{proof}

We briefly fix some notation and convention. Let $\Delta_n$ be the standard simplex given by $\Delta_n=\{(\lambda_1,\dots,\lambda_n)\in [0,1]^n:  \sum_{i=1}^n \lambda_i=1\}$. Recall that a doubly stochastic matrix  is a square matrix of nonnegative real numbers, each of whose rows and columns sums to 1 (i.e.~each row or column is in $\Delta_n$).
Denote by $\mathcal Q_n$ the set of $n\times n$ doubly stochastic matrices.
All vectors should be treated as column vectors.
For   $\boldsymbol \lambda=(\lambda_1,\dots,\lambda_n)\in \Delta_n$ and $\mathbf F =(F_1,\dots,F_n)\in \mathcal M^n$, their dot product is
$\boldsymbol\lambda \cdot \mathbf F =  \sum_{i=1}^n  \lambda_iF_i\in \mathcal M$.
For a matrix $\Lambda=(\boldsymbol \lambda_1,\dots,\boldsymbol \lambda_n)^\top\in\mathcal Q_n $  and $\mathbf F  \in \mathcal M^n$, their product is
$\Lambda \mathbf F =  ( \boldsymbol \lambda_1 \cdot \mathbf F,\dots,\boldsymbol\lambda_n \cdot \mathbf F)\in \mathcal M^n$.

The vector $\Lambda \mathbf F$ is a distribution mixture of $\mathbf F$,  and we will call it the \emph{$\Lambda$-mixture} of $\mathbf F$ to emphasize the reliance on $\Lambda$.
%Without further specification, we always write $\mathbf F =(F_1,\dots,F_n)$ and similarly for $\boldsymbol \lambda$.
%An order on $\mathcal D_n(\mathbf F)$ is induced via doubly stochastic matrices.
Indeed,
 $\Lambda \mathbf F $ can be seen as a vector of weighted averages of $\mathbf F$.
In particular, by choosing $\Lambda=(\frac{1}{n})_{n\times n}$ (here $ (x )_{n \times n}$ means an $ n \times n$ matrix with identical number $x \in \R$),
 we get the vector $(F,\dots,F)$ where $F$ is the average of components of $\mathbf F$. Note that if $\mathbf F\in \M_1^n$, then the mean of any element of $\D_n(\mathbf F)$
 is the same as that of $\D_n(\Lambda  \mathbf F)$.

The first   result below suggests that the   set of aggregation for a tuple of distributions is smaller than that for the weighted averages.
The proof is elementary, but the result allows us to observe the important phenomenon that \emph{more homogeneous marginals lead to a larger aggregation set}.

% with the vector $(F,\dots,F)$ in Proposition \ref{prop:1} as a special case by choosing $\Lambda=(\frac{1}{n})_{n\times n}$.

\begin{theorem}\label{th:1}
For an $n$-symmetric function $f:\R^n \to \R$, $\mathbf F \in \mathcal M^n$ and  $\Lambda\in\mathcal Q_n$,
$\D_f(\mathbf F) \subset \D_f(\Lambda \mathbf F)$. In particular, $\D_n(\mathbf F)\subset \D_n(\Lambda \mathbf F)$.
\end{theorem}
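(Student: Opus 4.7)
The plan is to combine the Birkhoff--von Neumann theorem with Lemma~\ref{lem:f} to reduce the statement about a general doubly stochastic matrix $\Lambda$ to a statement about permutations, which Lemma~\ref{lem:f}(i) already handles.

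First I would invoke Birkhoff--von Neumann to write $\Lambda = \sum_{k=1}^K \alpha_k P_k$ as a convex combination of $n \times n$ permutation matrices, with $\alpha_k \ge 0$ and $\sum_k \alpha_k = 1$. Each $P_k \mathbf F$ is then a permutation $\pi_k(\mathbf F)$ of $\mathbf F$, so Lemma~\ref{lem:f}(i) gives $\D_f(P_k \mathbf F) = \D_f(\mathbf F)$ for every $k$.

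Next I would extend Lemma~\ref{lem:f}(ii) from two summands to $K$ summands by a routine induction, obtaining
\begin{equation*}
\sum_{k=1}^K \alpha_k \D_f(\mathbf H_k) \subset \D_f\!\left(\sum_{k=1}^K \alpha_k \mathbf H_k\right)
\end{equation*}
for any $\mathbf H_1,\dots,\mathbf H_K \in \M^n$ and $(\alpha_1,\dots,\alpha_K) \in \Delta_K$. Applied to $\mathbf H_k = P_k \mathbf F$, the right-hand side is exactly $\D_f(\Lambda \mathbf F)$.

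Now, given any $H \in \D_f(\mathbf F)$, I would use Lemma~\ref{lem:f}(ii)(a) (inductively, or directly from the trivial identity $H = \sum_k \alpha_k H$) to conclude
\begin{equation*}
H \;=\; \sum_{k=1}^K \alpha_k H \;\in\; \sum_{k=1}^K \alpha_k \D_f(\mathbf F) \;=\; \sum_{k=1}^K \alpha_k \D_f(P_k \mathbf F) \;\subset\; \D_f(\Lambda \mathbf F),
\end{equation*}
which yields the claimed inclusion, and the special case for $\D_n$ follows by taking $f(x_1,\dots,x_n) = \sum_i x_i$. The only mildly subtle point is the $K$-term extension of Lemma~\ref{lem:f}(ii); once phrased as an induction on $K$ using a standard convex-combination regrouping it is immediate, so I expect no genuine obstacle.
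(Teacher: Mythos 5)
Your proposal is correct and follows essentially the same route as the paper's proof: Birkhoff's decomposition of $\Lambda$ into permutation matrices, Lemma \ref{lem:f}(i) to identify $\D_f(\Pi_k\mathbf F)$ with $\D_f(\mathbf F)$, and the $K$-term extension of Lemma \ref{lem:f}(ii) (the paper phrases this via the intersection form (ii-b), which is equivalent to your use of the trivial identity $H=\sum_k\alpha_k H$). No gaps.
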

\begin{proof}
%Now we proceed to prove the theorem.
Let $\Pi_1,\dots,\Pi_{n!}$  be all different $n$-permutation matrices, i.e. $\Pi_k \mathbf F$ is a permutation of $\mathbf F$.
By Birkhoff's Theorem (Theorem 2.A.2 of \cite{MOA11}),
the set $\mathcal Q_n$ of
doubly stochastic matrices is the convex hull of permutation matrices, that is,
for any $\Lambda \in \mathcal Q_n$,  there exists  $(\lambda_1,\dots,\lambda_{n!})\in \Delta_{n!}$, such that
$$
\Lambda= \sum_{k=1}^{n!}\lambda_k \Pi_k.
$$
Note that  $\mathcal D_f(\mathbf F)= \mathcal D_f(\Pi_k \mathbf F)$ for $k=1,\dots,n!$ by Lemma \ref{lem:f}(i).
Further, by Lemma \ref{lem:f}(ii-b), we have,
$$ \mathcal D_f(\mathbf F)=\bigcap_{k=1}^{n!} \mathcal D_f(\Pi_k \mathbf F)\subset \mathcal D_f\left(\sum_{k=1}^{n!}\lambda_k\Pi_k(\mathbf F)\right) = \mathcal D_f(\Lambda \mathbf F).$$
This completes the theorem.
\end{proof}

As  the sum aggregation is the most common in financial applications, we  will mainly discuss $\D_n$ instead of $\D_f$ in the following context, while keeping in mind that most results on $\D_n$ can be extended naturally to $\D_f$.

\begin{corollary}\label{coro:0}
For $\mathbf F=(F_1,\dots,F_n) \in \mathcal M^n$ and  $\Lambda\in\mathcal Q_n$,
$\D_n(\Lambda \mathbf F)\subset \D_n(F,\dots,F)$ where  $F=\frac{1}n \sum_{i=1}^n F_i$.
 \end{corollary}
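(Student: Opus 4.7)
The plan is to realize $(F,\dots,F)$ as a doubly-stochastic mixture of $\Lambda\mathbf F$ and then invoke Theorem \ref{th:1} directly.

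\smallskip

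\textbf{Step 1: Identify the relevant doubly stochastic matrix.} Let $J=\bigl(\tfrac{1}{n}\bigr)_{n\times n}$, the matrix all of whose entries equal $1/n$. Each row and each column sums to $1$, so $J\in\mathcal Q_n$.

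\smallskip

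\textbf{Step 2: Compute $J(\Lambda\mathbf F)$.} Write $\Lambda=(\boldsymbol\lambda_1,\dots,\boldsymbol\lambda_n)^\top$ so that the $i$th component of $\Lambda\mathbf F$ is $\boldsymbol\lambda_i\cdot\mathbf F$. By definition of the matrix--tuple product in the paper, each component of $J(\Lambda\mathbf F)$ equals
\[
\frac{1}{n}\sum_{i=1}^n \boldsymbol\lambda_i\cdot\mathbf F=\Bigl(\tfrac{1}{n}\textstyle\sum_{i=1}^n\boldsymbol\lambda_i\Bigr)\cdot\mathbf F.
\]
Since $\Lambda$ is doubly stochastic, the column sums of $\Lambda$ are all $1$, so $\sum_{i=1}^n\boldsymbol\lambda_i=(1,\dots,1)$ and hence $\tfrac{1}{n}\sum_{i=1}^n\boldsymbol\lambda_i=\bigl(\tfrac{1}{n},\dots,\tfrac{1}{n}\bigr)$. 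Consequently every component of $J(\Lambda\mathbf F)$ equals $\tfrac{1}{n}\sum_{j=1}^n F_j=F$, giving $J(\Lambda\mathbf F)=(F,\dots,F)$.

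\smallskip

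\textbf{Step 3: Apply Theorem \ref{th:1}.} Viewing $\Lambda\mathbf F\in\mathcal M^n$ as the input tuple and $J\in\mathcal Q_n$ as the mixing matrix, Theorem \ref{th:1} yields
\[
\D_n(\Lambda\mathbf F)\subset\D_n(J(\Lambda\mathbf F))=\D_n(F,\dots,F),
\]
which is exactly the claimed inclusion.

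\smallskip

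There is essentially no obstacle here: the only thing to check carefully is that the column-sum condition on $\Lambda$ (not just the row-sum condition) is what makes the averaged row vector equal to $(1/n,\dots,1/n)$, so that the averaging via $J$ on the $\Lambda$-mixture collapses to the arithmetic mean $F$. After that, the result is a one-line corollary of Theorem \ref{th:1} applied to $\Lambda\mathbf F$.
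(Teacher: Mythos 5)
Your proof is correct and is exactly the derivation the paper intends: apply Theorem \ref{th:1} to the tuple $\Lambda\mathbf F$ with the mixing matrix $J=(\tfrac{1}{n})_{n\times n}$, using the column-sum (double-stochasticity) condition to see that $J(\Lambda\mathbf F)=(F,\dots,F)$. No issues.
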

By taking $\Lambda$ as the identity in Corollary \ref{coro:0}, we obtain the set inclusion $\D_n(  \mathbf F)\subset \D_n(F,\dots,F)$, which was given in Theorem 3.5 of \cite{BJW14} to find the bounds on $\VaR$ for heterogeneous marginal distributions.
% This relationship can be directly obtained from Theorem \ref{th:1}.

 The doubly stochastic matrices are closely related to majorization order.
% \begin{definition}[Majorization order]
 For $\boldsymbol \lambda,\boldsymbol \gamma \in \R^n$,
we say that $\boldsymbol \lambda$ dominates  $\boldsymbol \gamma$  in  \emph{majorization order}, denoted by $ \boldsymbol  \gamma\prec \boldsymbol \lambda$,
if $ \sum_{i=1}^n \phi(\gamma_i)\le \sum_{i=1}^n \phi(\lambda_{i})  $ for all continuous convex functions $\phi$.
 %\end{definition}
 There are several equivalent conditions for this order; see Section 1.A.3 of \cite{MOA11}.
One equivalent condition that is relevant to Theorem \ref{th:1} is that $ \boldsymbol  \gamma\prec \boldsymbol \lambda$ if and only if there exists $\Lambda\in \mathcal Q_n$ such that
$ \boldsymbol  \gamma =\Lambda\boldsymbol \lambda $. We can similarly define majorization order between $\mathbf F,\mathbf G\in \mathcal M^n$,
denoted by $ \mathbf G\prec \mathbf F$,
if $\mathbf G= \Lambda \mathbf F$ for some $\Lambda \in \mathcal Q_n$.
Then, we have the following corollary.

\begin{corollary}\label{coro:1}
For $\mathbf F,\mathbf G\in \mathcal M^n$, if $ \mathbf G\prec \mathbf F$, then
$\D_n(\mathbf F)\subset \D_n(\mathbf G)$.
 \end{corollary}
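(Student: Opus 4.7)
The plan is to deduce this corollary directly from Theorem \ref{th:1}, since the definition of majorization order on $\mathcal M^n$ is tailored precisely to match the hypothesis of that theorem.

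First I would unpack the definition: the statement $\mathbf G \prec \mathbf F$ means, by definition, that there exists a doubly stochastic matrix $\Lambda \in \mathcal Q_n$ such that $\mathbf G = \Lambda \mathbf F$. This is the only nontrivial content to extract from the hypothesis; it immediately rewrites the claim $\D_n(\mathbf F) \subset \D_n(\mathbf G)$ as $\D_n(\mathbf F) \subset \D_n(\Lambda \mathbf F)$.

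Next, I would invoke Theorem \ref{th:1} with the sum function $f(x_1,\dots,x_n) = \sum_{i=1}^n x_i$, which is $n$-symmetric, and with this particular $\Lambda$. Theorem \ref{th:1} then yields $\D_n(\mathbf F) \subset \D_n(\Lambda \mathbf F) = \D_n(\mathbf G)$, which is exactly the desired conclusion.

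Since the proof is essentially a one-line chain of implications, there is no real obstacle; the whole content of the corollary is already contained in Theorem \ref{th:1} together with the characterization of majorization via doubly stochastic matrices (which is cited from Section 1.A.3 of \cite{MOA11} in the paragraph just above the corollary). The only point worth making explicit for the reader is that the vector-valued majorization order $\mathbf G \prec \mathbf F$ used here is defined componentwise by replacing real-valued entries with cdfs in the matrix equation $\boldsymbol\gamma = \Lambda \boldsymbol\lambda$, and this is precisely the form in which Theorem \ref{th:1} can be applied without any further work.
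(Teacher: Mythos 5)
Your proposal is correct and is exactly the argument the paper intends: since the paper defines $\mathbf G\prec \mathbf F$ for tuples in $\mathcal M^n$ to mean $\mathbf G=\Lambda\mathbf F$ for some $\Lambda\in\mathcal Q_n$, the corollary is an immediate restatement of Theorem \ref{th:1} (with the sum function), and the paper accordingly gives no separate proof. Nothing is missing.
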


  \begin{example}[Bernoulli distributions]\label{ex:trivial1}
We apply Theorem \ref{th:1} to Bernoulli distributions.  Let $B_p$ be a Bernoulli cdf with (mean) parameter  $p\in [0,1]$.
Note that a mixture of Bernoulli distributions is still Bernoulli,
and more precisely, for $\mathbf p=(p_1,\dots,p_n)\in [0,1]^n$ and $\mathbf q=(q_1,\dots,q_n)=\Lambda \mathbf p$, we have $\Lambda(B_{p_1},\dots,B_{p_n}) = (B_{q_1},\dots,B_{q_n})$.
Therefore, by Theorem \ref{th:1},
for any $\mathbf p,\mathbf q\in [0,1]^n$ with  $  \mathbf q \prec   \mathbf p$,  we have
$\D_n(B_{p_1},\dots,B_{p_n})\subset \D_n(B_{q_1},\dots,B_{q_n})$.
This result will be used later to discuss joint mixability (see Section \ref{sec:jm}) of Bernoulli distributions.
For instance, we can set $\mathbf p =(0.2,0.8)$, $$\Lambda=\left(\begin{array}{cc}
\frac{1}{4}&\frac{3}{4}\\
\frac{3}{4}&\frac{1}{4}
\end{array}
\right),~\text{and}~\mathbf q= \left(\begin{array}{cc}
\frac{1}{4}&\frac{3}{4}\\
\frac{3}{4}&\frac{1}{4}
\end{array}
\right)(0.2,0.8)=(0.65,0.35).$$
Note that
$\Lambda(B_{0.2}, B_{0.8})=(B_{0.65},B_{0.35}).$ Hence $\D_n(B_{0.2},B_{0.8})\subset \D_n(B_{0.65},B_{0.35}).$
 \end{example}
 %\subsection{Lower sets with respect to  convex order}

Next, we discuss how $\Lambda$-mixtures affect the lower sets with respect to convex order. A distribution $F\in \mathcal M_1$ is called smaller than a distribution $G\in \mathcal M_1$  in \emph{convex order}, denoted by $F\lcx G$, if
\begin{equation}\label{eq:convex_order}
  \int \phi \,\d F \le \int \phi \,\d G ~~{\rm for~all~convex}~~\phi:\R\to\R,
\end{equation}
provided that both integrals exist (finite or infinite); see \cite{MS02} and \cite{SS07} for an overview on convex order and the related notion of second-order stochastic dominance.
For a given distribution $F\in \mathcal M_1$, denote by $\C(F)$ the set of all distributions in $\mathcal M_1$ dominated by $F$ in convex order, that is,
$$
\C(F)=   \{G\in \mathcal M_1:G\lcx F\}.%\label{cndef}
$$
For any distributions $F$ and $G$, we denote by $F\oplus G$ the distribution with quantile function $F^{-1}+G^{-1}$.\footnote{In other words, $F\oplus G$ is the distribution of the sum of  two comonotonic random variables with respective distributions $F$ and $G$. Two random variables $X$ and $Y$ are said to be \emph{comonotonic}, if there exists a random variable $U$ and two increasing functions $f,g$ such that $X=f(U)$ and $Y=g(U)$ almost surely. Such $U$ can be chosen as a standard uniform random variable ($U \sim \mathrm{U}[0,1]$), and $f$ and $g$ can be chosen as the inverse distribution functions of $X$ and $Y$, respectively.}
%Note that it is straightforward to check whether a distribution $G$ is in $\C(F)$.
Moreover, define
$$\C(F_1,\dots,F_n)= \C(F_1\oplus\dots\oplus F_n).$$
The following lemmas give a simple link between the sets $\mathcal D_n$ and $\mathcal C$; see e.g., Lemma 1 of \cite{MWW19}.
\begin{lemma}\label{lm:lm1}
For $\mathbf F \in \mathcal M_1^n$, $\D_n(\mathbf F)\subset \C(\mathbf F)$.
 \end{lemma}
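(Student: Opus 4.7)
The plan is to fix an arbitrary $F \in \D_n(\mathbf F)$, write $F$ as the law of $S = X_1 + \cdots + X_n$ with $X_i \sim F_i$, and compare $S$ with the comonotonic sum $S^c := F_1^{-1}(U) + \cdots + F_n^{-1}(U)$, where $U \sim \U[0,1]$. Since $S^c$ has quantile function $\sum_{i=1}^n F_i^{-1}$, its law is exactly $F_1 \oplus \cdots \oplus F_n$, so the desired conclusion $F \in \C(\mathbf F)$ reduces to the convex-order inequality $S \lcx S^c$.

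Both $S$ and $S^c$ share the finite mean $\sum_{i=1}^n \E[X_i]$, so by the standard equivalence between convex and stop-loss order for integrable variables with equal means (e.g., Theorem 3.A.1 of \cite{SS07}), it suffices to verify $\E[(S-t)_+] \le \E[(S^c-t)_+]$ for every $t \in \R$. Equivalently, I would prove $\E[\phi(S)] \le \E[\phi(S^c)]$ for every convex $\phi$ for which both sides are defined.

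The key structural step is to observe that for convex $\phi$ the function $\psi(x_1,\dots,x_n) = \phi(x_1+\cdots+x_n)$ is supermodular on $\R^n$. Indeed, for any $\mathbf x,\mathbf y\in\R^n$, the identity $\sum_i (x_i\vee y_i) + \sum_i(x_i\wedge y_i) = \sum_i x_i + \sum_i y_i$ together with $\sum_i(x_i\vee y_i)\ge \max\{\sum_i x_i,\sum_i y_i\}$ shows that the pair $\{\sum_i (x_i\vee y_i),\sum_i (x_i\wedge y_i)\}$ dominates $\{\sum_i x_i,\sum_i y_i\}$ in majorization order, so convexity of $\phi$ yields $\psi(\mathbf x\vee\mathbf y)+\psi(\mathbf x\wedge\mathbf y)\ge \psi(\mathbf x)+\psi(\mathbf y)$. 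Invoking Tchen's classical theorem that the comonotonic coupling maximizes $\E[\psi(X_1,\dots,X_n)]$ over all couplings with prescribed marginals for any supermodular $\psi$, applied to $\phi(s) = (s-t)_+$, delivers the stop-loss inequality above and hence $S \lcx S^c$.

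The main obstacle is justifying the Tchen-type supermodular maximization for general $n$. For $n=2$ it follows immediately from Hoeffding's covariance-type identity and the Fr\'echet--Hoeffding upper bound on joint cdfs; for arbitrary $n$ it can be obtained by iteratively replacing pairs of coordinates with their comonotonic rearrangement (each step increasing the expectation of $\psi$ by the bivariate case), or via a Strassen/Kellerer-type coupling. Since this is a standard fact in the risk-aggregation and stochastic-ordering literature, and in fact the excerpt points to Lemma 1 of \cite{MWW19}, I would cite it rather than reprove it.
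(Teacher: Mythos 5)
Your proof is correct. Note that the paper does not actually prove this lemma itself --- it is stated as a known fact with a pointer to Lemma 1 of \cite{MWW19} --- so what you have written is a legitimate self-contained derivation of the cited result. Your route (reduce to $S\lcx S^c$ for the comonotonic sum $S^c$ with law $F_1\oplus\cdots\oplus F_n$, pass to stop-loss order via equality of means, observe that $\mathbf x\mapsto \phi(x_1+\cdots+x_n)$ is supermodular for convex $\phi$ by the majorization identity, and invoke Tchen's theorem that the comonotonic coupling maximizes expectations of supermodular functions) is the classical supermodular-order argument and every step checks out; in particular the integrability needed for $\C(\mathbf F)\subset\M_1$ is supplied by $\mathbf F\in\M_1^n$. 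The only remark worth making is that the appeal to Tchen for general $n$ can be avoided entirely by the more elementary Meilijson--N\'ad\'as device: for fixed $t$ pick $u$ with $\sum_i F_i^{-1}(u)=t$ (or an appropriate splitting when no such $u$ exists) and set $d_i=F_i^{-1}(u)$; then $(S-t)_+ = (\sum_i (X_i-d_i))_+ \le \sum_i (X_i-d_i)_+$ pointwise, and taking expectations gives $\E[(S-t)_+]\le \sum_i \E[(X_i-d_i)_+]=\E[(S^c-t)_+]$, the last equality holding because for the comonotonic vector the events $\{F_i^{-1}(U)>d_i\}$ are nested in the single event $\{U>u\}$. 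That version needs only the bivariate Fr\'echet--Hoeffding inequality nowhere at all, whereas your iterated-rearrangement justification of Tchen is heavier machinery for the same payoff; either way the lemma is established.
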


Similarly to the set $ \D_n(\mathbf F)$ in Theorem \ref{th:1}, $\C(\mathbf F)$ also satisfies an order with respect to $\Lambda$-mixture.
 \begin{theorem}\label{th:1p}
For $\mathbf F \in \mathcal M_1^n$ and  $\Lambda\in\mathcal Q_n$,
we have $ \C(\mathbf F) \subset \C(\Lambda \mathbf F)   $.
 \end{theorem}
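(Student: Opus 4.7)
The plan is to reduce the set inclusion $\mathcal C(\mathbf F) \subset \mathcal C(\Lambda \mathbf F)$ to a single convex-order comparison between comonotonic sums. By the definition of $\mathcal C(\cdot)$ and transitivity of $\lcx$, it suffices to prove $F_1 \oplus \dots \oplus F_n \lcx G_1 \oplus \dots \oplus G_n$, where I write $\mathbf G = \Lambda \mathbf F$. Since the quantile function of a comonotonic sum is the sum of the component quantile functions, this reduces, with $U \sim \mathrm U[0,1]$, to
$$\sum_{j=1}^n F_j^{-1}(U) \lcx \sum_{i=1}^n G_i^{-1}(U).$$

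I would then assemble this inequality in two short steps. The first is a row-wise Jensen bound: fix $i$, note that $(\Lambda_{i1},\dots,\Lambda_{in})$ is a probability vector and $G_i = \sum_j \Lambda_{ij} F_j$, and apply Jensen pointwise in $U$ to an arbitrary convex $\phi$. Integrating gives $\E[\phi(\sum_j \Lambda_{ij} F_j^{-1}(U))] \le \sum_j \Lambda_{ij} \int \phi \, \mathrm d F_j = \int \phi \, \mathrm d G_i = \E[\phi(G_i^{-1}(U))]$, which is exactly $\sum_j \Lambda_{ij} F_j^{-1}(U) \lcx G_i^{-1}(U)$ for each $i$.

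The second step is comonotonic aggregation. Both $\{\sum_j \Lambda_{ij} F_j^{-1}(U)\}_{i=1}^n$ and $\{G_i^{-1}(U)\}_{i=1}^n$ are families of increasing functions of the single uniform $U$, so each family is jointly comonotonic; and convex order is preserved under sums of jointly comonotonic random variables, a standard consequence of the stop-loss characterization of $\lcx$ together with the additivity of quantile functions under comonotonic aggregation (see e.g.\ \cite{SS07}). Summing the $n$ row-wise inequalities and using the column-sum identity $\sum_i \Lambda_{ij} = 1$ built into $\Lambda \in \mathcal Q_n$ collapses the doubly-indexed left-hand side to $\sum_j F_j^{-1}(U)$ and yields the desired comparison. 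The only non-trivial input is the comonotonic-additivity of convex order; everything else is elementary Jensen plus bookkeeping on $\Lambda$, so I do not anticipate a real obstacle. A mild alternative is to decompose $\Lambda = \sum_k \mu_k \Pi_k$ via Birkhoff as in the proof of Theorem \ref{th:1} and exploit permutation-invariance of $\oplus$, but the route above is more direct.
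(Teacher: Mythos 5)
Your proof is correct, but it takes a genuinely different route from the paper's. The paper obtains the inclusion as an almost immediate corollary of results already in hand: the comonotonic sum $F_1\oplus\cdots\oplus F_n$ belongs to $\D_n(\mathbf F)$, so Theorem \ref{th:1} (which rests on Birkhoff's decomposition of $\Lambda$) together with Lemma \ref{lm:lm1} gives $F_1\oplus\cdots\oplus F_n\in\D_n(\Lambda\mathbf F)\subset\C(\Lambda\mathbf F)$, and transitivity of $\lcx$ finishes the argument. You instead prove directly the stronger, more explicit statement $F_1\oplus\cdots\oplus F_n\lcx G_1\oplus\cdots\oplus G_n$ with $\mathbf G=\Lambda\mathbf F$, via a row-wise Jensen inequality (each quantile mixture $\sum_j\Lambda_{ij}F_j^{-1}(U)$ is dominated in convex order by the corresponding distribution mixture $G_i$) followed by comonotonic aggregation and the column-sum constraint on $\Lambda$. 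Both arguments are valid. The paper's is shorter given the machinery of Section \ref{sec:2}; yours is self-contained (no Birkhoff, no appeal to $\D_n$ at all), exhibits the ordering of the $\lcx$-maximal elements explicitly, and uses essentially the same mechanism the paper deploys later in Proposition \ref{prop:trivial5}(ii) and Proposition \ref{prop:compare}. One small point of hygiene: the cleanest way to justify ``convex order is preserved under comonotonic sums'' is through $\ES_p$ --- comonotonic additivity plus the characterization of $\lcx$ by equal means and ordered $\ES_p$ (Theorem 3.A.5 of \cite{SS07}) --- where the equal-means requirement is delivered automatically by your Jensen step; your phrasing via the stop-loss characterization is acceptable but slightly loose on this point.
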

 \begin{proof}
Note that
 $F_1\oplus \dots \oplus F_n\in \D_n(\mathbf F)$ since
  $F_1\oplus \dots \oplus F_n$ corresponds to the sum of comonotonic random variables with respective distributions $F_1,\dots,F_n$.
 Using Theorem \ref{th:1} and Lemma \ref{lm:lm1},
 we have $\D_n(\mathbf F) \subset \D_n(\Lambda \mathbf F) \subset \C(\Lambda \mathbf F)$. This implies $F_1\oplus \dots \oplus F_n\in \C(\Lambda \mathbf F)$.
 By definition, $ \C(\mathbf F) \subset \C(\Lambda \mathbf F)$.
  \end{proof}
% A direct proof of Theorem \ref{th:1p} based on arguments of stochastic order without involving $\D_n$  is put in Appendix \ref{app:a}.

\section{Quantile mixtures}\label{sec:harmo}

%\subsection{Ordering results}
% \subsection{General results}
In Section \ref{sec:2}, we have seen  a set inclusion  between $\D_n(\mathbf F)$ and $\D_n(\mathbf G)$
 where $\mathbf G$ is a distribution mixture of $\mathbf F$.
 The general message from Theorem \ref{th:1} is that distribution mixtures   enlarge the aggregation sets.
 %In particular, $\mathcal D_n(\mathbf F)$ is smaller than $\mathcal D_n(\mathbf G)$ where $\mathbf G$ is any $\Lambda$-mixture of $\mathbf F$.
As distribution mixture corresponds to the arithmetic average of distribution functions,
 it would then be of interest to see whether a ``harmonic average" of $F_1,\dots,F_n$ would give similar properties.
 By saying ``harmonic average" of $F_1,\dots,F_n$, we mean the distribution $F$ with $F^{-1}=\frac{1}n \sum_{i=1}^n F^{-1}_i$, i.e., the average of quantiles.
We shall call this type of average as \emph{quantile mixture}.

 In many statistical applications, marginal distributions of a multi-dimensional object are modelled  in the same location-scale family (such as Gaussian, elliptical, or uniform family). The
quantile mixture of such distributions is still in the same family, whereas the distribution mixture is typically no longer in the family.
Moreover, a quantile mixture also corresponds to the combination of comonotonic random variables (such as combining an asset price with a call option on it), and hence finds its natural position in finance.
As such, it is  rather important and practical to consider    quantile mixtures.
\begin{remark} The two types of mixtures are both basic operations on distributions  and often lead to qualitatively very different mathematical results.
As a famous example in decision theory,
the axiom of linearity on distribution mixtures leads to the classic von Neumann-Morgenstern expected utility theory,
whereas  the axiom of linearity on quantile mixtures  leads to the   dual utility theory of \cite{Y87}.

\end{remark}
 For a matrix $\Lambda $ of non-negative elements (not necessarily in $\mathcal Q_n$) and $\mathbf F \in \mathcal M^n$,
let $\Lambda \otimes \mathbf F$
be a vector of distributions $\mathbf G$ such that componentwise, $\mathbf G^{-1}$ is  equal to $ \Lambda \mathbf F^{-1}$.
If $\Lambda\in \mathcal Q_n$, we call $\mathbf G=\Lambda \otimes \mathbf F$  the \emph{$\Lambda$-quantile mixture} of $\mathbf F$.
If $\mathbf F\in \M_1^n$, then the mean of any element of $\D_n(\mathbf F)$
 is the same as that of $\D_n(\Lambda \otimes \mathbf F)$,
similarly to the case of distribution mixture.
This suggests that one may compare $\D_n(\mathbf F)$
with $\D_n(\Lambda \otimes \mathbf F)$, just like what we did in Section \ref{sec:2} for distribution mixture.

The first natural candidates for us to look at are
 $\D_n(F_1,\dots,F_n)$ and $\D_n(F,\dots,F)$ where $F^{-1}=\frac{1}n \sum_{i=1}^n F^{-1}_i$, thus the quantile version of Corollary \ref{coro:0}.  Unfortunately, the sets $\D_n(F_1,\dots,F_n)$ and $\D_n(F,\dots,F)$ are not necessarily comparable, as seen from the following example.
\begin{example}\label{ex:1}
Take $F_1$ as a binary uniform  distribution (with probability $1/2$ at each point) on $\{0,1\}$ and $F_2$ as a binary uniform distribution on $\{0,3\}$.
Clearly, $F$ is a binary uniform distribution on $\{0,2\}$.
$\D_2(F_1,F_2)$ contains distributions supported on $\{0,1,3,4\}$ and $\D_2(F,F)$ contains distributions supported on $\{0,2,4\}$.
Therefore, these two sets do not have a relation of set inclusion.
\end{example}

On the other hand, as a trivial example, if $F_2,\dots,F_n$ are point masses (without loss of generality, we assume that they are point masses at 0),
then $F$ satisfies $F^{-1}=F_1^{-1}/n$.
In this case,  $\D_n(F_1,\dots,F_n)=\{F_1\}\subset \D_n(F,\dots,F)$ holds trivially.
Therefore, we can expect that the inclusion $\D_n(F_1,\dots,F_n)\subset \D_n(F,\dots,F)$ may hold under some special settings.

%Generally, we   define a non-homogeneous averaging method.

Below, we note that
 both $\D_n(\mathbf F)$  and $\D_n(\Lambda \otimes \mathbf  F)$   have the same convex-order maximal element. This is in sharp contrast to the case of mixtures in Theorem \ref{th:1p}.
Proposition \ref{prop:qm} can be   verified directly by definition.
\begin{proposition}\label{prop:qm}
For $\mathbf F \in \mathcal M_1^n$ and  $\Lambda\in\mathcal Q_n$,  we have $\C(\mathbf F)=\C(\Lambda \otimes \mathbf  F)$.
\end{proposition}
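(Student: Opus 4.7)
The plan is to unwind both sides through the definition $\C(\mathbf F)=\C(F_1\oplus\cdots\oplus F_n)$ and reduce the equality $\C(\mathbf F)=\C(\Lambda\otimes\mathbf F)$ to the single assertion that the comonotonic sums $F_1\oplus\cdots\oplus F_n$ and $G_1\oplus\cdots\oplus G_n$ coincide, where $\mathbf G=\Lambda\otimes\mathbf F$. Since two distributions with identical quantile functions are identical, and $\C(\cdot)$ depends only on the distribution it is applied to, this equality of distributions will immediately give the stated equality of sets.

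First I would write $\Lambda=(\lambda_{ij})_{i,j=1,\dots,n}$, so that by the definition of quantile mixture we have $G_i^{-1}=\sum_{j=1}^n\lambda_{ij}F_j^{-1}$ for each $i$. Next, using the fact that the quantile function of the comonotonic sum $\oplus$ is the sum of the individual quantile functions, I would compute
\begin{equation*}
(G_1\oplus\cdots\oplus G_n)^{-1}=\sum_{i=1}^n G_i^{-1}=\sum_{i=1}^n\sum_{j=1}^n\lambda_{ij}F_j^{-1}=\sum_{j=1}^n\Bigl(\sum_{i=1}^n\lambda_{ij}\Bigr)F_j^{-1}.
\end{equation*}
Because $\Lambda\in\mathcal Q_n$, every column of $\Lambda$ sums to $1$, so the inner sum is $1$ for every $j$, and the right-hand side collapses to $\sum_{j=1}^n F_j^{-1}=(F_1\oplus\cdots\oplus F_n)^{-1}$. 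Hence $G_1\oplus\cdots\oplus G_n=F_1\oplus\cdots\oplus F_n$ as elements of $\M_1$, and therefore $\C(\mathbf F)=\C(F_1\oplus\cdots\oplus F_n)=\C(G_1\oplus\cdots\oplus G_n)=\C(\Lambda\otimes\mathbf F)$.

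Honestly, there is no hard step here: the whole argument is a one-line bookkeeping exercise that uses only the definitions of $\oplus$, of $\Lambda\otimes\mathbf F$, and the column-stochasticity of $\Lambda$ (row-stochasticity is not needed). The only conceptual point worth flagging for the reader is the contrast with Theorem \ref{th:1p}: for distribution mixtures one compares convex-order upper sets through the strict inclusion $\C(\mathbf F)\subset\C(\Lambda\mathbf F)$, whereas for quantile mixtures the comonotonic sum is exactly preserved, which is precisely why equality, rather than mere inclusion, holds here.
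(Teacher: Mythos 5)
Your argument is correct and is exactly the direct verification by definition that the paper alludes to (the paper omits the proof, stating only that the proposition ``can be verified directly by definition''): the comonotonic sum is invariant under a quantile mixture because the columns of $\Lambda$ sum to one, so the two convex-order lower sets coincide. Your observation that only column-stochasticity is used, and the contrast with Theorem \ref{th:1p}, are both accurate.
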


As we see from Example \ref{ex:1}, $\D_n(\mathbf F)$ and $\D_n(\Lambda \otimes \mathbf F)$ are not necessarily comparable. In \cite{MWW19}, a non-trivial result is established for the aggregation of standard uniform distributions, which leads to an interesting observation along this direction.

\begin{proposition}\label{prop:uniform}
Suppose that $F_1,\dots,F_n$ are uniform distributions, $n\ge 3$, and $\Lambda=(\frac 1n)_{n\times n}$.
Then
$\D_n(\mathbf F) \subset \D_n(\Lambda \otimes \mathbf F)$.
\end{proposition}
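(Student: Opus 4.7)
The plan is to reduce the problem to standard uniforms and then invoke the analytical characterization of $\D_n(U[0,1],\dots,U[0,1])$ established in \cite{MWW19}, where the hypothesis $n\ge 3$ is already built in.

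First I would normalize. Write $F_i=U[a_i,b_i]$ and set $c_i=b_i-a_i$, $\bar a=\frac1n\sum a_i$, $\bar b=\frac1n\sum b_i$, $\bar c=\bar b-\bar a$. Since $\frac1n\sum F_i^{-1}(u)=\bar a+\bar c\,u$, the quantile mixture $\Lambda\otimes \mathbf F$ consists of $n$ copies of $U[\bar a,\bar b]$. Individually translating the $i$-th marginal by $-a_i$ shifts every element of $\D_n(\mathbf F)$ by the common constant $-\sum a_i$, while translating each copy of $U[\bar a,\bar b]$ by $-\bar a$ shifts every element of $\D_n(\Lambda\otimes\mathbf F)$ by $-n\bar a=-\sum a_i$; a further common rescaling by $1/\bar c$ then reduces the claim to the normalized setting $F_i=U[0,c_i]$ with $\sum_{i=1}^n c_i=n$ and $\Lambda\otimes\mathbf F=(U[0,1],\dots,U[0,1])$.

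Next, for any $H\in \D_n(\mathbf F)$, I would verify three properties: (a) $H$ is supported on $[0,n]$, since a realization $\sum X_i$ with $X_i\in[0,c_i]$ a.s.\ lies in $[0,\sum c_i]=[0,n]$; (b) $H$ has mean $\sum c_i/2=n/2$; (c) $H\lcx U[0,n]$, which follows from Lemma \ref{lm:lm1} on noting that the comonotonic sum $F_1\oplus\cdots\oplus F_n$ has quantile function $\sum c_i u=nu$ and hence equals $U[0,n]$.

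Finally I would apply the characterization from \cite{MWW19}: for $n\ge 3$, a distribution $H$ lies in $\D_n(U[0,1],\dots,U[0,1])$ if and only if $H$ is supported on $[0,n]$, has mean $n/2$, and is dominated in convex order by $U[0,n]$. Properties (a)--(c) match these conditions exactly, so $H\in \D_n(\Lambda\otimes\mathbf F)$, completing the argument. The hard part is really the appeal to \cite{MWW19}: without it, Lemma \ref{lm:lm1} and Proposition \ref{prop:qm} only place $H$ inside the larger set $\C(\Lambda\otimes\mathbf F)$, and the nontrivial content is the \emph{sharpness} of the convex-order constraint for $n\ge 3$ uniform marginals---every distribution on $[0,n]$ with mean $n/2$ dominated by $U[0,n]$ in convex order is realizable as a sum of $n$ dependent standard uniforms. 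This sharpness genuinely fails for $n=2$ (joint mixability in two summands demands symmetry of the marginal), which is why the proposition must impose $n\ge 3$.
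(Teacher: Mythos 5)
Your proof is correct and follows essentially the same route as the paper's: bound $\D_n(\mathbf F)$ by the convex-order lower set of the comonotonic sum (Lemma \ref{lm:lm1}), observe that this comonotonic sum is unchanged under the quantile mixture, and then invoke the \cite{MWW19} characterization that for $n\ge 3$ identical uniform marginals the aggregation set equals the full convex-order lower set. Your conditions (a) and (b) are redundant given (c), since $H\lcx \mathrm{U}[0,n]$ already forces the mean and the support, but this does not affect the argument.
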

\begin{proof}
Note that the components of $\Lambda \otimes \mathbf F$ are uniform distributions with equal length.
By Theorem 5 of \cite{MWW19}, we have
$\D_n(\Lambda \otimes \mathbf F)=\C_n(\Lambda \otimes \mathbf F).$
Using Proposition \ref{prop:qm}, we have
$\C_n(  \mathbf F)=\C_n(\Lambda \otimes \mathbf F)$.
Lemma \ref{lm:lm1} further yields
$\D_n(  \mathbf F)\subset \C_n(  \mathbf F)$. Putting the above results together, we obtain
$\D_n(  \mathbf F)\subset \D_n(\Lambda \otimes \mathbf F)$.
\end{proof}

It is unclear whether  $\D_n(\mathbf F) \subset \D_n(\Lambda \otimes \mathbf F)$
under some other conditions, similarly to Proposition \ref{prop:uniform}.
 Note that the set inclusion  $\D_n(\mathbf F) \subset \D_n(\Lambda \otimes \mathbf F)$
would help us to obtain semi-explicit formulas for bounds on risk measures (such as VaR), since
by choosing $\Lambda=(\frac 1n)_{n\times n}$,
the marginal distributions of
$\Lambda \otimes \mathbf F$
are the same, and  formulas for VaR bounds in e.g., \cite{WPY13} and \cite{BJW14} are applicable; see Section \ref{sec:4}.

There are several sharp contrasts regarding distribution and quantile mixtures.
In addition to the contrast on order relations that we see from Theorem \ref{th:1} and Example \ref{ex:1}, the two notions also treat location shifts on the marginal distributions very differently.
This point will be explained in Section \ref{app:location}.

%\subsection{Location shifts}
%\subsection{Some more discussions on the two ways of averaging}

\section{Bounds on the worst-case values of risk measures}\label{sec:4}

This section is dedicated to exploring the inequalities between the worst-cases value of risk measures in risk aggregation with different marginal distribution tuples.  Our main results in Sections \ref{sec:2} and \ref{sec:harmo} will help to find the inequalities in Proposition \ref{prop:trivial3}.

 \subsection{Risk measures}
We  pay a particular attention to the popular regulatory risk measure VaR, which is a quantile functional. For  $F \in \mathcal M$, for $p\in (0,1)$, define the risk measure $\VaR_p: \mathcal M \to \R$ as
$$
\VaR_p(F) = F^{-1}(p)=  \inf\{x\in \R: F(x)\ge p\}.
$$
Another popular regulatory risk measure is $\ES_p:\mathcal M_1\to \R$ for $p\in (0,1)$, given by
$$
\ES_p(F)=\frac{1}{1-p} \int_p^1 F^{-1}(u) \d u.
$$
%For $X\sim F$,  since $\VaR_p(X)=F^{-1}(p)$ is determined by the distribution of $X$, we may occasionally also use the notation $\VaR_p(F)=\VaR_p(X)$.
%We also use the notation  $\VaR_p(X)=\VaR_p(F)$ for a random variable $X\sim F$.
Given marginals $\mathbf F$, the worst-case value of VaR in risk aggregation with unknown dependence structure is then defined as
\begin{align*}\overline{\VaR}_p(\mathbf F)& = \sup \{\VaR_p(G): G\in \mathcal D_n(\mathbf F)\}.
\end{align*}
In other words,  $\overline{\VaR}_p(\mathbf F)$ is the largest   value of $\VaR_p$ of the aggregate risk $X_1+\dots+X_n$ over all  possible dependence structures among $X_i\sim F_i,~i=1,\dots,n$.
Similarly, the worst-case value of ES in risk aggregation is  defined as
 $ \overline{\ES}_p(\mathbf F) = \sup \{\ES_p(G): G\in \mathcal D_n(\mathbf F)\}.
 $

The worst-case value of ES in risk aggregation is easy to calculate since ES is consistent with   convex order. On the other hand, worst-case value of VaR in risk aggregation generally does not admit any analytical formula, which is a challenging problem;   results under some specific cases are given in \cite{WPY13}, \cite{PR13} and \cite{BJW14}. To obtain approximations for $\overline{\VaR}_p(\mathbf F)$, one may use the asymptotic equivalence between VaR and ES in \cite{EWW15} and then directly apply ES bounds, or use a numerical algorithm such as the rearrangement algorithm of \cite{PR12} and \cite{EPR13}.

We will discuss a general relationship on  risk measures for different aggregation sets. A \emph{risk measure} is a functional $\rho:\mathcal M_\rho\to \R $, where $\mathcal M_\rho\subset \mathcal M$ is the set of distributions of some financial losses. For instance, if $\rho$ is the mean, then $\mathcal M_\rho$ is naturally chosen as the set of distributions with finite mean.
We denote by $\overline{\rho}(\mathbf F)$ the worst-case value of $\rho$ in risk aggregation for $\mathbf F\in \mathcal M^n$, that is, assuming $ \mathcal D_n(\mathbf F)\subset \mathcal M_\rho$,
\begin{align*} \overline{\rho}(\mathbf F) = \sup \{\rho(G): G\in \mathcal D_n(\mathbf F)\}.
 \end{align*}

\subsection{Inequalities implied by stochastic dominance}
Quite obviously, one can compare the  worst-case values of some risk measures for two tuples of distributions satisfying some stochastic dominance, which we briefly discuss here.

 A distribution $F\in \mathcal M$ is   smaller than a distribution $G$ in \emph{stochastic order} (also first-order stochastic dominance), denoted by $F\lst G$, if $F\geq G$.  For $\mathbf F,\mathbf G\in\mathcal M^n$, we say that $\mathbf F$ is smaller than $\mathbf G$ in stochastic order, denoted by $\mathbf F\lst\mathbf G$, if $F_i\lst G_i,$ $ i=1,\dots,n$. Analogously, for $\mathbf F,\mathbf G\in\mathcal M_1^n$, we say that $\mathbf F$ is smaller than $\mathbf G$ in convex order, denoted by $\mathbf F\lcx\mathbf G$, if $F_i\lcx G_i,$ $ i=1,\dots,n$.

We define two relevant common properties of risk measures. A risk measure $\rho$ is \emph{monotone} if $\rho(F)\le \rho(G)$ whenever $F\lst G$; it is \emph{consistent with convex order} if $\rho(F)\le \rho(G)$ whenever $F\lcx G$. Almost all risk measures used in practice are monotone; ES is consistent with convex order whereas VaR is not. Monetary risk measures (see \cite{FS16}) that are consistent with convex order are characterized by \cite{MW20} and they admit an ES-based representation. In particular, all lower semi-continuous convex risk measures, including ES and expectiles (e.g., \cite{Z16} and \cite{DBBZ16}), are consistent with convex order; we refer to \cite{FS16} for an overview on risk measures.

Now we state in Proposition \ref{prop:trivial5} that one can compare the  worst-case values of some risk measures for $\mathbf F$ and $\mathbf G$ if $\mathbf F$ is smaller than $\mathbf G$ in stochastic order or convex order.
\begin{proposition}\label{prop:trivial5} Let $\rho$ be a risk measure and $\mathbf F, \mathbf G\in\mathcal M^n$ with $\mathcal D_n(\mathbf F), \mathcal D_n(\mathbf G)\subset \mathcal M_\rho$.
\begin{enumerate}[(i)]
\item If $\rho$ is monotone and $\mathbf F\lst\mathbf G$, then
  $\overline{\rho}(\mathbf F)\leq \overline{\rho}(\mathbf G).$
\item If $\rho$ is consistent with convex order and  $\mathbf F\lcx\mathbf G$ with $\mathbf F, \mathbf G\in \mathcal M_1^n$, then
$\overline{\rho}(\mathbf F)\leq \overline{\rho}(\mathbf G).$
  \end{enumerate}
\end{proposition}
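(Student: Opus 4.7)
The plan is to prove both parts via a single coupling strategy: for an arbitrary cdf $H\in \mathcal D_n(\mathbf F)$, I construct random variables $Y_1,\dots,Y_n$ with $Y_i\sim G_i$ such that the cdf $K$ of $Y_1+\dots+Y_n$ lies in $\mathcal D_n(\mathbf G)$ and dominates $H$ in the relevant stochastic order. Once this is done, the assumed consistency of $\rho$ with that order gives $\rho(H)\le \rho(K)\le \overline{\rho}(\mathbf G)$, and taking the supremum over $H\in \mathcal D_n(\mathbf F)$ yields the claim. Throughout, fix a representation of $H$ as the cdf of $X_1+\dots+X_n$ with $X_i\sim F_i$ on the atomless space $(\Omega,\mathcal F,\p)$.

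For (i), I would use a quantile coupling. By the randomized distributional transform, available on an atomless space, for each $i$ there exists a uniform $[0,1]$ variable $U_i$ with $X_i=F_i^{-1}(U_i)$ a.s. Define $Y_i:=G_i^{-1}(U_i)$. Since $F_i\lst G_i$ is equivalent to $F_i\ge G_i$ pointwise, the quantile functions satisfy $F_i^{-1}(u)\le G_i^{-1}(u)$ for every $u\in(0,1)$, so $X_i\le Y_i$ a.s. Summing, $\sum_{i=1}^n X_i\le \sum_{i=1}^n Y_i$ a.s., so $H\lst K$, and monotonicity of $\rho$ closes the argument.

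For (ii), I would replace the quantile coupling by a Strassen-type martingale coupling. For each $i$ with $F_i\lcx G_i$, Strassen's theorem produces a bivariate law $\mu_i$ on $\R^2$ with marginals $F_i, G_i$ and the martingale property $\E[Y\mid X]=X$ for $(X,Y)\sim\mu_i$. Using atomlessness, enlarge the space if necessary and pick independent uniform $[0,1]$ variables $U_1,\dots,U_n$ independent of $(X_1,\dots,X_n)$, together with measurable kernels $h_i$ such that $(X_i,h_i(X_i,U_i))\sim \mu_i$; set $Y_i:=h_i(X_i,U_i)$. Then $Y_i\sim G_i$, and independence of the $U_i$ from $(X_1,\dots,X_n)$ combined with the martingale property of $\mu_i$ gives $\E[Y_i\mid X_1,\dots,X_n]=\E[Y_i\mid X_i]=X_i$. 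Summing, $\E[\sum_{i=1}^n Y_i\mid X_1,\dots,X_n]=\sum_{i=1}^n X_i$, and conditional Jensen's inequality applied to any convex $\phi$ yields $\E[\phi(\sum_i X_i)]\le \E[\phi(\sum_i Y_i)]$, i.e., $H\lcx K$. Consistency of $\rho$ with convex order then finishes (ii).

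The only point requiring care in either part is the explicit construction of the couplings on the single atomless probability space $(\Omega,\mathcal F,\p)$: the randomized distributional transform for (i), and Strassen's martingale coupling together with a measurable selection of the conditional kernel $\mu_i(\cdot\mid X_i)$ for (ii). Both are classical and can be invoked rather than re-derived; everything after that — passing from almost-sure or convex-order domination of the aggregates to the worst-case functional $\overline\rho$ — is routine.
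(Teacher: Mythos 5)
Your proof is correct, and for part (ii) it takes a genuinely different route from the paper. The paper's argument for (ii) exploits the structure of the aggregation set: the comonotonic sum $F_1\oplus\dots\oplus F_n$ is the convex-order maximum of $\D_n(\mathbf F)$, so consistency of $\rho$ with convex order immediately gives $\overline{\rho}(\mathbf F)=\rho(F_1\oplus\dots\oplus F_n)$, and it then remains only to check $F_1\oplus\dots\oplus F_n\lcx G_1\oplus\dots\oplus G_n$, which is done in one line via comonotonic additivity of $\ES_p$ and the ES characterization of convex order. You instead prove the elementwise statement that every $H\in\D_n(\mathbf F)$ is dominated in convex order by some $K\in\D_n(\mathbf G)$, by gluing Strassen martingale couplings $\mu_i$ of $(F_i,G_i)$ onto the given dependence structure of $(X_1,\dots,X_n)$ and applying conditional Jensen. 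Your argument is heavier (Strassen's theorem plus measurable kernels versus a one-line ES computation) but buys a strictly stronger conclusion --- elementwise convex-order domination of the two aggregation sets, not just comparison of their maxima --- and it does not rely on knowing that $\D_n(\mathbf F)$ has a convex-order maximum. Part (i) via the (randomized) quantile coupling is exactly the kind of ``straightforward'' verification the paper leaves to the reader. One small point of hygiene in both parts: the auxiliary uniforms ($V_i$ in the distributional transform, $U_i$ in the Strassen kernels) need not exist independently of $(X_1,\dots,X_n)$ on the \emph{given} atomless space (e.g.\ $\Omega=[0,1]$ with $X_1$ the identity), but since $\D_n(\mathbf G)$ and the orders $\lst$, $\lcx$ depend only on joint laws, you may carry out the construction on a product space and transfer the resulting law of $(Y_1,\dots,Y_n)$ back; this is routine and does not affect the validity of the proof.
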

\begin{proof} (i) is straightforward to verify. We next focus on (ii).  Since $F_1\oplus\dots\oplus F_n$ is the largest distribution in  $D_n(\mathbf F)$ with respect to convex order  and $\rho$ is consistent with convex order, we have $\overline{\rho}(\mathbf F)=\rho(F_1\oplus\dots\oplus F_n)$. Similarly, $\overline{\rho}(\mathbf G)=\rho(G_1\oplus\dots\oplus G_n)$. Note that
$\mathbf F\lcx\mathbf G$ means $F_i\lcx G_i,~i=1,\dots,n$.
For all $p\in(0,1)$,  using comonotonic-additivity of $\ES_p$,  we have
$$\ES_p(F_1\oplus\dots\oplus F_n)=\sum_{i=1}^{n} \ES_p(F_i)\leq \sum_{i=1}^{n} \ES_p(G_i)=\ES_p(G_1\oplus\dots\oplus G_n),$$
which gives $F_1\oplus\dots\oplus F_n\lcx G_1\oplus\dots\oplus G_n$ (see e.g., Theorem 3.A.5 of \cite{SS07}).
% It is well known
%(see e.g., Theorem 3.A.5 of \cite{SS07}) that for any two distributions  $F$ and $G$ in $\mathcal M_1$,
%\begin{align}\label{eq:equivalence1}
% F\lcx G  ~~\Leftrightarrow ~~
%\ES_p(F)\le \ES_p(G)~ \mbox{for all} ~p\in (0,1).
%\end{align}
%Hence
%$$\ES_p(F_i)\leq \ES_p(G_i),~p\in (0,1),~ i=1,\dots,n,$$
%which together with comonotonic-additivity of $\ES_p$ implies
%$$\ES_p(F_1\oplus\dots\oplus F_n)=\sum_{i=1}^{n} \ES_p(F_i)\leq \sum_{i=1}^{n} \ES_p(G_i)=\ES_p(G_1\oplus\dots\oplus G_n).$$
%Therefore, by (\ref{eq:equivalence1}) we have $F_1\oplus\dots\oplus F_n\lcx G_1\oplus\dots\oplus G_n$. Due to the fact that $\rho$ is consistent with convex order, we have
%$$\overline{\rho}(\mathbf F)\leq \overline{\rho}(\mathbf G).$$
%This completes the proof.
\end{proof}
In the following result, we  will show that the distribution tuples and their $\Lambda$-mixture or $\Lambda$-quantile mixture typically do not satisfy stochastic order or convex order, unless the mixture operation is essentially identical ($\Lambda \mathbf F=\mathbf F$ or $\Lambda\otimes \mathbf F=\mathbf F$). The proof of Proposition \ref{prop:compare} is  put in  Appendix \ref{app:P4}.
\begin{proposition}\label{prop:compare}
Suppose $\Lambda\in \mathcal Q_n$.
The statements within each of (i)-(iv) are equivalent.
  \begin{enumerate}[(i)]
  \item For $\mathbf F\in\mathcal M^n$,
(a) $\Lambda \mathbf F\lst \mathbf F$; (b) $\mathbf F\lst \Lambda \mathbf F$; (c) $\Lambda \mathbf F=\mathbf F$.

  \item For $\mathbf F\in\mathcal M^n$,
  (a) $\Lambda\otimes \mathbf F\lst \mathbf F$;
  (b) $\mathbf F\lst \Lambda\otimes \mathbf F$;
  (c) $\Lambda\otimes \mathbf F=\mathbf F$.
  \item For $\mathbf F\in\mathcal M_1^n$,
  (a) $\Lambda\otimes \mathbf F\lcx \mathbf F$;
  (b) $\mathbf F\lcx\Lambda\otimes \mathbf F$;
   (c) $\Lambda\otimes \mathbf F=\mathbf F$.
   \item For $\mathbf F\in\mathcal M_1^n$,
  (a) $\Lambda \mathbf F\lcx \mathbf F$; (b) $\Lambda \mathbf F=\mathbf F$.
  \end{enumerate}
\end{proposition}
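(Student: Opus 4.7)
Plan sketch: In all four parts, the equality statement trivially implies the ordering statements, so I would focus on the reverse direction in each case --- forcing an inequality up to equality. A single unifying trick handles all of them. Suppose $T:\mathcal M\to\R$ is a scalar functional that is linear with respect to the mixture operation in play (distribution mixture in (i) and (iv), quantile mixture in (ii) and (iii)), meaning $T((\Lambda\mathbf F)_i) = \sum_j \lambda_{ij} T(F_j)$ in parts (i) and (iv) and $T((\Lambda\otimes\mathbf F)_i) = \sum_j \lambda_{ij} T(F_j)$ in parts (ii) and (iii). Writing $\mathbf G=\Lambda\mathbf F$ or $\Lambda\otimes\mathbf F$ and using double stochasticity of $\Lambda$ (so $\sum_i \lambda_{ij}=1$ for every $j$), one has
$$\sum_{i=1}^n T(G_i) = \sum_{i,j}\lambda_{ij}T(F_j)=\sum_{j=1}^n T(F_j).$$
So any componentwise inequality $T(G_i)\le T(F_i)$ (or its reverse) that sums to an identity forces componentwise equality $T(G_i)=T(F_i)$ for every $i$, since the differences are of one sign and sum to zero.

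For (i), take $T_x(F):=F(x)$ indexed by $x\in\R$: this is linear under distribution mixture and characterizes $\lst$, since $F\lst F'$ iff $F(x)\ge F'(x)$ for every $x$. Both (a) and (b) thus yield $G_i(x)=F_i(x)$ for every $x$, i.e.\ $\Lambda\mathbf F=\mathbf F$. For (ii), the analogue is $T_p(F):=F^{-1}(p)$ indexed by $p\in(0,1)$, which is linear under quantile mixture and characterizes $\lst$ via quantile functions; the same argument yields $\Lambda\otimes\mathbf F=\mathbf F$.

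For (iii), I would choose $T_p(F):=\ES_p(F)=\frac{1}{1-p}\int_p^1 F^{-1}(u)\,\d u$ for $p\in(0,1)$, which is linear under quantile mixture (being an average of quantile values) and consistent with convex order. The unifying trick yields $\ES_p(G_i)=\ES_p(F_i)$ for all $p\in(0,1)$ and all $i$; since $(1-p)\ES_p(F)=\int_p^1 F^{-1}(u)\,\d u$ determines the left-continuous quantile function (hence the distribution) by differentiation in $p$, one concludes $\Lambda\otimes\mathbf F=\mathbf F$. For (iv), I would use the stop-loss functional $T_t(F):=\E[(X-t)^+]=\int_t^\infty(1-F(x))\,\d x$ for $t\in\R$: it is linear under distribution mixture and monotone in convex order (since $x\mapsto(x-t)^+$ is convex). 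The trick gives $T_t((\Lambda\mathbf F)_i)=T_t(F_i)$ for all $t$, and recovering $1-F_i(t)$ as the negative derivative of $T_t(F_i)$ in $t$ (almost everywhere, which suffices for right-continuous CDFs) closes the proof.

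The approach has no real obstacle; the main thing to verify is that each chosen test family is simultaneously linear under the operative mixture, consistent with the operative order, and rich enough to pin down a probability distribution. One mild subtlety worth noting is the absence of the direction ``$\mathbf F\lcx\Lambda\mathbf F$'' in (iv): for that direction to hold componentwise one already needs $\Lambda\boldsymbol\mu=\boldsymbol\mu$ for the mean vector $\boldsymbol\mu$, and even then convex order can fail componentwise (mixing can be ``more variable at the population level'' without dominating each specific $F_i$), so that direction would not in general force equality and is correctly excluded.
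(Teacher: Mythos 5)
Your argument is correct and, for parts (i)--(iii), is essentially the paper's own: the paper likewise tests the componentwise inequalities against a family of functionals that is exactly linear under the operative mixture (cdf values for (i), quantile values for (ii), $\ES_p$ for (iii)), sums over $i$, and uses double stochasticity of $\Lambda$ to force all inequalities into equalities. Where you genuinely diverge is part (iv). The paper again works with $\ES_p$, but $\ES_p$ is only concave, not linear, under distribution mixtures, so it must import the inequality $\ES_p\bigl(\sum_{j}\Lambda_{ij}F_j\bigr)\ge\sum_{j}\Lambda_{ij}\ES_p(F_j)$ (Theorem 3 of Wang--Wei--Willmot) to close a two-sided chain before summing. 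Your choice of the stop-loss transform $t\mapsto\E[(X-t)^+]$ is exactly linear in the distribution, so the one-line summation argument applies verbatim; this is more elementary and self-contained, and your recovery of $F$ from its stop-loss transform via (right) derivatives is handled correctly.

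One correction to your closing aside: the direction $\mathbf F\lcx\Lambda\mathbf F$ is not absent from (iv) because it fails to force equality --- your own stop-loss argument proves that it does. If $F_i\lcx(\Lambda\mathbf F)_i$ for all $i$, then $\E_{F_i}[(X-t)^+]\le\sum_{j}\Lambda_{ij}\E_{F_j}[(X-t)^+]$ for every $i$ and $t$, and summing over $i$ (column sums equal $1$) turns these into equalities, whence $\Lambda\mathbf F=\mathbf F$ exactly as in your treatment of (a). The more plausible reason the paper omits that item is that its ES-based proof cannot reach it: the concavity of $\ES_p$ under mixtures points the wrong way for that direction of the chain. So your claim that ``convex order can fail componentwise \dots\ so that direction would not in general force equality'' is mistaken, although it has no bearing on the correctness of your proof of the proposition as stated.
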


An implication of Proposition \ref{prop:compare} is that the result on stochastic order in Proposition \ref{prop:trivial5} cannot be applied to compare the worst-case values of risk measures for $\mathbf F$ and $\Lambda \mathbf F$ or $\mathbf F$ and $\Lambda\otimes\mathbf F$. Nevertheless, this comparison can be conducted by applying our findings in Sections \ref{sec:2} and \ref{sec:harmo} and some other techniques. This will be the task in the next subsection.

\subsection{Inequalities generated by distribution/quantile  mixtures }\label{sec:mixture_rm}
In the following, we will obtain inequalities between the worst-case values of risk measures for  $\mathbf F$ and $\Lambda \mathbf F$ or $\mathbf F$ and $\Lambda\otimes\mathbf F$. First, we apply Theorem \ref{th:1} and Proposition \ref{prop:qm} and immediately obtain the following result.
%Below, a convex risk measure is in the sense of \cite{FS16}, although we define it on the set of distributions.
\begin{proposition}\label{prop:trivial3}
Let $\rho$ be a risk measure and  $\Lambda\in\mathcal Q_n$.
\begin{enumerate}[(i)]
\item For $\mathbf F \in \mathcal M^n$ with  $ \mathcal D_n(\mathbf F)\subset \mathcal M_\rho$ and $ \mathcal D_n(\Lambda\mathbf F)\subset \mathcal M_\rho$, we have $\overline{\rho}(\mathbf F) \le \overline{\rho}(\Lambda \mathbf F)$;
\item For $\mathbf F \in \mathcal M_1^n$ with  $ \mathcal D_n(\mathbf F)\subset \mathcal M_\rho$ and $ \mathcal D_n(\Lambda\otimes\mathbf F)\subset \mathcal M_\rho$, if $\rho$ is consistent with convex order, then
$\overline{\rho}(\mathbf F) = \overline{\rho}(\Lambda \otimes \mathbf F) =\rho(F_1\oplus \dots \oplus F_n)$.
\end{enumerate}
\end{proposition}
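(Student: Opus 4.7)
The plan is to derive both parts as essentially immediate consequences of the structural results obtained earlier, so the main effort is organizing the pieces correctly rather than introducing new ideas.

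For part (i), I would simply invoke Theorem \ref{th:1}, which gives $\D_n(\mathbf F) \subset \D_n(\Lambda \mathbf F)$. Since $\overline{\rho}$ is defined as a supremum of $\rho$ over the aggregation set, enlarging the set can only (weakly) increase this supremum, yielding $\overline{\rho}(\mathbf F) \le \overline{\rho}(\Lambda \mathbf F)$. No convex-order assumption on $\rho$ is needed here.

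For part (ii), the strategy has two steps. First, I would observe that for any $\mathbf H \in \M_1^n$ the comonotonic sum $H_1 \oplus \dots \oplus H_n$ lies in $\D_n(\mathbf H)$ (as used in the proof of Theorem \ref{th:1p}), and by Lemma \ref{lm:lm1} it is the convex-order maximum of $\D_n(\mathbf H)$, since $\D_n(\mathbf H) \subset \C(\mathbf H) = \C(H_1 \oplus \dots \oplus H_n)$. Because $\rho$ is consistent with convex order, this forces
\[
\overline{\rho}(\mathbf H) = \rho(H_1 \oplus \dots \oplus H_n).
\]
Applying this identity to $\mathbf H = \mathbf F$ and $\mathbf H = \Lambda \otimes \mathbf F =: \mathbf G$ reduces the proposition to the claim that $F_1 \oplus \dots \oplus F_n$ and $G_1 \oplus \dots \oplus G_n$ coincide as distributions.

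The second step verifies this equality at the level of quantile functions. By definition of quantile mixture, $G_i^{-1} = \sum_{j=1}^n \lambda_{ij} F_j^{-1}$ with $\Lambda = (\lambda_{ij})$. Summing over $i$ and exchanging the order, together with the fact that each column of the doubly stochastic matrix $\Lambda$ sums to $1$, yields
\[
\sum_{i=1}^n G_i^{-1} = \sum_{j=1}^n \Bigl(\sum_{i=1}^n \lambda_{ij}\Bigr) F_j^{-1} = \sum_{j=1}^n F_j^{-1}.
\]
Since the quantile function of a comonotonic sum is the sum of the quantile functions, this means $F_1 \oplus \dots \oplus F_n = G_1 \oplus \dots \oplus G_n$ as distributions, which completes part (ii). Alternatively, the same conclusion follows from Proposition \ref{prop:qm} combined with antisymmetry of convex order on distributions sharing a common mean.

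There is no genuine obstacle here; the only thing requiring care is keeping the two mixture operations separate and noting that part (i) needs only set inclusion from Theorem \ref{th:1}, while part (ii) relies on consistency with convex order to pin the supremum at the comonotonic extreme and on Proposition \ref{prop:qm} (or its one-line quantile proof above) to identify the two comonotonic sums.
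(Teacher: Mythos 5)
Your proposal is correct and follows essentially the same route as the paper, which derives the result "immediately" from Theorem \ref{th:1} (for part (i)) and from Proposition \ref{prop:qm} together with the convex-order maximality of the comonotonic sum in $\D_n$ (for part (ii)); you have merely written out the details the paper leaves implicit, including the column-sum computation showing $F_1\oplus\dots\oplus F_n = G_1\oplus\dots\oplus G_n$. No gaps.
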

Note that in Proposition \ref{prop:trivial3}, the inequality for distribution mixture  is valid for all risk measures whereas the equality for quantile mixture is constrained to risk measures consistent with convex order.
As $\ES_p$ is a special case of risk measures consistent with convex order, we immediately get $\overline{\ES}_p (\mathbf F) \le \overline{\ES}_p (\Lambda \mathbf F)$ and $\overline{\ES}_p (\mathbf F) = \overline{\ES}_p (\Lambda \otimes \mathbf F)$. Since VaR  is not consistent with convex order, (ii) of Proposition \ref{prop:trivial3}
cannot be  applied to VaR.  Nevertheless, using a recent result  on $\overline{\VaR}$ in \cite{BLLW20},  we obtain an inequality   between $\overline{\VaR}$ for some special marginals  and $\overline{\VaR}$ of their corresponding quantile mixture. Denote by $\mathcal M_D$ (respectively, $\mathcal M_I$) the set of distributions with decreasing (respectively, increasing) densities on their support. Moreover, let $\M_D^n= (\M_D)^n$ and $\M_I^n=(\M_I)^n$.
  % and \{(F_1,\dots,F_n): F_1,\dots, F_n\in \M_D\}$  and $\M_I^n=  \{(F_1,\dots,F_n): F_1,\dots, F_n\in \M_I\}$.
\begin{theorem}\label{th:scale}
  For $p \in (0,1)$, $\Lambda \in \mathcal Q_n$, and $\mathbf F \in \M_D^n \cup \M_I^n$, we have
  $$\overline{\VaR}_p (\mathbf F )
  \le
  \overline{\VaR}_p (\Lambda \otimes \mathbf{F}).
  $$
\end{theorem}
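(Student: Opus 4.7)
My plan is to reduce the theorem to an application of the explicit characterization of $\overline{\VaR}_p$ for monotone-density marginals from \cite{BLLW20}, combined with the structural observation that quantile mixtures preserve the comonotonic aggregate.

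First, I would verify that both $\M_D^n$ and $\M_I^n$ are closed under $\Lambda$-quantile mixture. Indeed, $F \in \M_D$ is equivalent to $F^{-1}$ being convex on $(0,1)$ (a decreasing density on the support corresponds to a concave cdf, hence a convex quantile). Each component of $\Lambda \otimes \mathbf F$ has quantile function $\sum_j \lambda_{ij} F_j^{-1}$, which is a nonnegative linear combination of convex functions and therefore convex; the analogous statement for $\M_I^n$ uses that convex combinations preserve concavity. Thus the BLLW20 formula is applicable to both sides of the inequality.

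Second, I would invoke the BLLW20 characterization, which for $\mathbf F \in \M_D^n$ expresses $\overline{\VaR}_p(\mathbf F)$ as the supremum over an auxiliary threshold $c \in [p,1)$ of an objective $\Phi(c;\mathbf F)$ built from (i) the comonotonic tail integral $\int_c^1 \sum_{i=1}^n F_i^{-1}(u)\,\d u$, and (ii) finitely many pointwise quantile terms $F_i^{-1}(c)$; the case $\mathbf F \in \M_I^n$ is governed by the reflected formula involving the lower tail.

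Third, I would exploit the identity $\sum_{i=1}^n F_i^{-1}(u) = \sum_{i=1}^n (\Lambda\otimes\mathbf F)_i^{-1}(u)$ for every $u\in(0,1)$, which is a direct consequence of the column sums of $\Lambda\in\mathcal Q_n$ being $1$. This identity makes the tail-integral contribution (i) invariant under $\Lambda\otimes$. For the pointwise contribution (ii), the convexity of each $F_i^{-1}$ combined with the doubly-stochastic structure of $\Lambda$ permits a Schur/Karamata-type comparison (invoking Theorem 2.A.2 of \cite{MOA11} as in the proof of Theorem \ref{th:1}), which gives $\Phi(c;\mathbf F) \le \Phi(c;\Lambda\otimes\mathbf F)$ for every admissible $c$. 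Taking supremum over $c$ then yields the claimed inequality.

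The hardest step will be the third: the pointwise-quantile terms in $\Phi$ are not a priori monotone under doubly-stochastic averaging, so one must carefully align the sign of the coefficients in front of these terms with the convex (respectively concave) shape of $F_i^{-1}$ guaranteed by the monotone-density hypothesis. The monotone-density assumption is exactly what makes the Karamata-type inequality go in the right direction; without it, the correction terms could move unfavorably and the inequality would fail, which is consistent with the fact that no general comparison $\D_n(\mathbf F)\subset\D_n(\Lambda\otimes\mathbf F)$ holds beyond the special settings identified in Section \ref{sec:harmo}.
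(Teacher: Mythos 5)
Your first step is correct and matches what the paper needs: $\M_D$ (respectively $\M_I$) is closed under $\Lambda$-quantile mixture because convexity (respectively concavity) of the quantile functions is preserved under nonnegative linear combinations, so the \cite{BLLW20} characterization applies to both sides. The gap is in your steps 2 and 3. The characterization in Lemma \ref{lem:LW20} is not a supremum over a single threshold $c$ of a comonotonic tail integral plus pointwise quantile terms; it is an \emph{infimum} over an $n$-dimensional vector $\boldsymbol\beta\in\mathbb{B}_n$ of a sum of per-marginal tail integrals, in which each $F_i^{-1}$ is integrated over its \emph{own} window $[p+(1-p)(\beta-\beta_i),\,1-(1-p)\beta_i]$. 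Because the windows differ across $i$, the identity $\sum_i F_i^{-1}(u)=\sum_i(\Lambda\otimes\mathbf F)_i^{-1}(u)$ cannot be used to collapse the objective, and there are no pointwise terms $F_i^{-1}(c)$ on which to run a Karamata comparison. Worse, if the formula did have the single-threshold form you describe, then both the tail integral and any term of the form $\sum_i(\Lambda\otimes\mathbf F)_i^{-1}(c)=\sum_j F_j^{-1}(c)$ would be \emph{invariant} under $\Lambda\otimes$ (the column sums of $\Lambda$ are $1$), so your argument would yield equality rather than the claimed inequality --- contradicting the numerical examples in Section \ref{sec:numerical}, where the inequality is strict.

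The mechanism that actually closes the proof is different and uses no convexity of the quantiles at all. Writing $\Lambda=\sum_k\lambda_k\Pi_k$ via Birkhoff's theorem, one checks that the objective for $\Lambda\otimes\mathbf F$ at a fixed $\boldsymbol\beta$ equals the convex combination $\sum_k\lambda_k$ of the objectives for $\mathbf F$ evaluated at the permuted arguments $\Pi_k\boldsymbol\beta$. Since $\mathbb{B}_n$ and the quantity $\beta=\sum_i\beta_i$ are permutation-invariant, the infimum over $\boldsymbol\beta$ of each permuted objective equals $\overline{\VaR}_p(\mathbf F)$, and the result follows from the elementary fact that a convex combination of infima is at most the infimum of the convex combination. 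The monotone-density hypothesis enters only to ensure that the upper bound \eqref{eq:primal-app} is an equality for both $\mathbf F$ and $\Lambda\otimes\mathbf F$ (via Lemma 4.5 of \cite{BJW14} and Lemma \ref{lem:LW20}); it is not used as a shape condition driving a majorization inequality. To repair your argument you would need to replace step 2 with the correct form of the lemma and step 3 with this interchange of infimum and convex combination over permutations.
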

\begin{proof}
We start with some preliminaries. Define the upper VaR at level $p$ for a cdf $F$ as
  $$
  \VaR_p^*(F) = \inf\{x\in \R: F(x) > p\}, ~~~ p \in (0,1).
  $$
 % and similarly for $\VaR_p^*(X)$.
  The worst-case value of the upper VaR in risk aggregation is
 $ \overline{\VaR}_p^*(\mathbf F)  = \sup \{\VaR_p^*(G): G\in \mathcal D_n(\mathbf F)\}.
 $
  For $\mathbf F\in \M_D^n \cup \M_I^n$ and $p \in (0,1)$, Lemma 4.5 of \cite{BJW14} gives
  $$
  \overline{\VaR}_p^*(\mathbf F) = \overline{\VaR}_p(\mathbf F).
  $$
  Using Lemma \ref{lem:LW20} in Appendix \ref{app:a1} (paraphrased from Theorem 2 of \cite{BLLW20}), we have
  \begin{equation}\label{eq:primal}
  \overline{\VaR}_p (\mathbf F) = \inf_{\boldsymbol{\beta} \in \mathbb{B}_n} \sum_{i=1}^{n} \frac{1}{(1-p)(1-\beta)} \int_{p+(1-p)(\beta -\beta_i)}^{1-(1-p)\beta_i} \VaR_u(F_i) \d u,
  \end{equation}
  where $\boldsymbol{\beta}= (\beta_1, \cdots, \beta_n)$, $\beta = \sum_{i=1}^n \beta_i$ and
  $
  \mathbb{B}_n = \{\boldsymbol{\beta} \in [0,1)^n:\beta < 1  \}.
  $
Note that  $$\Lambda \otimes \mathbf{F}\in \M_D^n \cup \M_I^n~~ \text{if}~~ \mathbf F\in \M_D^n \cup \M_I^n.$$ Consequently, for $p\in (0,1)$,
  $$
  \begin{aligned}
  \overline{\VaR}_p (\Lambda \otimes \mathbf{F} ) =& \inf_{\boldsymbol{\beta} \in \mathbb{B}_n} \sum_{i=1}^{n} \frac{1}{(1-p)(1-\beta)} \int_{p+(1-p)(\beta -\beta_i)}^{1-(1-p)\beta_i} \left(\sum_{j=1}^n \Lambda_{i,j} \VaR_u(F_j)  \right) \d u\\
  =&\inf_{\boldsymbol{\beta} \in \mathbb{B}_n}  \sum_{i=1}^n \sum_{j=1}^n \Lambda_{i,j} M_{i,j}(\boldsymbol \beta),
  \end{aligned}
  $$
  where the function $M: \mathbb{B}_n \rightarrow \R^{n\times n}$, mapping an $n$-dimensional vector to an $n \times n$ matrix, is given by
  $$
  M_{i,j}(\boldsymbol{\beta}) =  \frac 1 {(1-p)(1-\beta)} \int_{p+(1-p)(\beta -\beta_i)}^{1-(1-p)\beta_i} \VaR_u(F_j) \d u,~~~i,j=1,\dots,n.
  $$
%  $$
%  M(\boldsymbol{\beta}) =\frac 1 {(1-p)(1-\beta)}
%  \begin{pmatrix}
% \int_{p+(1-p)(\beta -\beta_1)}^{1-(1-p)\beta_1} \VaR_u(F_1) \d u ,& \cdots,&  \int_{p+(1-p)(\beta -\beta_1)}^{1-(1-p)\beta_1} \VaR_u(F_n) \d u \\
%  \cdots & \cdots & \cdots \\
%   {\int_{p+(1-p)(\beta -\beta_n)}^{1-(1-p)\beta_n} \VaR_u(F_1) \d u} ,& \cdots,&  {\int_{p+(1-p)(\beta -\beta_n)}^{1-(1-p)\beta_n} \VaR_u(F_n) \d u}
%  \end{pmatrix}.
%  $$
  We can rewrite \eqref{eq:primal} as
  $$
  \begin{aligned}
  \overline{\VaR}_p (\mathbf{F} )
  =\inf_{\boldsymbol{\beta} \in \mathbb{B}_n}  \sum_{i=1}^n M_{i,i}(\boldsymbol \beta).
  \end{aligned}
  $$
  Let $\Pi_1,\dots,\Pi_{n!}$  be all different $n$-permutation matrices, i.e., $\Pi_k \boldsymbol{\beta}$ is a permutation of $\boldsymbol{\beta}$ for each $k$. By Birkhoff's Theorem (Theorem 2.A.2 of \cite{MOA11}), for $\Lambda \in \mathcal Q_n$, there exists  $(\lambda_1,\dots,\lambda_{n!})\in \Delta_{n!}$ such that
  $
  \Lambda= \sum_{k=1}^{n!}\lambda_k \Pi_k.
  $
  Hence, by writing $\Pi_k \boldsymbol{\beta} = (\beta^k_1,\cdots,\beta^k_n)$ for each $k$, we have
  $$
  \begin{aligned}
  \sum_{i=1}^n \sum_{j=1}^n \Lambda_{i,j} M_{i,j}(\boldsymbol \beta)
  &= \frac{1}{(1-p)(1-\beta)} \sum_{i=1}^{n} \int_{p+(1-p)(\beta -\beta_i)}^{1-(1-p)\beta_i} \left(\sum_{j=1}^n \Lambda_{i,j} \VaR_u(F_j)  \right) \d u\\
  &= \frac{1}{(1-p)(1-\beta)} \sum_{i=1}^{n} \sum_{k=1}^{n!} \lambda_k \int_{p+(1-p)(\beta -\beta^k_i)}^{1-(1-p)\beta^k_i} \VaR_u(F_i)   \d u\\
  &= \sum_{k=1}^{n!} \lambda_k  \sum_{i=1}^{n} \frac{1}{(1-p)(1-\beta)}\int_{p+(1-p)(\beta -\beta^k_i)}^{1-(1-p)\beta^k_i} \VaR_u(F_i)   \d u\\
  &= \sum_{k=1}^{n!} \lambda_k\sum_{i=1}^n M_{i,i}(\Pi_k \boldsymbol \beta).
  \end{aligned}
  $$
Using the above facts, we finally obtain
  $$
  \begin{aligned}
  \overline{\VaR}_p (\mathbf{F})  =\inf_{\boldsymbol{\beta} \in \mathbb{B}_n}  \sum_{i=1}^n M_{i,i}(\boldsymbol \beta)
     &  = \sum_{k=1}^{n!} \lambda_k \inf_{\boldsymbol{\beta} \in \mathbb{B}_n}  \sum_{i=1}^n M_{i,i}(\Pi_k\boldsymbol \beta)\\
  & \leq  \inf_{\boldsymbol{\beta} \in \mathbb{B}_n} \sum_{k=1}^{n!} \lambda_k\sum_{i=1}^n M_{i,i}(\Pi_k \boldsymbol \beta)
=   \inf_{\boldsymbol{\beta} \in \mathbb{B}_n}\sum_{i=1}^n \sum_{j=1}^n \Lambda_{i,j} M_{i,j}(\boldsymbol \beta)   = \overline{\VaR}_p (\Lambda \otimes \mathbf{F}).
  \end{aligned}
  $$
  This completes the proof of the theorem.
  %\end{enumerate}
\end{proof}

%Theorem \ref{th:scale}  can be adopted to partly complement the results of $n=2$ for Proposition \ref{prop:uniform}.
 The restriction of marginals to distributions with monotone densities in Theorem \ref{th:scale} is because of applying Lemma \ref{lem:LW20}. This assumption is common in the literature of VaR bounds (e.g., \cite{WPY13}).  We may expect Theorem \ref{th:scale} to hold for more general classes of $\mathbf F$; this is supported by the numerical results in Figure \ref{fig:power_het}.  Moreover,    for  $\Lambda \in \mathcal Q_n$ and and $\mathbf F \in \M_D^n \cup \M_I^n$, we may expect    $\overline{\rho} (\mathbf F )
  \le
  \overline{\rho} (\Lambda \otimes \mathbf{F})
  $
  for other risk measures $\rho$ than VaR (Theorem \ref{th:scale}) and those consistent with convex order (Proposition \ref{prop:trivial3}).
 Unfortunately, we are unable to prove the above statements in general. Some related open questions are listed in Section \ref{sec:conc}.

 \begin{remark}\label{rem:r1-rw1}
 
The  condition  $\mathbf F\in \M_D^n \cup \M_I^n$  in Theorem \ref{th:scale} can be relaxed to that the $p$-tail distributions of $F_1,\dots,F_n$ are all in $\M_D$ or all in $\M_I$.\footnote{The $p$-tail distribution of $F$ is the distribution of $F^{-1}(U)$
 where $U$ is uniform on $[p,1]$; see e.g., \cite{RU02}.}
  This should be clear since only the $p$-tail distributions are involved in the proof of Theorem \ref{th:scale}. This condition often holds if $p$ is close to $1$, and it allows for Theorem \ref{th:scale} to be applied to many common distributions in risk management.
 
 \end{remark}

 Next, we study location-scale distribution families.
Let $T_x(F) $ be a shift of $F\in \mathcal M$ by adding a constant $x\in \R$ to its location, that is, $T_x(F)$ is the distribution of $X+x$ for $X\sim F$.
For $\mathbf x= (x_1,\dots,x_n)\in \R^n$ and $\mathbf F=(F_1,\dots,F_n)\in \mathcal M^n$, we use the notation
$\mathbf T_{\mathbf x} (\mathbf F) = (T_{x_1}(F_1),\dots,T_{x_n}(F_n)).$
Moreover, for $\lambda \ge 0$, we denote by $F^{\lambda}$ the distribution of $\lambda X$ for  $X\sim F$ and write $\mathbf F^{\boldsymbol \lambda}=(F^{\lambda_1},\dots, F^{\lambda_n})$.

\begin{corollary}\label{item1:th location}  For $p\in (0,1)$, $F \in \mathcal{M}_D \cup \mathcal{M}_I$ , $\boldsymbol \lambda,\boldsymbol \gamma \in \R_+^n$, and $\mathbf x, \mathbf y\in\mathbb{R}^n$, if $ \boldsymbol  \gamma\prec \boldsymbol \lambda$ and $\sum_{i=1}^{n}x_i\leq \sum_{i=1}^{n}y_i$, then
\begin{equation}\label{maj}
\overline{\VaR}_p (\mathbf T_{\mathbf x}(\mathbf F^{\boldsymbol \lambda}))
\le
\overline{\VaR}_p (\mathbf T_{\mathbf y}(\mathbf F^{\boldsymbol \gamma})).
\end{equation}
\end{corollary}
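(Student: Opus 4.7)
The plan is to decompose the inequality into two independent effects: the location shift (governed by $\sum x_i \le \sum y_i$) and the change of scale vector (governed by $\boldsymbol\gamma\prec\boldsymbol\lambda$). The first is handled by the translation-invariance of $\overline{\VaR}_p$, while the second is a direct application of Theorem \ref{th:scale} once we recognize $\mathbf F^{\boldsymbol\gamma}$ as a $\Lambda$-quantile mixture of $\mathbf F^{\boldsymbol\lambda}$.

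First I would establish that for any $\mathbf H\in\M^n$ and $\mathbf z\in\R^n$,
$$\overline{\VaR}_p(\mathbf T_{\mathbf z}(\mathbf H))=\overline{\VaR}_p(\mathbf H)+\sum_{i=1}^n z_i.$$
This is immediate: if $X_i\sim H_i$ and $Y_i=X_i+z_i$, then $\sum Y_i=\sum X_i+\sum z_i$, so every element of $\D_n(\mathbf T_{\mathbf z}(\mathbf H))$ is a translate by $\sum z_i$ of the corresponding element of $\D_n(\mathbf H)$, and $\VaR_p$ is translation-equivariant. Combined with $\sum x_i\le\sum y_i$, it suffices to prove
$$\overline{\VaR}_p(\mathbf F^{\boldsymbol\lambda})\le\overline{\VaR}_p(\mathbf F^{\boldsymbol\gamma}).$$

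Next, I would invoke the majorization characterization recalled in the paragraph before Corollary \ref{coro:1}: since $\boldsymbol\gamma\prec\boldsymbol\lambda$, there exists $\Lambda\in\mathcal Q_n$ with $\boldsymbol\gamma=\Lambda\boldsymbol\lambda$. The key algebraic identity is then
$$\mathbf F^{\boldsymbol\gamma}=\Lambda\otimes\mathbf F^{\boldsymbol\lambda},$$
which I would verify componentwise: since $(F^{\lambda_j})^{-1}(u)=\lambda_j F^{-1}(u)$ for $\lambda_j\ge 0$, the $i$-th component of $\Lambda\otimes\mathbf F^{\boldsymbol\lambda}$ has quantile function $\sum_{j=1}^n\Lambda_{i,j}\lambda_j F^{-1}=\gamma_i F^{-1}=(F^{\gamma_i})^{-1}$.

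Finally, I would apply Theorem \ref{th:scale} to $\mathbf F^{\boldsymbol\lambda}$. Since $F\in\M_D\cup\M_I$ and positive scaling preserves monotonicity of densities, we have $\mathbf F^{\boldsymbol\lambda}\in\M_D^n\cup\M_I^n$, so Theorem \ref{th:scale} gives $\overline{\VaR}_p(\mathbf F^{\boldsymbol\lambda})\le\overline{\VaR}_p(\Lambda\otimes\mathbf F^{\boldsymbol\lambda})=\overline{\VaR}_p(\mathbf F^{\boldsymbol\gamma})$, which is what we needed. The only subtlety is when some $\lambda_i=0$, in which case $F^{\lambda_i}$ is a point mass and not in $\M_D\cup\M_I$; this degenerate case can be handled either by dropping the vanishing coordinates and applying the result in lower dimension, or by an approximation argument letting $\lambda_i\downarrow 0$ and using continuity of $\overline{\VaR}_p$ in the marginals along this family. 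No substantial obstacle is expected beyond this boundary bookkeeping; the real work has already been done in Theorem \ref{th:scale}.
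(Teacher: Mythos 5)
Your proof is correct and follows essentially the same route as the paper's: translation-equivariance of $\overline{\VaR}_p$ reduces the claim to comparing $\overline{\VaR}_p(\mathbf F^{\boldsymbol\lambda})$ with $\overline{\VaR}_p(\mathbf F^{\boldsymbol\gamma})$, and the Birkhoff/majorization characterization $\boldsymbol\gamma=\Lambda\boldsymbol\lambda$ identifies $\mathbf F^{\boldsymbol\gamma}$ as $\Lambda\otimes\mathbf F^{\boldsymbol\lambda}$ so that Theorem \ref{th:scale} applies. Your remark on the degenerate coordinates $\lambda_i=0$ is a point the paper's proof passes over silently, and your suggested fixes (dimension reduction or a limiting argument) are both adequate.
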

\begin{proof}
By Section 1.A.3 of \cite{MOA11},
 $ \boldsymbol  \gamma\prec \boldsymbol \lambda$ if and only if there exists $\Lambda\in \mathcal Q_n$ such that
$ \boldsymbol  \gamma =\Lambda\boldsymbol \lambda $. This implies
 $\mathbf F^{\boldsymbol \gamma}=\Lambda\otimes\mathbf F^{\boldsymbol\lambda}.$
 By Theorem \ref{th:scale}, it follows that
 $\overline{\VaR}_p (\mathbf F^{\boldsymbol \lambda})\leq \overline{\VaR}_p (\mathbf F^{\boldsymbol \gamma}).$
Moreover, observe that
$$\overline{\VaR}_p (\mathbf T_{\mathbf x}(\mathbf F^{\boldsymbol \lambda}))=\overline{\VaR}_p (\mathbf F^{\boldsymbol \lambda})+\sum_{i=1}^{n}x_i~~\text{and}~~\overline{\VaR}_p (\mathbf T_{\mathbf y}(\mathbf F^{\boldsymbol \gamma}))=\overline{\VaR}_p (\mathbf F^{\boldsymbol \gamma})+\sum_{i=1}^{n}y_i.$$
 By the fact that $\sum_{i=1}^{n}x_i\leq \sum_{i=1}^{n}y_i$, we prove (\ref{maj}).
\end{proof}

\section{Bounds on risk measures for Pareto risk aggregation}\label{sec:bound}

%\subsection{Preliminaries}
%Let $\X$ be the set of all random variables.
 In this section we study the worst-case risk measure for a portfolio of Pareto risks, and the risk measure is not necessarily consistent with convex order. Throughout this section, we assume that $\rho$ is a monotone risk measure, such as VaR.

 One particular situation  of interest for risk aggregation with non-convex risk measures is when the risks in the portfolio do not have a finite mean.
Note that for a portfolio without finite mean, any non-constant risk measure that is consistent with convex order (including convex risk measures) will have an infinite value. Therefore, one has to use a non-convex risk measure such as
 VaR to assess risks in this situation.

Arguably, the most important class of heavy-tailed risk distributions  is the class of Pareto distributions due to their regularly varying tails and their prominent appearance in extreme value theory; see e.g., \cite{EKM97}.
 A common parameterization of Pareto distributions is given by, for $\theta,\alpha>0$,
	$$
	P_{\alpha,\theta}(x) = 1 -\left(\frac{\theta}{x}\right)^{\alpha},~~x\ge \theta.
	$$
Note that if $X\sim P_{\alpha,1}$, then
$\theta X\sim P_{\alpha,\theta}$, and thus $\theta$ is a scale parameter.
Moreover, the mean of $P_{\alpha,\theta}$ is infinite if and only if $\alpha\in (0,1]$.
 Limited by the current techniques, we confine ourselves to portfolios of risks with a fixed $\alpha$ and possibly different $\theta$.

 For   $\alpha>0 $ and $\boldsymbol \theta =(\theta_1,\dots,\theta_n)\in (0,\infty)^n$,
 let
 $\mathbf P_{\alpha,\boldsymbol \theta} = (P_{\alpha,\theta_1},\dots,P_{\alpha,\theta_n})$.
 %\subsection{Inequalities and Properties for general risk measures}
We are interested in the worst-case value $\overline{\rho}(\mathbf P_{\alpha,\boldsymbol \theta} )$.
We first note some simple properties of the above quantity, which are straightforward to check (a simple proof is put in Appendix \ref{app:Pa}).
\begin{proposition}\label{prop:trivial4}
Let $\rho$ be a monotone risk measure on $\mathcal M$.
 For   $\alpha>0 $ and $\boldsymbol \theta  \in (0,\infty)^n$,
 \begin{enumerate}[(i)]
 \item $\Lambda \otimes \mathbf P_{\alpha,\boldsymbol \theta} = \mathbf P_{\alpha,\Lambda \boldsymbol \theta} $
 for all $\Lambda\in (0,\infty)^{n\times n}$;
\item  $\overline{\rho}(\mathbf P_{\alpha,\boldsymbol \theta} )$  is decreasing  in $\alpha$;
\item $\overline{\rho}(\mathbf P_{\alpha,\boldsymbol \theta} )$  is increasing  in each component of  $\boldsymbol \theta$.
\end{enumerate}
\end{proposition}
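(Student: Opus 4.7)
My plan is to exploit the explicit Pareto quantile function $P_{\alpha,\theta}^{-1}(p)=\theta(1-p)^{-1/\alpha}$ for $p\in(0,1)$, obtained by inverting the relation $1-p=(\theta/x)^\alpha$. This identity cleanly separates the dependence on the scale parameter $\theta$ (which appears as a multiplicative factor) from the dependence on the tail index $\alpha$ (which appears only in the exponent $-1/\alpha$), and each of (i)--(iii) will then reduce to a one-line observation about this quantile combined with Proposition \ref{prop:trivial5}(i).

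For (i), I would simply unwind the definition of the quantile mixture $\Lambda\otimes\mathbf F$: the $i$-th entry of $\Lambda\mathbf P_{\alpha,\boldsymbol\theta}^{-1}$ evaluated at $p$ is $\sum_{j=1}^{n}\Lambda_{ij}\theta_j(1-p)^{-1/\alpha}=(\Lambda\boldsymbol\theta)_i(1-p)^{-1/\alpha}$, which is precisely $P_{\alpha,(\Lambda\boldsymbol\theta)_i}^{-1}(p)$. Since an element of $\mathcal M$ is pinned down by its quantile function, this identifies the tuple with $\mathbf P_{\alpha,\Lambda\boldsymbol\theta}$; no doubly-stochasticity of $\Lambda$ is needed, only nonnegativity of its entries. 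For (iii), at any fixed $p\in(0,1)$ the map $\theta\mapsto\theta(1-p)^{-1/\alpha}$ is increasing, so $P_{\alpha,\theta}\lst P_{\alpha,\theta'}$ whenever $\theta\le\theta'$; raising any single component $\theta_i$ therefore preserves the componentwise stochastic order on the tuple, and Proposition \ref{prop:trivial5}(i) yields the claimed monotonicity of $\overline\rho$. For (ii), analogously, at fixed $p\in(0,1)$ the map $\alpha\mapsto(1-p)^{-1/\alpha}$ is decreasing in $\alpha$ because $1-p<1$ and $-1/\alpha$ is increasing in $\alpha$ with limit $0$; hence $P_{\alpha_2,\theta}\lst P_{\alpha_1,\theta}$ whenever $\alpha_1\le\alpha_2$, and Proposition \ref{prop:trivial5}(i) concludes once more.

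No step here is genuinely difficult: the whole proposition amounts to unpacking the Pareto quantile and invoking the monotonicity of $\overline\rho$ on stochastically ordered marginals. The only points requiring a little care are to apply the componentwise convention for $\lst$ on tuples from the definitions preceding Proposition \ref{prop:trivial5}, and to note that the stochastic-order comparisons only need to be checked on $p\in(0,1)$ since the boundary values are irrelevant.
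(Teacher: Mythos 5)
Your proposal is correct and follows essentially the same route as the paper: part (i) by the scale identity $P_{\alpha,\theta}^{-1}=\theta P_{\alpha,1}^{-1}$, and parts (ii)--(iii) by observing that the Pareto quantile $\theta(1-p)^{-1/\alpha}$ is pointwise decreasing in $\alpha$ and increasing in $\theta$, then invoking monotonicity of $\rho$. The paper phrases (ii) via an explicit coupling $\theta_i U_i^{-1/\alpha}$ rather than citing Proposition \ref{prop:trivial5}(i), but this is the same argument in different clothing.
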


The next result contains an ordering relationship on  the aggregation of Pareto risks. In particular, we show that for $\alpha\in (0,1]$, which means the mean of the distribution is infinite, the quantile mixture leads to an even larger worst-case value of risk aggregation than the distribution mixture (this statement is generally not  true for $\alpha>1$; see the figures in Section \ref{sec:numerical}).
This result is not implied by any comparisons obtained in the previous sections, and it seems to be rather specialized for Pareto distributions, as seen from the proof. It is unclear at the moment whether the result can be generalized to other types of distributions without a finite mean.
%\subsection{Inequalities of mixture on risk measures}
\begin{theorem}\label{th:mon}
Let $\rho$ be a monotone risk measure on $\mathcal M$.
  For  $\alpha\in (0,1]$, $\boldsymbol \theta =(\theta_1,\dots,\theta_n)\in (0,\infty)^n$,
  and  $\Lambda \in \mathcal Q_n$, we have
 $\overline{\rho} (\mathbf P_{\alpha,  \boldsymbol \theta} )
     \le  \overline{\rho} (\Lambda \mathbf P_{\alpha,  \boldsymbol \theta} )  \le
     \overline{\rho} (    \mathbf P_{\alpha, \Lambda \boldsymbol \theta} ).$
\end{theorem}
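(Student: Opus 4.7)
\textbf{Proof plan for Theorem \ref{th:mon}.}

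The first inequality, $\overline{\rho}(\mathbf P_{\alpha,\boldsymbol\theta}) \le \overline{\rho}(\Lambda \mathbf P_{\alpha,\boldsymbol\theta})$, is immediate: by Theorem \ref{th:1} applied to the sum aggregation, $\mathcal D_n(\mathbf P_{\alpha,\boldsymbol\theta}) \subset \mathcal D_n(\Lambda \mathbf P_{\alpha,\boldsymbol\theta})$, and taking $\sup_\rho$ on both sides preserves the inequality for any risk measure. So the work is entirely in the second inequality.

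For the second inequality, the plan is to recognize, via Proposition \ref{prop:trivial4}(i), that $\mathbf P_{\alpha,\Lambda\boldsymbol\theta} = \Lambda\otimes \mathbf P_{\alpha,\boldsymbol\theta}$, so the claim reads $\overline{\rho}(\Lambda \mathbf P_{\alpha,\boldsymbol\theta}) \le \overline{\rho}(\Lambda\otimes \mathbf P_{\alpha,\boldsymbol\theta})$. Since $\rho$ is monotone, by Proposition \ref{prop:trivial5}(i) it suffices to prove the componentwise stochastic dominance
\[
\Lambda \mathbf P_{\alpha,\boldsymbol\theta} \lst \mathbf P_{\alpha,\Lambda\boldsymbol\theta}.
\]
For each row $\boldsymbol\lambda=(\lambda_1,\dots,\lambda_n)$ of $\Lambda$, writing $\mu = \boldsymbol\lambda\cdot\boldsymbol\theta = \sum_j \lambda_j\theta_j$, this reduces to the scalar inequality
\[
\sum_{j=1}^n \lambda_j P_{\alpha,\theta_j}(x) \;\ge\; P_{\alpha,\mu}(x), \qquad x\in\R,
\]
i.e.\ the mixture cdf dominates the ``mean-scale'' cdf pointwise.

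I would verify this by a short case analysis on $x$. For $x<\mu$ the right side vanishes, so the inequality is trivial. For $x\ge\mu$, splitting the sum according to $S=\{j:\theta_j\le x\}$ and $T=\{j:\theta_j>x\}$ (on which $P_{\alpha,\theta_j}(x)=0$) and rearranging, the inequality becomes
\[
\mu^\alpha \;\ge\; x^\alpha \sum_{j\in T}\lambda_j + \sum_{j\in S}\lambda_j \theta_j^\alpha.
\]
Since $\theta_j^\alpha > x^\alpha$ for $j\in T$, the right-hand side is bounded above by $\sum_j \lambda_j\theta_j^\alpha$, and then the required inequality $\sum_j\lambda_j\theta_j^\alpha \le \mu^\alpha = (\sum_j\lambda_j\theta_j)^\alpha$ is exactly Jensen's inequality applied to the concave map $t\mapsto t^\alpha$ on $(0,\infty)$. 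This is where the assumption $\alpha\in(0,1]$ enters critically: for $\alpha>1$ the map is convex, Jensen's inequality reverses, and indeed the conclusion of the theorem is known to fail in that regime (consistent with the numerical remarks in Section \ref{sec:numerical}).

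The main (mild) obstacle is handling the piecewise nature of Pareto cdfs at each threshold $\theta_j$; the case split above, together with the observation that replacing $x^\alpha$ by $\theta_j^\alpha$ on $T$ only increases the bound, reduces everything to a single application of Jensen. Once the scalar inequality is established for each row, the theorem follows by chaining the two displayed inequalities.
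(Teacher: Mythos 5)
Your proposal is correct and follows essentially the same route as the paper: both reduce the second inequality to the componentwise stochastic dominance $\Lambda \mathbf P_{\alpha,\boldsymbol\theta}\lst \mathbf P_{\alpha,\Lambda\boldsymbol\theta}$ and establish it via Jensen's inequality for the concave map $t\mapsto t^\alpha$ on $(0,\infty)$ when $\alpha\in(0,1]$. Your explicit case split over $S$ and $T$ is just a more hands-on version of the paper's one-line treatment, which drops the positive part via $(\cdot)_+\ge\cdot$ and applies Jensen directly to the convex function $t\mapsto 1-(t/x)^\alpha$.
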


\begin{proof} The first inequality follows directly from  Theorem \ref{th:1}.
     Next we focus on the second inequality.
     Recall that $\Lambda=(\boldsymbol \lambda_1,\dots,\boldsymbol \lambda_n)^\top\in\mathcal Q_n $ and let $\boldsymbol \lambda_j=(\lambda_{j,1},\dots,\lambda_{j,n})$ for $j=1,\dots,n$. For any fixed $j\in \{1,\dots,n\}$, denote the cdf of $(\Lambda \mathbf P_{\alpha,  \boldsymbol \theta})_{j} $ by $F_{j}$, then
     $$F_{j}(x)=\sum_{i=1}^{n}\lambda_{j,i}\left(1 -\left(\frac{\theta_{i}}{x}\right)^{\alpha}\right)_{+},~~x\in\mathbb{R}.$$
     For some fixed $x>0$ and $\alpha\in (0,1]$, define  $g(t):=1 -\left(t/x\right)^{\alpha},~t\geq 0$.
      Note that $g$ is a convex function on $[0,
      \infty)$. Hence
     $$F_{j}(x)\geq \sum_{i=1}^{n}\lambda_{j,i}\left(1 -\left(\frac{\theta_{i}}{x}\right)^{\alpha}\right)\geq 1 -\left(\frac{\sum_{i=1}^{n}\lambda_{j,i}\theta_{i}}{x}\right)^{\alpha}.$$
     This implies  $$F_j(x)\geq G_j(x),~ x\geq 0,$$
      where  $ G_{j}=\left(\mathbf P_{\alpha, \Lambda \boldsymbol \theta}\right)_{j}$.
     As $F_{j}\leq_{\rm st} G_{j}$ for $j=1,\dots,n$ and $\rho$ is monotone, by Proposition \ref{prop:trivial5}(i), we have the second inequality.
\end{proof}

Next, we combine the results of
Theorems \ref{th:scale}-\ref{th:mon} and Propositions  \ref{prop:trivial3}-\ref{prop:trivial4}  with a special focus on
 $\VaR_p$, $p\in (0,1)$.  The proof is straightforward and omitted.

 \begin{proposition}\label{cor:VaR_pareto}
For $p\in (0,1)$, $\boldsymbol \theta =(\theta_1,\dots,\theta_n)\in (0,\infty)^n$,
and  $\Lambda \in \mathcal Q_n$,
\begin{enumerate}[(i)]
\item If $\alpha\in(0,\infty)$,       $\overline{\VaR}_p (\mathbf P_{\alpha,  \boldsymbol \theta} )
     \le  \overline{\VaR}_p (\Lambda \mathbf P_{\alpha,  \boldsymbol \theta} ) ;$

\item If $\alpha\in(0,\infty)$, $\overline{\VaR}_p (\mathbf P_{\alpha,  \boldsymbol \theta} )
\le
\overline{\VaR}_p (\mathbf P_{\alpha, \Lambda \boldsymbol \theta} );$

\item If $\alpha\in(0,1]$,       $\overline{\VaR}_p (\mathbf P_{\alpha,  \boldsymbol \theta} )
     \le  \overline{\VaR}_p (\Lambda \mathbf P_{\alpha,  \boldsymbol \theta} )  \le
     \overline{\VaR}_p (    \mathbf P_{\alpha, \Lambda \boldsymbol \theta} ).$
\end{enumerate}
 \end{proposition}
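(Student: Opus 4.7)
The proof plan is essentially to verify that each of the three claims is just an instance of a result already proved above, possibly combined with the identity $\Lambda \otimes \mathbf P_{\alpha,\boldsymbol\theta} = \mathbf P_{\alpha,\Lambda\boldsymbol\theta}$ from Proposition \ref{prop:trivial4}(i). So there is no new mathematical content to produce; the task is to assemble the right citations in the right order.

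For part (i), I would simply invoke Theorem \ref{th:1}, which gives $\mathcal D_n(\mathbf P_{\alpha,\boldsymbol\theta}) \subset \mathcal D_n(\Lambda \mathbf P_{\alpha,\boldsymbol\theta})$ for any doubly stochastic $\Lambda$. Taking the supremum of $\VaR_p$ over both sets preserves the inclusion, yielding (i). No integrability or tail assumption is needed here, so this works for every $\alpha \in (0,\infty)$.

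For part (ii), the key observation is that $P_{\alpha,\theta}$ has a density $\alpha \theta^\alpha x^{-\alpha-1}$ on $[\theta,\infty)$, which is decreasing; hence $\mathbf P_{\alpha,\boldsymbol\theta} \in \M_D^n$. Theorem \ref{th:scale} then gives $\overline{\VaR}_p(\mathbf P_{\alpha,\boldsymbol\theta}) \le \overline{\VaR}_p(\Lambda \otimes \mathbf P_{\alpha,\boldsymbol\theta})$, and Proposition \ref{prop:trivial4}(i) identifies the right-hand side as $\overline{\VaR}_p(\mathbf P_{\alpha,\Lambda \boldsymbol\theta})$. Note that Theorem \ref{th:scale} only needed a matrix in $\mathcal Q_n$, which is exactly our hypothesis, and that Proposition \ref{prop:trivial4}(i) holds for arbitrary nonnegative matrices, so in particular for $\Lambda \in \mathcal Q_n$.

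For part (iii), since $\alpha \in (0,1]$ we can apply Theorem \ref{th:mon} directly with $\rho = \VaR_p$ (which is monotone on $\mathcal M$), yielding the full chain $\overline{\VaR}_p(\mathbf P_{\alpha,\boldsymbol\theta}) \le \overline{\VaR}_p(\Lambda \mathbf P_{\alpha,\boldsymbol\theta}) \le \overline{\VaR}_p(\mathbf P_{\alpha,\Lambda\boldsymbol\theta})$. Since each step is a one-line appeal to a prior result, there is no real obstacle; the only thing worth double-checking is that the Pareto density is indeed monotone decreasing so that $\M_D^n$ applies in (ii), and that $\VaR_p$ qualifies as a monotone risk measure in the sense used in Theorem \ref{th:mon}, both of which are immediate.
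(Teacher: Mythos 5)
Your proposal is correct and matches the paper's intended argument exactly: the paper omits the proof, describing it as a straightforward combination of Theorem \ref{th:1}/Proposition \ref{prop:trivial3}(i) for part (i), Theorem \ref{th:scale} together with the identity $\Lambda\otimes\mathbf P_{\alpha,\boldsymbol\theta}=\mathbf P_{\alpha,\Lambda\boldsymbol\theta}$ from Proposition \ref{prop:trivial4}(i) for part (ii), and Theorem \ref{th:mon} with the monotone risk measure $\VaR_p$ for part (iii). Your checks that the Pareto densities are decreasing (so $\mathbf P_{\alpha,\boldsymbol\theta}\in\M_D^n$) and that $\VaR_p$ is monotone are exactly the points that need verifying.
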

 Proposition \ref{cor:VaR_pareto} is useful  for the application in Section \ref{sec:7} on multiple hypothesis testing, where $P^r$ follows a Pareto distribution for a p-value $P$ and $r<0$.
 Some further properties of $\overline{\VaR}_p (\mathbf P_{\alpha,  \boldsymbol \theta} )
$ are put in Appendix \ref{app:Pa2}.

\section{Numerical illustration}\label{sec:numerical}

Define a $3\times 3$ doubly stochastic matrix by
\begin{equation}\label{eq:matrix}
\Lambda=0.8 \times I_3 + 0.2 \times \left(\frac{1}{3}\right)_{3 \times 3},
\end{equation}
where $I_3$ is the $3\times 3$ identity matrix.
In this section, we consider a sequence of doubly stochastic matrices $\{\Lambda^k\}_{k\in\mathbb{N}}$
to numerically illustrate the ordering relationships and inequalities obtained throughout the paper. Note that $\Lambda^k$ is more ``homogeneous" as $k$ grows larger, and $\Lambda^k\to (\frac 13)_{3\times3}$ as $k\to\infty$.
The general messages obtained from the numerical examples are listed as follows.

\begin{enumerate}
\item For general marginals,  the value of $\overline{\VaR}$ becomes larger after making a distribution mixture (Proposition \ref{prop:trivial3}(i)); this is shown in all figures.

\item For marginals with monotone densities, with a quantile mixture, the value of $\overline{\VaR}$ becomes larger (Theorem \ref{th:scale}); see Figures \ref{fig:power_pareto}-\ref{fig:power_het_family}. Numerical examples in Figure \ref{fig:power_het} indicate that  Theorem \ref{th:scale} may also  hold for marginals  with non-monotone densities. Nevertheless, the order does not hold for arbitrary marginals. A counterexample, involving discrete marginals, is provided in Figure \ref{fig:power_het_bino10}.

\item For Pareto distributions with infinite mean, the value of $\overline{\VaR}$ of the quantile mixture is larger than that of the distribution mixture (Proposition \ref{cor:VaR_pareto}(iii)); see Figure \ref{fig:power_pareto}(b). This relationship does not hold for Pareto distributions with finite mean;  see Figure \ref{fig:power_pareto}(a).
\end{enumerate}

\subsection{Illustration of theoretical results}

\begin{figure}[htbp]
	\centering
	\subfigure[Pareto distribution with finite mean ($\alpha = 3$)]{
		\label{fig:power_homo}
		\includegraphics[width=0.4\textwidth]{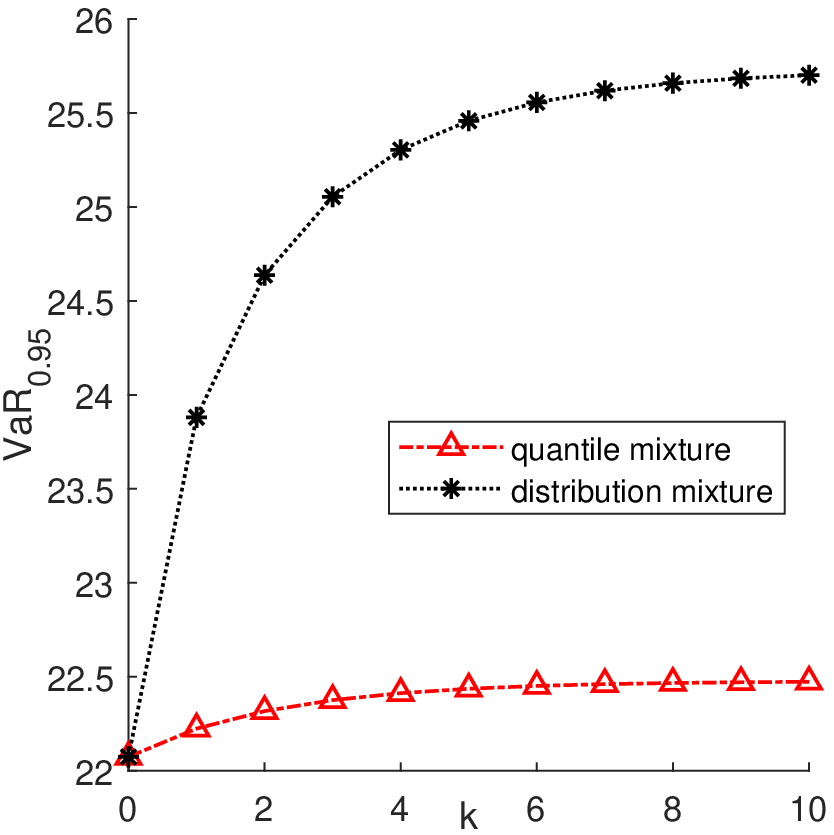}}
	\subfigure[Pareto distribution with infinite mean ($\alpha = 1/3$)]{
		\label{fig:power_homo_infi}
		\includegraphics[width=0.4\textwidth]{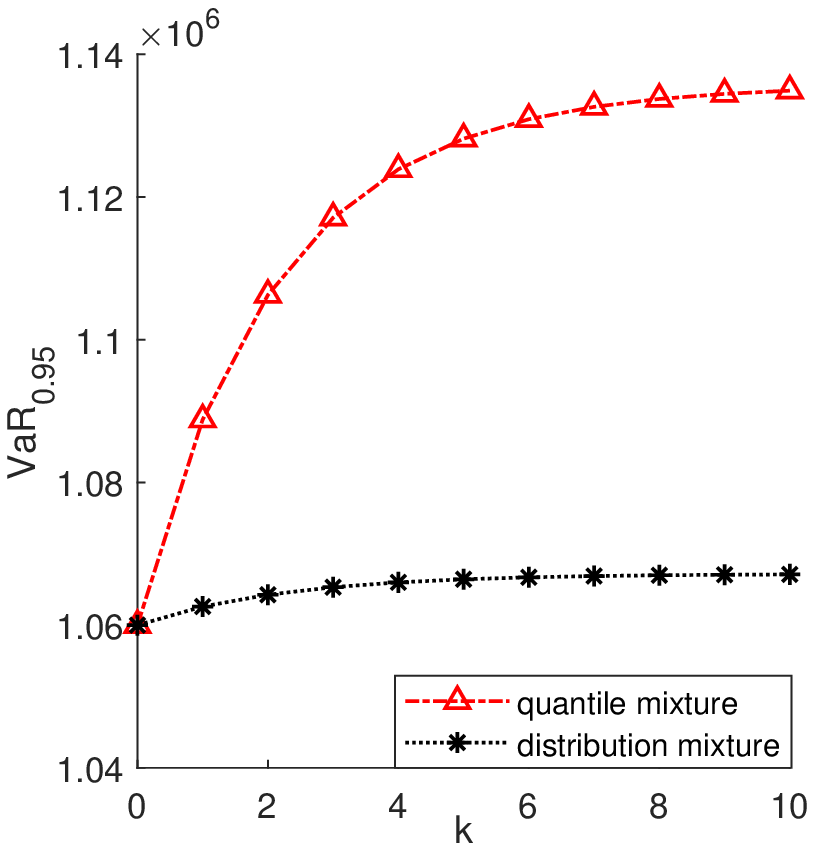}}
	\caption{Quantile mixture: $\overline{\VaR}_{p} (\Lambda^k \otimes \mathbf{P}_{\alpha, \boldsymbol{\theta}}) = \overline{\VaR}_{p}(\mathbf P_{\alpha, \Lambda^k \boldsymbol \theta})$; Distribution mixture: $\overline{\VaR}_{p} (\Lambda^k \mathbf{P}_{\alpha, \boldsymbol{\theta}})$. Setting: $p=0.95$; $\boldsymbol{\theta} = (1, 2, 3)$, $X_i \sim \text{Pareto}(\alpha, \theta_i)$, $i=1,2,3$; $\Lambda$ is defined by \eqref{eq:matrix}; $k = 0, 1, \cdots, 10$. }
	\label{fig:power_pareto}
\end{figure}

In this subsection, we discuss marginals with monotone densities ($\mathbf{F} \in \M_D^n \cup \M_I^n$). We have $\Lambda^k \otimes \mathbf{F}, \Lambda^k  \mathbf{F} \in \M_D^n \cup \M_I^n$. According to Lemma \ref{lem:LW20} in Appendix \ref{app:a1}, we obtain a formula (Equation \eqref{eq:primal}) for $\mathbf{F}$, $\Lambda^k\otimes \mathbf{F}$ and $\Lambda^k \mathbf{F}$ and numerically compute the exact values for $\overline{\VaR}$.

In Figure \ref{fig:power_pareto}, we  consider Pareto distributions with finite mean ($\alpha = 3$) and infinite mean ($\alpha = 1/3$), respectively. The ordering relationships in Proposition \ref{cor:VaR_pareto}(i)-(ii) for Pareto distributions with the same $\alpha$ are visualized  as the curves in Figure \ref{fig:power_pareto} are all increasing in $k$. In Figure \ref{fig:power_pareto}(b), it turns out that for the case with infinite mean the quantile mixture gives larger value of $\overline{\VaR}$ than that given by the distribution mixture. This coincides with the conclusion in Proposition \ref{cor:VaR_pareto}(iii). Interestingly, we observe from Figure \ref{fig:power_pareto}(a) that the value of $\overline{\VaR}$ given by distribution mixture is larger than the one with quantile mixture, which
is contrary to  the case with infinite mean (Figure \ref{fig:power_pareto}(b)). It is an open question whether this conclusion is true for general doubly stochastic matrices $\Lambda$ and all $\alpha>1$.

%\subsubsection{Pareto distributions with different $\alpha$}
We next focus on Pareto distributions with different $\alpha$ in Figure \ref{fig:power_hetero}. First observe that the curves of quantile mixture and distribution mixture in Figure \ref{fig:power_hetero} are both increasing in $k$, which is consistent with Theorem \ref{th:scale} and Proposition \ref{prop:trivial3}(i). %Moreover, it turns out that the ordering relationship in Corollary \ref{item1:th_scale} does not hold for Pareto distributions with different $\alpha$ since the first curve has decreasing part.
Comparing the two curves, it is shown that value for the distribution mixture in this case is smaller than the one for quantile mixture.
\begin{figure}[htbp]
  \centering
  \includegraphics[width=0.7\textwidth]{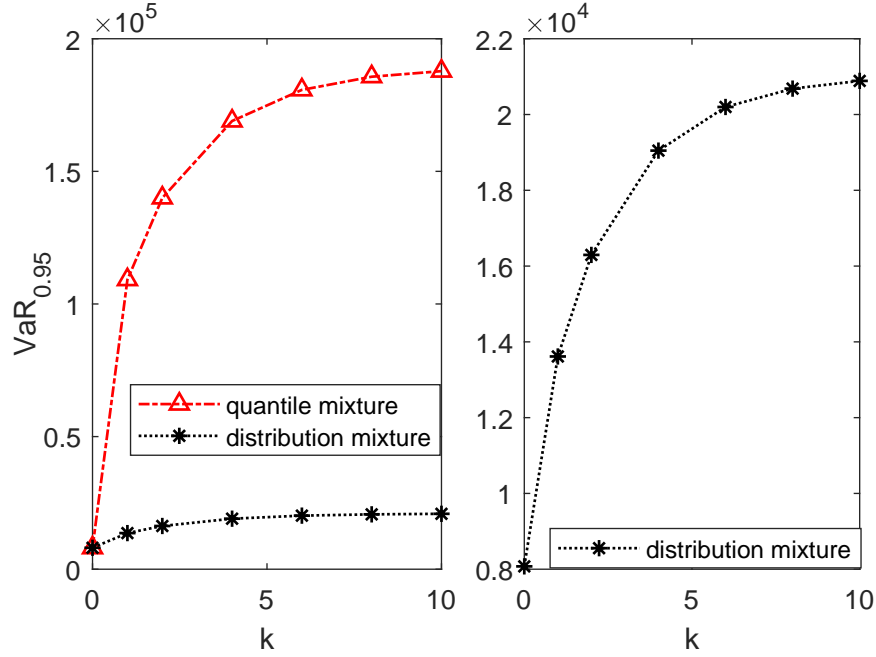}
  \caption{%Parameter mixture: $\overline{\VaR}_{0.01} (\mathbf P_{\boldsymbol \alpha, \Lambda^k \boldsymbol \theta})$;
    Quantile mixture: $\overline{\VaR}_{p} (\Lambda^k \otimes \mathbf{F})$; Distribution mixture: $\overline{\VaR}_{p} (\Lambda^k \mathbf{F})$. Setting: $p=0.95$; $\boldsymbol{\alpha} = (1/3, 4, 5)$, $\boldsymbol{\theta} = (1, 2, 3)$, $X_i \sim \text{Pareto}(\alpha_i, \theta_i)$, $i=1,2,3$; $\Lambda$ is defined by \eqref{eq:matrix}; $k = 0, 1, 2, 4, 6, 8, 10$. The right panel zooms in on the range of the distribution mixture.}
  \label{fig:power_hetero}
\end{figure}

%\subsubsection{Heterogeneous distributions with decreasing densities}
Heterogeneous distribution families with decreasing densities are considered in Figure \ref{fig:power_het_family}.  As we can see, the curves  are both increasing in Figure \ref{fig:power_het_family}, which coincides with the statements in Theorem \ref{th:scale} and Proposition \ref{prop:trivial3}(i). We can also observe that the value for distribution mixture is smaller than the corresponding one for quantile mixture in Figure \ref{fig:power_het_family}, which is the same as it has been shown in Figure \ref{fig:power_hetero}.
\begin{figure}[htbp]
  \centering
  \includegraphics[width=0.6\textwidth]{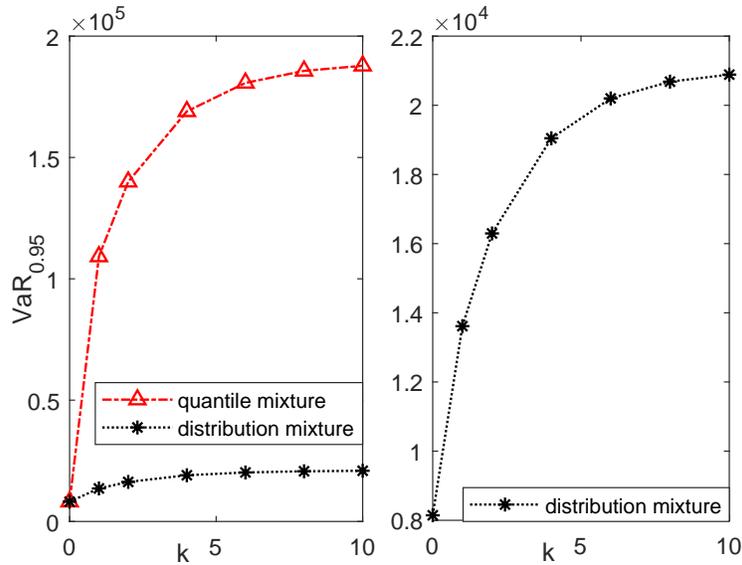}
  \caption{Quantile mixture: $\overline{\VaR}_{p} (\Lambda^k \otimes \mathbf{F})$; Distribution mixture: $\overline{\VaR}_{p} (\Lambda^k \mathbf{F})$.  Setting: $p=0.95$; $X_1 \sim \text{Pareto}(1/3, 1)$, $X_2 \sim \Gamma(1,2)$, $X_3 \sim \text{Weibull}(1,1/2)$; $\Lambda$ is defined by \eqref{eq:matrix}; $k = 0, 1, 2, 4, 6, 8, 10$. The right panel zooms in on the range of the distribution mixture.}
  \label{fig:power_het_family}
\end{figure}

\subsection{Conjectures for general distributions}
Explicit expressions for $\overline{\VaR}_p(\mathbf F)$ are unavailable for general marginal distributions. Fortunately, we can approximate the value of $\overline{\VaR}_p(\mathbf F)$  using the rearrangement algorithm (RA) of \cite{EPR13} and get an upper bound on $\overline{\VaR}_p(\mathbf F)$ using \eqref{eq:primal-app} in Lemma \ref{lem:LW20}. %This approximation is good for continuous distributions; see \cite{LW20}. %However, the approximation is poor when discrete distributions are involved, which is still a challenging problem in the field of quantitative risk management.

%\subsubsection{Continuous distributions with some non-monotone densities}
For distributions with non-monotone densities including Gamma and Weibull, the  curves of both distribution and quantile mixtures in Figure \ref{fig:power_het} are increasing in $k$. The result  on distribution mixture is consistent with  Proposition \ref{prop:trivial3}(i), and the result  on quantile mixture seems  to suggest that the conclusion in Theorem \ref{th:scale} may be valid for more general distributions with non-monotone densities.
This conjectured extension of Theorem \ref{th:scale} would hold if \eqref{eq:primal} holds for more general distributions, which is a difficult question.
%\cite{LW20} numerically show that the lower bound  by   RA and an upper bound by Lemma \ref{lem:LW20} on $\overline{\VaR}_p(\mathbf F)$ are quite close for general continuous distributions. For examples in Figure \ref{fig:power_het}, the difference of two bounds are smaller than $10^{-3}$, which implies that \eqref{eq:primal} can be regarded as the approximating true value of $\overline{\VaR}_p(\mathbf F)$.

%\begin{figure}[htbp]
%	\centering
%	\includegraphics[width=0.5\textwidth]{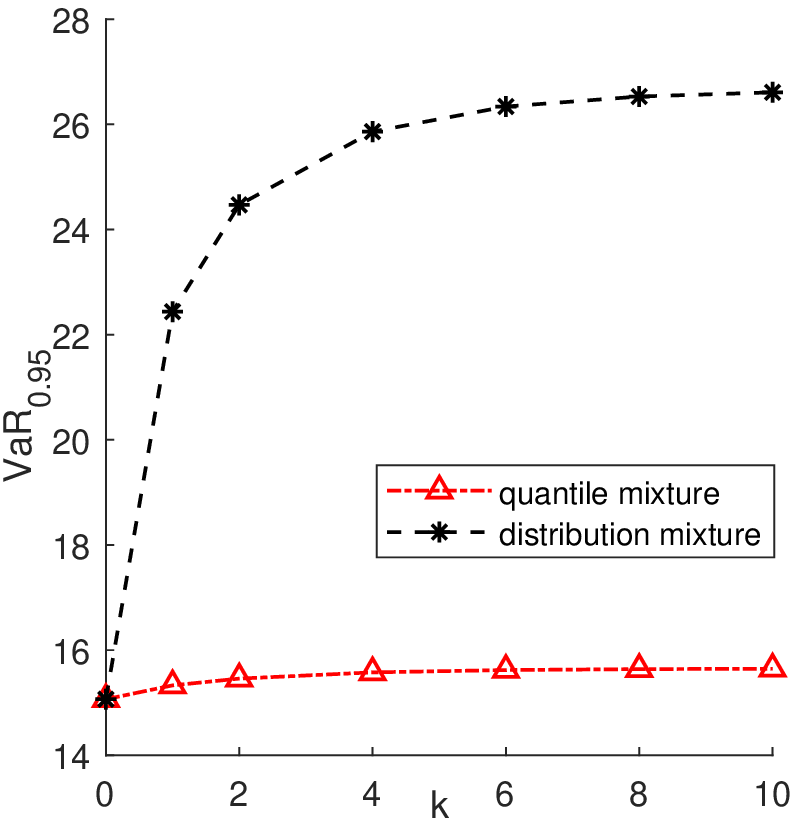}
%	\caption{Quantile mixture: $\overline{\VaR}_{p} (\Lambda^k \otimes \mathbf{F})$; Distribution mixture: $\overline{\VaR}_{p} (\Lambda^k \mathbf{F})$.  Setting: $p=0.01$, $X_1 \sim \text{Pareto}(3, 1)$, $X_2 \sim \Gamma(5,1)$, $X_3 \sim \text{Weibull}(1,5)$ and $\Lambda^k = \big(0.8 \cdot I_3 + 0.2 \cdot (\frac{1}{3})_{3 \times 3}\big)^k$, $k = 0, 1, 2, 4, 6, 8, 10$.}
%	\label{fig:power_het_notdec}
%\end{figure}
%\begin{figure}[htbp]
%	\centering
%	\includegraphics[width=0.5\textwidth]{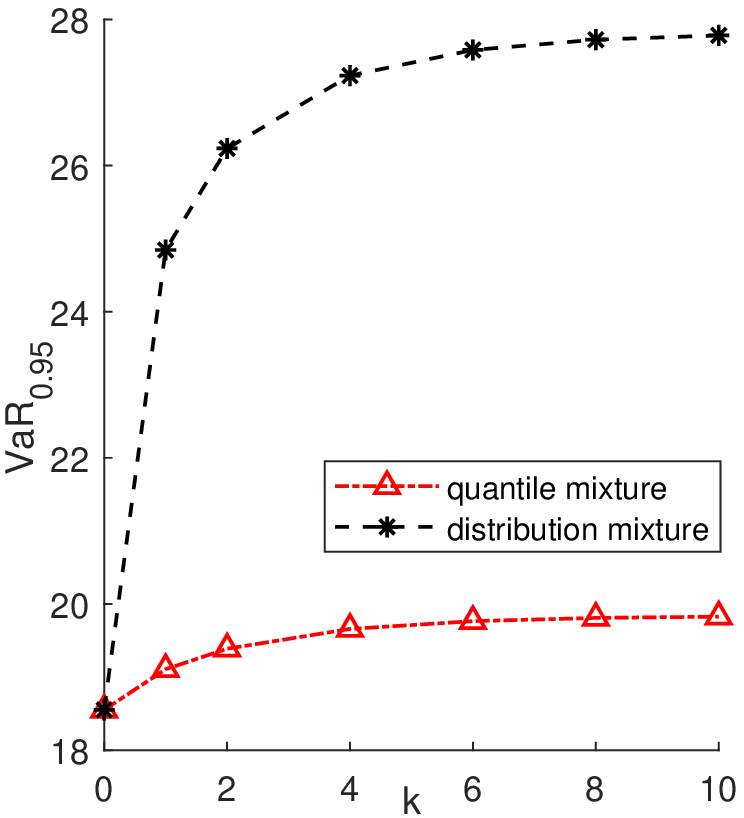}
%	\caption{Quantile mixture: $\overline{\VaR}_{p} (\Lambda^k \otimes \mathbf{F})$; Distribution mixture: $\overline{\VaR}_{p} (\Lambda^k \mathbf{F})$.  Setting: $p=0.01$, $X_1 \sim \text{LogNormal}(0, 1)$, $X_2 \sim \Gamma(5,1)$, $X_3 \sim \text{Weibull}(1,5)$ and $\Lambda^k = \big(0.8 \cdot I_3 + 0.2 \cdot (\frac{1}{3})_{3 \times 3}\big)^k$, $k = 0, 1, 2, 4, 6, 8, 10$.}
%	\label{fig:power_het_logn}
%\end{figure}

\begin{figure}[htbp]


	\centering
	\subfigure[$X_3 \sim \text{Pareto}(3, 1)$]{
		\label{fig:power_het_notdec}
		\includegraphics[width=0.4\textwidth]{power_het_notdec.eps}}
	\subfigure[$X_3 \sim \text{LogNormal}(0, 1)$]{
		\label{fig:power_het_logn}
		\includegraphics[width=0.4\textwidth]{power_het_logn.eps}}
	\caption{Quantile mixture: $\overline{\VaR}_{p} (\Lambda^k \otimes \mathbf{F})$; Distribution mixture: $\overline{\VaR}_{p} (\Lambda^k \mathbf{F})$.  Setting: $p=0.95$; $X_1 \sim \Gamma(5,1)$, $X_2 \sim \text{Weibull}(1,5)$, left panel: $X_3 \sim \text{Pareto}(3, 1)$, right panel: $X_3 \sim \text{LogNormal}(0, 1)$; $\Lambda$ is defined by \eqref{eq:matrix}; $k = 0, 1, 2, 4, 6, 8, 10$. }
	\label{fig:power_het}
\end{figure}

%\subsubsection{Distributions with some being discrete}
 The above observation is no longer true for discrete distributions. We observe in Figure \ref{fig:power_het_bino10} that the curve of the quantile mixture is not increasing at some points (in this example, we have chosen a small $p=0.01$ for illustration). This shows that the claim in Theorem \ref{th:scale} cannot be extended to arbitrary, in particular discrete, distributions.

\begin{figure}[htbp]
	\centering
	\includegraphics[width=0.42\textwidth]{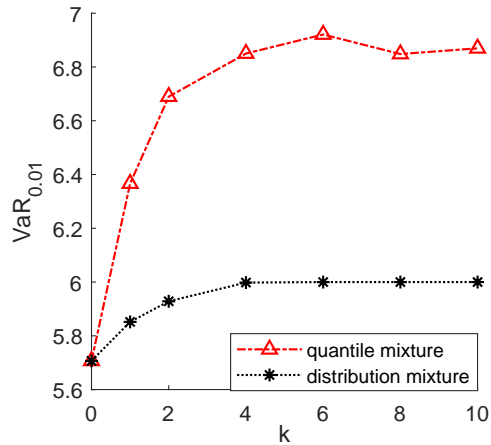}
	\caption{Quantile mixture: $\overline{\VaR}_{p} (\Lambda^k \otimes \mathbf{F})$;  Distribution mixture: $\overline{\VaR}_{p} (\Lambda^k \mathbf{F})$.  Setting: $p=0.01$; $X_1 \sim \text{Binomial}(10, 0.1)$, $X_2 \sim \Gamma(5,1)$, $X_3 \sim \text{Weibull}(1,5)$; $\Lambda$ is defined by \eqref{eq:matrix}; $k = 0, 1, 2, 4, 6, 8, 10$.}
	\label{fig:power_het_bino10}
\end{figure}

\section{Applications}\label{sec:application}
\subsection{ Portfolio diversification with dependence uncertainty}\label{sec:portfolio}

We discuss applications of our results to portfolio diversification in the presence of dependence uncertainty.
In this section, we treat risk measures as  functionals on the space of random variables,
that is, for a random variable $X$ and a risk measure $\rho$, we write $\rho(X)=\rho(F)$ if $X\sim F$.

For tractability, we consider a simple setting where the vector of losses $(X_1,\dots,X_n)$ has identical marginal distributions $F$.
A classic portfolio selection problem is to choose a portfolio position   $\boldsymbol \lambda = (\lambda_1,\dots,\lambda_n)\in \Delta_n$ to minimize
\begin{equation}
\label{eq:r1-rw1}
R_\rho(\boldsymbol \lambda) := \rho \left(\sum_{i=1}^n \lambda_i  X_i\right).
\end{equation}
Alternatively, one may consider an objective which involves both risk and return, such as maximizing
the quantity $
\E [- \sum_{i=1}^n \lambda_i  X_i ]-
\alpha \rho  (\sum_{i=1}^n \lambda_i  X_i )
$
for some $\alpha >0$ (e.g., $\alpha$ may arise as a Lagrangian multiplier);  in our setting, this problem is equivalent to \eqref{eq:r1-rw1} since   $\E [ \sum_{i=1}^n \lambda_i  X_i ]$ is  constant over  $\boldsymbol \lambda \in \Delta_n$.
Intuitively, for two portfolio positions $\boldsymbol \lambda$ and  $\boldsymbol \gamma$, we can say that  $\boldsymbol \gamma $  is more diversified than  $\boldsymbol \lambda$   if $ \boldsymbol  \gamma  \prec \boldsymbol  \lambda $, since in this case $\boldsymbol \gamma$ can be obtained from averaging components of $ \boldsymbol  \lambda$, i.e., $\boldsymbol \gamma = \Lambda \boldsymbol \lambda$ for some  $\Lambda\in \mathcal Q_n$ (see Section \ref{sec:2}).
Due to diversification effect,  one may expect, under the assumption that the marginal distributions of $(X_1,\dots,X_n)$ are identical, 
\begin{equation}
R_\rho(\boldsymbol \gamma)\le R_\rho(\boldsymbol \lambda) \mbox{~~~if $\boldsymbol \gamma $  is more diversified than  $\boldsymbol \lambda$.}\label{eq:r1-rw3}
\end{equation}
Note that $(\frac 1n,\dots,\frac 1n) \prec  \boldsymbol  \lambda \prec (1,0,\dots,0)$
for any portfolio position $\boldsymbol \lambda$, meaning that the most diversified portfolio is the equally weighted one, and the least diversified portfolio is concentrating on a single source of risk.

To compute the value of $R_\rho (\boldsymbol \lambda)$ in \eqref{eq:r1-rw1} requires a full specification of the joint distribution of $(X_1,\dots,X_n)$. In the presence of dependence uncertainty, we may take a worst-case approach by minimizing
\begin{equation}
\label{eq:r1-rw2}
\overline{R}_\rho(\boldsymbol \lambda) :=\sup\left\{ \rho \left(\sum_{i=1}^n \lambda_i  Y_i\right):Y_1,\dots,Y_n\sim F\right\}.
\end{equation}
Under the setting of optimizing  \eqref{eq:r1-rw2}, our intuition is that diversification should not yield any benefit, since the portfolio may not have any diversification effect due to unknown dependence; see \cite{WZ20} for discussions on the absence of diversification effect within the Fundamental Review of the Trading Book from the Basel Committee on Banking Supervision (\cite{BASEL19}). Hence, one may expect, as the marginal distributions are identical, that
\begin{equation}
\overline{R}_\rho(\boldsymbol \gamma)= \overline{R}_\rho(\boldsymbol \lambda)  \mbox{~~~even if $\boldsymbol \gamma $  is more diversified than  $\boldsymbol \lambda$ (in fact, for all $\boldsymbol \gamma $ and $\boldsymbol \lambda $).}\label{eq:r1-rw4}
\end{equation}
A similar observation is made in Proposition 1 of \cite{PP18}, which says that for a subadditive, comonotonic-additive and positively homogeneous risk measure, diversification under dependence uncertainty does not decrease the aggregate risk. These assumptions  on the risk measure are not necessary for our result below.

The next proposition, based on Theorem \ref{th:scale} and Proposition \ref{prop:trivial3}, shows that, under some extra conditions, the two intuitive equations \eqref{eq:r1-rw3} and \eqref{eq:r1-rw4}  hold for   risk measures  consistent with convex order. For VaR, one arrives at  a statement in the reverse direction: the more diversified  portfolio has a larger risk under dependence uncertainty.
\begin{proposition}\label{prop:r1-rw1}
Suppose that $ \boldsymbol  \gamma  \prec \boldsymbol  \lambda $, $(X_1,\dots,X_n)$ has identical marginal distributions $F$ with finite mean, and $\rho$ is a risk measure.
\begin{enumerate}[(i)]
\item If $\rho$ is consistent with convex order and $(X_1,\dots,X_n)$ is exchangeable,\footnote{A random vector $\mathbf X$ is exchangeable if $\mathbf X$ is identically distributed as $\pi(\mathbf X)$ for any permutation $\pi$.} then  $R_\rho(\boldsymbol \gamma)\le R_\rho(\boldsymbol \lambda) $.
\item If $\rho$  is consistent with convex order, then $\overline{R}_\rho(\boldsymbol \gamma)= \overline{R}_\rho(\boldsymbol \lambda)$.
\item If $\rho=\VaR_p$ for some $p\in (0,1)$ and $F\in \M_D\cup\M_I$, then   $\overline{R}_\rho(\boldsymbol \gamma)\ge  \overline{R}_\rho(\boldsymbol \lambda)$.
\end{enumerate}
Moreover, in (i) and (iii), the inequalities are generally not equalities.
\end{proposition}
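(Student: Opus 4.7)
The plan is to handle the three parts separately, using in each case the Birkhoff representation of a doubly stochastic $\Lambda$ with $\boldsymbol\gamma=\Lambda\boldsymbol\lambda$, and reducing the dependence-uncertainty statements to already-established aggregation-set results by identifying $\sum_i\lambda_iY_i$ with a sum of scaled variables $Z_i\sim F^{\lambda_i}$.

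For (i), I would write $\boldsymbol\gamma=\sum_{k=1}^{n!}\alpha_k\Pi_k\boldsymbol\lambda$ via Birkhoff and set $Y_k=\sum_i(\Pi_k\boldsymbol\lambda)_i X_i$, so that $\sum_i\gamma_i X_i=\sum_k\alpha_k Y_k$ almost surely. Since $(X_1,\dots,X_n)$ is exchangeable, each $Y_k$ has the same distribution as $Y:=\sum_i\lambda_i X_i$. A one-line Jensen computation then gives $\E[\phi(\sum_k\alpha_k Y_k)]\le\sum_k\alpha_k\E[\phi(Y_k)]=\E[\phi(Y)]$ for every convex $\phi$, i.e.\ $\sum_i\gamma_iX_i\lcx\sum_i\lambda_iX_i$, and consistency with convex order yields $R_\rho(\boldsymbol\gamma)\le R_\rho(\boldsymbol\lambda)$.

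For (ii) and (iii), the key step is the identification $\overline{R}_\rho(\boldsymbol\lambda)=\overline{\rho}(\mathbf F^{\boldsymbol\lambda})$: setting $Z_i=\lambda_iY_i$ and noting $Z_i\sim F^{\lambda_i}$, the supremum over joint laws with identical marginals $F$ equals the supremum over the aggregation set $\mathcal D_n(\mathbf F^{\boldsymbol\lambda})$. For (ii), since $\rho$ is consistent with convex order, $\overline{\rho}(\mathbf F^{\boldsymbol\lambda})=\rho(F^{\lambda_1}\oplus\cdots\oplus F^{\lambda_n})$, and the comonotonic sum has quantile $\sum_i\lambda_iF^{-1}=F^{-1}$ whenever $\boldsymbol\lambda\in\Delta_n$; thus $\overline{R}_\rho(\boldsymbol\lambda)=\rho(F)$ does not depend on $\boldsymbol\lambda$. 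For (iii), noting that $\boldsymbol\gamma\prec\boldsymbol\lambda$ implies $\mathbf F^{\boldsymbol\gamma}=\Lambda\otimes\mathbf F^{\boldsymbol\lambda}$, and observing that $\mathbf F^{\boldsymbol\lambda}\in\M_D^n\cup\M_I^n$ whenever $F\in\M_D\cup\M_I$, I would apply Theorem \ref{th:scale} (equivalently Corollary \ref{item1:th location} with $\mathbf x=\mathbf y=0$) to conclude $\overline{R}_\rho(\boldsymbol\lambda)\le\overline{R}_\rho(\boldsymbol\gamma)$.

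Finally, for the ``generally not equalities'' comment in (i) and (iii), I would supply brief counterexamples rather than a general result: in (i), take i.i.d.\ $N(0,1)$ losses, $\boldsymbol\lambda=(1,0)$, $\boldsymbol\gamma=(\tfrac12,\tfrac12)$, and $\rho=\ES_p$, which scales with the standard deviation and hence gives a strict inequality; in (iii), I would point to the infinite-mean Pareto experiment of Figure \ref{fig:power_pareto}(b), where the strict increase of $\overline{\VaR}_p$ under more homogeneous $\boldsymbol\theta$ witnesses a strict inequality. The main obstacle, if any, is just being careful with the Jensen step in (i) to use exchangeability (rather than mere identical marginals) in order to conclude that all the $Y_k$ have the same distribution; the rest of the argument reduces almost mechanically to the aggregation-set tools already developed in Sections \ref{sec:2}--\ref{sec:4}.
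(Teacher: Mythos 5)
Your proposal is correct and follows essentially the same route as the paper: both reduce $\overline{R}_\rho(\boldsymbol\lambda)$ to $\overline{\rho}(\mathbf F^{\boldsymbol\lambda})$ and then invoke Proposition \ref{prop:trivial3}(ii) for part (ii) and Theorem \ref{th:scale} (via $\mathbf F^{\boldsymbol\gamma}=\Lambda\otimes\mathbf F^{\boldsymbol\lambda}$) for part (iii), with analogous strictness examples. The only divergences are cosmetic: in (i) you re-derive the needed comparison $\sum_i\gamma_iX_i\lcx\sum_i\lambda_iX_i$ from scratch via Birkhoff plus Jensen (the paper simply cites Theorem 3.A.35 of Shaked and Shanthikumar, of which your argument is the standard proof), and for strictness in (iii) you point to the infinite-mean panel of Figure \ref{fig:power_pareto}, whereas the finite-mean panel (a) is the choice consistent with the proposition's standing assumption that $F$ has finite mean.
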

\begin{proof}
Write $\boldsymbol \gamma =(\gamma_1,\dots,\gamma_n)$
and  $\boldsymbol \lambda =(\lambda_1,\dots,\lambda_n)$.
Take $X\sim F$, and let
 $\mathbf F$ and $\mathbf G$ be the tuples of marginal distributions of $(\gamma_1X,\dots,\gamma_nX)$ and $(\lambda_1X,\dots,\lambda_nX)$, respectively.
Using $ \boldsymbol  \gamma  \prec \boldsymbol  \lambda $,
 there exists $\Lambda\in \mathcal Q_n$ such that
$\mathbf F = \Lambda \otimes \mathbf G$.
Since $X_1,\dots,X_n\sim F$,  $\mathbf F$  and $\mathbf G$ are also tuples of marginal distributions of $(\gamma_1X_1,\dots,\gamma_nX_n)$ and $(\lambda_1X_1,\dots,\lambda_nX_n)$, respectively. Hence, we have
\begin{align}\label{eq:r1-rw5} \overline R_\rho(\boldsymbol \gamma)= \overline{\rho}(\mathbf F)= \overline{\rho}(\Lambda \otimes \mathbf G) \mbox{~~~and~~~}\overline R_\rho(\boldsymbol \lambda)= \overline{\rho}(\mathbf G).
\end{align}
\begin{enumerate}[(i)]
\item As $ \boldsymbol  \gamma  \prec \boldsymbol  \lambda $ and $(X_1,\dots,X_n)$ is exchangeable, by Theorem 3.A.35 of \cite{SS07}, we have $\sum_{i=1}^n \gamma_i X_i \prec_{\rm cx} \sum_{i=1}^n \lambda_i X_i$.
Hence, $R_\rho(\boldsymbol \gamma)\le R_\rho(\boldsymbol \lambda) $.
The inequality is strict when, for instance, $\rho=\ES_{0.5}$, $X_1,\dots,X_n$ are iid normal, $\boldsymbol \gamma =(\frac 1 n,\dots,\frac 1n)$, and $\boldsymbol \lambda =(1,0,\dots,0)$.
\item This follows directly from Proposition \ref{prop:trivial3}(ii) and \eqref{eq:r1-rw5}.
\item  The inequality $\overline{R}_\rho(\boldsymbol \gamma)\ge  \overline{R}_\rho(\boldsymbol \lambda)$ follows directly from  Theorem \ref{th:scale} and \eqref{eq:r1-rw5}.
The inequality is strict in, for instance, the situation of Figure \ref{fig:power_pareto}(a), where $F$ is a Pareto distribution with $\alpha=3$. \qedhere
\end{enumerate}
\end{proof}
We make a few observations from Proposition \ref{prop:r1-rw1}. For  identical marginal distributions in $\M_D$ or $\M_I$, under dependence uncertainty,
 VaR  yields a bigger risk if the portfolio is more diversified. This may be seen as another disadvantage of VaR, which is well known to be problematic regarding diversification.
In contrast, any risk measure  consistent with convex order, such as   ES,
would simply ignore diversification effect in this setting (where diversification benefit is unjustifiable).
Moreover, without dependence uncertainty, for an exchangeable vector of losses, a risk measure  consistent with convex order   rewards diversification,
and there is no such general relationship for VaR.
For the inequality in Proposition \ref{prop:r1-rw1} (iii), it suffices to require the $p$-tail distribution of $F$ to be in  $ \M_D\cup\M_I$; see Remark \ref{rem:r1-rw1}.

\subsection{Merging p-values in hypothesis testing}
 \label{sec:7}

In this subsection, we apply our results to p-merging methods following the setup of \cite{VW19}. A random variable $P$ is a \emph{p-variable} if $\mathbb{P}(P\leq\epsilon)\leq\epsilon$ for all $\epsilon\in(0,1)$, and its realization is called a p-value. In multiple hypothesis testing, one natural problem is to merge individual p-values into one p-value.  More specifically,  with $n$ p-variables $P_{1},\dots,P_{n}$, one needs to choose an increasing Borel function $F:[0,1]^{n}\rightarrow [0,\infty)$ as a \emph{merging function} such that  $F(P_{1},\dots,P_{n})$ is a p-variable.
$F$ is a \emph{precise merging function} if
for each $\epsilon\in(0,1)$,
$\mathbb{P}(F(P_1,\dots,P_n)\leq\epsilon)=\epsilon$ for some p-variables $P_1,\dots,P_n$.

As explained in \cite{VW19}, an advantage of using averaging methods to combine p-values, compared to classic methods on order statistics,
is that we can introduce weights to p-values in an intuitive way.
 Without imposing any dependence assumption on the individual p-variables, an averaging method  uses, for $r\in[-\infty,\infty]$ ($r\in\{-\infty,0,\infty\}$ are interpreted as limits),
 $$F:[0,1]^n\to [0,\infty), ~(p_{1},\dots,p_{n})\mapsto a_{r,\mathbf w}(w_{1}p_{1}^{r}+\dots+w_{n}p_{n}^{r})^{\frac{1}{r}},$$
 as the merging function, where $a_{r,\mathbf w}$ is a constant multiplier  and $\mathbf{w}=(w_{1},\dots,w_{n})\in\Delta_n$. The constant $a_{r,\mathbf w}$ is chosen so that  $F$ is a precise merging function, thus the most powerful choice of the constant multiplier. Let $\mathcal{U}$ be the set of uniform random variables distributed on [0,1]. Lemma 1 in \cite{VW19}  gives
 \begin{equation*}
a_{r,\mathbf{w}} =
\begin{cases}
-\sup\left\{q_{0}(-\sum_{i=1}^{n}w_{i}P_{i}^{r})\mid P_{1},\dots,P_{n}\in\mathcal{U}\right\}^{-1/r}, & r>0;\\
\exp\left(\sup\{q_{0}(\sum_{i=1}^{n}w_{i}\log(1/P_{i}))\mid P_{1},\dots,P_{n}\in\mathcal{U}\}\right),& r = 0;\\
\sup\left\{q_{0}\left(\sum_{i=1}^{n}w_{i}P_{i}^{r}\right)\mid P_{1},\dots,P_{n}\in\mathcal{U}\right\}^{-1/r},&  r<0,
\end{cases}
\end{equation*}
where $q_0: X\mapsto \inf\{x\in \R: \p(X\le x)>0\}$ is the essential infimum.
Clearly,  $a_{r,\mathbf{w}}$ involves calculating $\overline{\VaR}_p(\mathbf F)$ for Pareto, exponential or Beta distributions, and letting $p\downarrow 0$.

 Denote $a_{r,\mathbf w}$ by $a_{r,n}$ where $\mathbf w=(1/n,\dots,1/n)$.
Analytical results for $a_{r,n}$ has been well studied in \cite{VW19} whereas results for  $a_{r,\mathbf w}$ are limited since there are no analytical formulas of $\overline{\VaR}_p(\mathbf F)$ in general for heterogeneous marginal distributions. Although the rearrangement algorithm of \cite{PR12} and \cite{EPR13} can be used to calculate $a_{r,\mathbf w}$ numerically, the calculation burden becomes quite heavy in high-dimensional situation, which is unfortunately very common in multiple hypothesis testing. It turns out that our Theorem \ref{th:scale} is helpful to provide a convenient upper bound on $a_{r,\mathbf w}$.

\begin{proposition}\label{prop:p_value}
For $r\in \R$, we have $a_{r,\mathbf{w}}\leq a_{r,n}.$
\end{proposition}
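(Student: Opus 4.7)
My plan is to interpret $a_{r,\mathbf w}$ as a limit of worst-case VaR on a tuple of monotone-density marginals and then apply Theorem \ref{th:scale}. The key structural observation is that each summand $w_i P_i^r$ (or $w_i \log(1/P_i)$ when $r=0$) is a scale transformation of a fixed base random variable, so its quantile function depends linearly on $w_i$. Writing $\mathbf F_{\mathbf w}$ for the marginal tuple of $(w_1 P_1^r,\dots,w_n P_n^r)$ (respectively of the $\log$-version when $r=0$), this linearity gives $\Lambda \otimes \mathbf F_{\mathbf w} = \mathbf F_{\Lambda \mathbf w}$ for any non-negative matrix $\Lambda$, generalizing Proposition \ref{prop:trivial4}(i) beyond the Pareto family. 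Taking the doubly stochastic matrix $\Lambda = (1/n)_{n\times n} \in \mathcal Q_n$, we have $\Lambda \mathbf w = (1/n,\dots,1/n)$, so $\Lambda \otimes \mathbf F_{\mathbf w} = \mathbf F_{(1/n,\dots,1/n)}$, the tuple underlying $a_{r,n}$.

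Next I would verify that in each regime of $r$ the tuple lies in $\M_D^n\cup\M_I^n$: Pareto marginals for $r<0$ and exponential marginals for $r=0$ lie in $\M_D^n$; for $r>0$ the power-law marginals on $[0,w_i]$ lie in $\M_I^n$ when $0<r<1$ and in $\M_D^n$ when $r\ge 1$. Theorem \ref{th:scale} then yields, for every $p\in(0,1)$,
$$
\overline{\VaR}_p(\mathbf F_{\mathbf w})\le \overline{\VaR}_p(\mathbf F_{(1/n,\dots,1/n)}).
$$
Since $q_0(Y)\le \VaR_p(Y)$ for every random variable $Y$ and every $p>0$, taking the supremum over copulas of $(P_1,\dots,P_n)$ gives $\sup\{q_0(\sum_i w_i P_i^r)\}\le \overline{\VaR}_p(\mathbf F_{(1/n,\dots,1/n)})$ for all $p\in(0,1)$. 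For $r>0$ one runs the same argument on the reflected tuple with marginals $-w_i P_i^r$, whose densities are still monotone after reflection, which is exactly what is needed to bound the $q_0$ of the negated sum appearing in the definition of $a_{r,\mathbf w}$. Finally, raising both sides to the exponent $-1/r$ with the appropriate sign handling (the exponent is positive for $r<0$, while for $r>0$ the leading minus sign in the definition of $a_{r,\mathbf w}$ combines with the negative exponent to preserve the direction) converts the inequality into $a_{r,\mathbf w}\le a_{r,n}$.

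The main technical obstacle is the final step: from $\sup q_0$ for $\mathbf F_{\mathbf w}$ being at most $\inf_{p\in(0,1)}\overline{\VaR}_p(\mathbf F_{(1/n,\dots,1/n)})$ we still need to pass to a bound involving $\sup q_0$ for the homogeneous tuple itself, and in general only the weaker inequality $\sup q_0\le \inf_p\overline{\VaR}_p$ holds. These two quantities coincide in the iid homogeneous setting considered here, however: this equality is exactly what underlies the explicit formulas for $a_{r,n}$ derived in \cite{VW19}, which are established via (approximate) complete-mixability constructions that render the aggregate sum essentially constant so that the worst-case VaR and the supremum of $q_0$ agree in the limit $p\downarrow 0$. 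Invoking this identification on the right-hand side closes the chain of inequalities and yields $a_{r,\mathbf w}\le a_{r,n}$.
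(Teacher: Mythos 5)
Your proof is correct and follows essentially the same route as the paper: both arguments exploit the fact that $w_iP_i^r$ (resp.\ $w_i\log(1/P_i)$) form a scale family with monotone densities, so that the passage from $\mathbf w$ to $(1/n,\dots,1/n)$ is a $\Lambda$-quantile mixture with $\Lambda=(\frac1n)_{n\times n}$, and then invoke Theorem \ref{th:scale} (the paper does this through its corollaries, Proposition \ref{cor:VaR_pareto} for $r<0$ and Corollary \ref{item1:th location} for $r\ge0$) before letting $p\downarrow 0$. Your final paragraph is in fact more careful than the paper's proof, which silently identifies $\lim_{p\downarrow 0}\overline{\VaR}_p$ with the supremum of $q_0$ on the homogeneous side; your appeal to the complete-mixability constructions behind the explicit $a_{r,n}$ in \cite{VW19} is a legitimate way to close that interchange-of-limits step.
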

\begin{proof}
Note that for $r<0$, $P_{i}^{r}$, $i=1,\dots,n$, has a decreasing density,
and  $ (1/n,\dots, 1/n)\prec (w_{1},\dots,w_{n})$ in majorization order.
By letting $p\downarrow 0$ in Proposition \ref{cor:VaR_pareto}, we have
$$\sup\left\{q_{0}\left(\sum_{i=1}^{n}w_{i}P_{i}^{r}\right)\mid P_{1},\dots,P_{n}\in\mathcal{U}\right\}
\le
\sup\left\{q_{0}\left(\sum_{i=1}^{n}\frac 1nP_{i}^{r}\right)\mid P_{1},\dots,P_{n}\in\mathcal{U}\right\}.$$
Therefore $a_{r,\mathbf{w}}\leq a_{r,n}$ for $r<0$. If $ r\ge 0$, the argument can be proved similarly using Corollary \ref{item1:th location}.
%If $0\le r<1$, the argument can be proved similarly using Corollary \ref{item1:th location}. For $r\ge 1$, Proposition 9 in \cite{VW19} shows
%$$a_{r,\mathbf w}=\min\left(r+1,\frac{1}{\overline{\mathbf w}}\right),$$
%where $\overline{\mathbf w}=\max_{i=1,\dots,n}w_{i}$. Noting that $\overline{\mathbf w}\ge 1/n$, we have $a_{r,\mathbf{w}}\leq a_{r,n}$ for $r\ge 1$. This completes the proof.
\end{proof}
The interpretation of Proposition \ref{prop:p_value} is that, when using a weighted p-merging method, one can safely rely on the same coefficient obtained from a symmetric p-merging method. This is particularly convenient when validity of the test is more important than  the quality of an approximation; see \cite{VW19} for more discussions on such applications.

%Note that all possible multipliers making $F$ a p-variable are larger than or equal to $a_{r,\mathbf{w}}$. Moreover, the closer the multiplier is to $a_{r,\mathbf{w}}$, the greater the power is gained for the hypothesis testing. Therefore $a_{r,n}$ can be safely used as a multiplier instead of running the rearrangement algorithm. The size of the hypothesis testing can be properly controlled below the significance level while only a moderate amount of power may be lost.

\section{Some further technical discussions} \label{sec:jm}

\subsection{Location shifts for distribution and quantile mixtures}
\label{app:location}

In this section we discuss the difference between distribution and quantile mixtures when location shifts are applied.
Let $V_x=\{ (x_1,\dots,x_n)\in \R^n: x_1+\dots+x_n=x\}$ for $x\in \R$.
For $\mathbf F\in \mathcal M^n$ and $\mathbf x\in V_x$,     we have the invariance relation
\begin{equation}\mathcal D_n (\mathbf T_{\mathbf x} (\mathbf F))=   T_x (\mathcal D_n (\mathbf F)).\label{eq:cond1}\end{equation}
 The aggregation set of quantile mixture is invariant under location shifts of the marginal distributions,  in sharp contrast to the case of distribution mixture.
 %Let $A_{F_1,F_2}=\frac{1}{2}(F_1+F_2)$ (the ``arithmetic average")
%and $B_{F_1,F_2}$ be a distribution such that its inverse is the average of those of $F_1$ and $F_2$ (the ``geometric average").
For $\mathbf F\in \mathcal M^n$ and $\mathbf x\in V_x$, it holds  that  for $\Lambda \in \mathcal Q_n$,
$$\mathcal D_n ( \Lambda \otimes \mathbf T_{\mathbf x} (\mathbf F)) =T_{x} \left(\mathcal D_n (\Lambda \otimes \mathbf F)\right).$$
That means, $\mathcal D_n ( \Lambda \otimes \mathbf T_{\mathbf x} (\mathbf F))$ is the same for all $\mathbf x\in V_x$.
However, this does not hold for the distribution mixture, that is, generally, $\mathcal D_n ( \Lambda   \mathbf T_{\mathbf x} (\mathbf F))$
is not the same for $\mathbf x\in V_x$, and
$$\mathcal D_n ( \Lambda   \mathbf T_{\mathbf x} (\mathbf F)) \ne T_{x} \left(\mathcal D_n (\Lambda   \mathbf F)\right).$$
In particular, for $x\ne 0$ and $F_1\ne F_2$,
$$\mathcal D_2 \left( \frac 12  (T_x(F_1)+F_2),\frac 12  (T_x(F_1)+F_2)\right) \ne \mathcal D_2 \left( \frac 12  (F_1+T_x(F_2)), \frac 12  (F_1+T_x(F_2)))\right) .$$
The above example shows that distribution mixture and quantile mixtures treat location shifts differently.

Inspired by the above observation, we slightly generalize Theorem \ref{th:1} by including location shifts.
For $\mathbf F\in \mathcal M^n$, we define the set $\mathcal A_n(\mathbf F)$ of averaging and location shifts of $\mathbf F$ as
$$
\mathcal A_n(\mathbf F)=\{\Lambda \mathbf T_{\mathbf x} (\mathbf F): \Lambda \in \mathcal Q_n, ~\mathbf x\in \R^n,~x_1+\dots+x_n=0\},
$$
and denote by $\overline{\mathcal A_n(\mathbf F)}$ the closure of the convex hull of $\mathcal A_n(\mathbf F)$ with respect to weak convergence. It is straightforward to check
$$\overline{\mathcal A_n(\mathbf T_{\mathbf y}(\mathbf F))}=\mathbf T_{\mathbf y}\left(\overline{\mathcal A_n(\mathbf F)}\right),~~~\mathbf y=(y,\dots,y)\in \mathbb{R}^n.$$

\begin{proposition}\label{prop:closure}
For  $\mathbf F\in \mathcal M^n$ and  $\mathbf G\in\overline{\mathcal A_n(\mathbf F)}$, we have
$
\mathcal D_n(\mathbf F)\subset \mathcal D_n (\mathbf G).
$
\end{proposition}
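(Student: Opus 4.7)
The plan is to combine three ingredients: (a) invariance of $\mathcal D_n$ under mean-preserving location shifts, (b) Theorem \ref{th:1} applied to the shifted tuple, and (c) stability of the set inclusion under convex combinations and weak limits.

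\textbf{Step 1 (raw elements of $\mathcal A_n(\mathbf F)$).} Fix $\mathbf H \in \mathcal A_n(\mathbf F)$, so $\mathbf H = \Lambda \mathbf T_{\mathbf x}(\mathbf F)$ for some $\Lambda \in \mathcal Q_n$ and some $\mathbf x = (x_1,\dots,x_n)$ with $\sum_i x_i = 0$. By \eqref{eq:cond1}, $\mathcal D_n(\mathbf T_{\mathbf x}(\mathbf F)) = T_0(\mathcal D_n(\mathbf F)) = \mathcal D_n(\mathbf F)$. Applying Theorem \ref{th:1} to $\mathbf T_{\mathbf x}(\mathbf F)$ yields $\mathcal D_n(\mathbf T_{\mathbf x}(\mathbf F)) \subset \mathcal D_n(\Lambda \mathbf T_{\mathbf x}(\mathbf F)) = \mathcal D_n(\mathbf H)$. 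Thus $\mathcal D_n(\mathbf F) \subset \mathcal D_n(\mathbf H)$ for every $\mathbf H \in \mathcal A_n(\mathbf F)$.

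\textbf{Step 2 (convex hull).} Suppose $\mathbf G = \sum_{k=1}^m \mu_k \mathbf H_k$ with $\mathbf H_k \in \mathcal A_n(\mathbf F)$ and $(\mu_1,\dots,\mu_m) \in \Delta_m$. Iterating Lemma \ref{lem:f}(ii-b) gives
$$\bigcap_{k=1}^m \mathcal D_n(\mathbf H_k) \subset \mathcal D_n\!\left(\sum_{k=1}^m \mu_k \mathbf H_k\right) = \mathcal D_n(\mathbf G).$$
Since Step 1 places $\mathcal D_n(\mathbf F)$ in every $\mathcal D_n(\mathbf H_k)$, we get $\mathcal D_n(\mathbf F) \subset \mathcal D_n(\mathbf G)$ for every $\mathbf G$ in the convex hull of $\mathcal A_n(\mathbf F)$.

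\textbf{Step 3 (weak closure).} Now take $\mathbf G \in \overline{\mathcal A_n(\mathbf F)}$ with a sequence $\mathbf G^{(k)}$ in the convex hull converging componentwise weakly to $\mathbf G$, and fix an arbitrary $H \in \mathcal D_n(\mathbf F)$. By Step 2, for every $k$ there exists a random vector $(X_1^{(k)},\dots,X_n^{(k)})$ on $(\Omega,\mathcal F,\p)$ with $X_i^{(k)} \sim G_i^{(k)}$ and $X_1^{(k)}+\dots+X_n^{(k)} \sim H$. Since each coordinate marginal sequence $\{G_i^{(k)}\}_k$ is weakly convergent and hence tight, the joint laws $\{\p \circ (X_1^{(k)},\dots,X_n^{(k)})^{-1}\}_k$ form a tight family on $\R^n$. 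Extract a subsequence converging weakly to some law $\nu$ on $\R^n$; its marginals are $G_1,\dots,G_n$, and its pushforward under the summation map is the weak limit of the laws of $X_1^{(k)}+\dots+X_n^{(k)}$, namely $H$. Because $(\Omega,\mathcal F,\p)$ is atomless, we can realize $\nu$ as the law of some $(Y_1,\dots,Y_n)$ on this space, and then $Y_1+\dots+Y_n \sim H$ with $Y_i \sim G_i$. Hence $H \in \mathcal D_n(\mathbf G)$, which proves $\mathcal D_n(\mathbf F) \subset \mathcal D_n(\mathbf G)$.

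The main obstacle is Step 3: one has to ensure that the property ``$H \in \mathcal D_n(\mathbf G^{(k)})$ for all $k$'' survives the weak limit in $\mathbf G^{(k)}$. The required tightness of the joint laws follows from tightness of each marginal sequence, and weak continuity of the summation map delivers the conclusion; the atomlessness of $(\Omega,\mathcal F,\p)$ lets us realize the limiting joint law on the underlying space.
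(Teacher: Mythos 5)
Your proof is correct and follows essentially the same three-step route as the paper: invariance under mean-zero shifts plus Theorem \ref{th:1} for elements of $\mathcal A_n(\mathbf F)$, Lemma \ref{lem:f}(ii-b) for the convex hull, and passage to the weak closure. The only difference is that in the last step the paper simply cites the compactness property of aggregation sets (Theorem 2.1(vii-b) of \cite{BJW14}), whereas you reprove it directly via tightness of the joint laws and the continuous mapping theorem; your argument is a valid self-contained substitute for that citation.
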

\begin{proof}
First, by Theorem \ref{th:1} and \eqref{eq:cond1},
$\mathcal D_n(\mathbf F)\subset \mathcal D_n (\mathbf G)$ for each $\mathbf G \in \mathcal A_n(\mathbf F)$.
Denote by $\mathrm{cx}(\mathcal A_n(\mathbf F))$ the convex hull of $\mathcal A_n(\mathbf F)$.
By Lemma \ref{lem:f}(ii-b),
for each $\mathbf G\in \mathrm{cx}(\mathcal A_n(\mathbf F))$,  we have $\mathcal D_n(\mathbf F)\subset \mathcal D_n (\mathbf G)$.
Take $\mathbf G\in
\overline{\mathcal A_n(\mathbf F)}$, and write
it as the limit of $\{\mathbf G_k\}_{k=1}^\infty\subset \mathrm{cx}(\mathcal A_n(\mathbf F))$.
It follows that for any $F\in \mathcal D_n(\mathbf F)$,
$F$ is also in $\mathcal D_n( \mathbf G_k)$.
This implies $F$ is also in $\mathcal D_n( \mathbf G)$ by the compactness property in Theorem 2.1(vii-b) of \cite{BJW14}.
\end{proof}

\subsection{Connection to joint mixability}

Joint mixability  (\cite{WPY13} and \cite{WW16}) is a central concept in the study of risk aggregation with dependence uncertainty, and analytical results are quite limited. In this section, we study the implication of our results on conditions for joint mixability. We denote by  $\delta_x$ the point mass at $x\in \R$.

\begin{definition}[Joint mixability]\label{def1}
  An $n$-tuple of distributions $\mathbf F\in \mathcal M^n$ is \emph{jointly mixable} (JM) if $\mathcal D_n(\mathbf F)$ contains a point mass distribution $\delta_x$, where $x\in \R$ is called a \emph{center} of $\mathbf F$.
\end{definition}

Example \ref{ex:trivial1} implies a conclusion on the joint mixability  of Bernoulli distributions.
\begin{proposition}\label{prop:bernoulli}
  For $p_1,\dots,p_n\in [0,1]$,
  $(B_{p_1},\dots,B_{p_n})$ is jointly mixable if and only if $\sum_{i=1}^n p_i$ is an integer.
\end{proposition}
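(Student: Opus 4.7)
The plan is to split the equivalence into its two directions, with necessity being essentially immediate from integer-valuedness, and sufficiency following from the inclusion in Example \ref{ex:trivial1} applied to the maximal-in-majorization Bernoulli tuple.

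\textbf{Necessity.} Suppose $(B_{p_1},\dots,B_{p_n})$ is jointly mixable with center $x$, so there exist $X_i \sim B_{p_i}$ with $X_1+\dots+X_n = x$ almost surely. Taking expectations forces $x = \sum_{i=1}^n p_i$. On the other hand, each $X_i$ takes values in $\{0,1\}$, so the sum takes values in $\{0,1,\dots,n\}$. Hence $x$, and therefore $\sum_i p_i$, must be an integer.

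\textbf{Sufficiency.} Assume $m := \sum_{i=1}^n p_i$ is an integer, necessarily lying in $\{0,1,\dots,n\}$ since each $p_i \in [0,1]$. Define the tuple
\[
\mathbf q = (\,\underbrace{1,\dots,1}_{m},\underbrace{0,\dots,0}_{n-m}\,).
\]
The key observation is that $\mathbf q$ majorizes every vector in $[0,1]^n$ with the same coordinate sum $m$; indeed, for the decreasing rearrangement one has $\sum_{i=1}^k q_{[i]} = \min(k,m)$, while $\sum_{i=1}^k p_{[i]} \leq \min(k,m)$ because each summand is at most $1$ and the total is $m$. Thus $\mathbf p \prec \mathbf q$, and by the standard characterization (Section 1.A.3 of \cite{MOA11}) there exists $\Lambda \in \mathcal Q_n$ with $\mathbf p = \Lambda \mathbf q$. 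Since mixtures of Bernoulli distributions remain Bernoulli with the averaged parameter, this gives $\Lambda(B_{q_1},\dots,B_{q_n}) = (B_{p_1},\dots,B_{p_n})$.

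Applying Theorem \ref{th:1} (as spelled out in Example \ref{ex:trivial1}) yields
\[
\mathcal D_n(B_{q_1},\dots,B_{q_n}) \subset \mathcal D_n(B_{p_1},\dots,B_{p_n}).
\]
But $(B_{q_1},\dots,B_{q_n}) = (\delta_1,\dots,\delta_1,\delta_0,\dots,\delta_0)$ is trivially jointly mixable: the sum of any such random variables is the deterministic constant $m$, so $\delta_m \in \mathcal D_n(B_{q_1},\dots,B_{q_n})$. The inclusion above then places $\delta_m$ in $\mathcal D_n(B_{p_1},\dots,B_{p_n})$, proving joint mixability with center $m$.

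There is no real obstacle here: the only thing one must be careful about is the direction of majorization. The natural impulse is to average toward $(m/n,\dots,m/n)$, but Theorem \ref{th:1} gives set inclusion in the direction of the more homogeneous tuple, i.e. $\mathcal D_n(\mathbf F) \subset \mathcal D_n(\Lambda \mathbf F)$; to transfer a point mass \emph{into} $\mathcal D_n(\mathbf p)$, we must start from a tuple that $\mathbf p$ is a mixture of, which is why the extreme tuple $(1,\dots,1,0,\dots,0)$ is the right choice.
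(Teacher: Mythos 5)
Your proof is correct and follows essentially the same route as the paper: necessity from integer-valuedness of the sum of Bernoulli variables, and sufficiency by noting $\mathbf p \prec (1,\dots,1,0,\dots,0)$ and invoking the set inclusion of Theorem \ref{th:1} via Example \ref{ex:trivial1}. Your explicit verification of the majorization and of the direction of the inclusion only spells out what the paper leaves to the reader.
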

\begin{proof}
  The ``only-if" part is trivial since the sum of Bernoulli random variables takes value in integers.
  To show the ``if" part, let  $k=\sum_{i=1}^n p_i$ and  $\mathbf 1_k\in \{0,1\}^n$ be a vector whose first $k$ entries are 1 and the remaining entries are 0.
  It is clear that $\mathbf p\prec \mathbf 1_k$ (see Section 1.A.3 of \cite{MOA11}).
  Hence, from Example \ref{ex:trivial1}, $$\{\delta_{k}\}= \mathcal D_n( \underbrace{B_{1},\dots,B_1}_{k},\underbrace{B_0,\dots,B_0}_{n-k})\subset \mathcal D_n(B_{p_1},\dots,B_{p_n}).$$
  Therefore  $(B_{p_1},\dots,B_{p_n})$ is jointly mixable.
\end{proof}

The set $ \overline{\mathcal A_n(\mathbf F)}$ % in Section \ref{app:location}
can also be used to obtain    joint mixability of some tuples of distributions. In particular, we shall see in the following proposition that
$ \overline{\mathcal A_n(\delta_0,\dots,\delta_0)}$ is the set of all jointly mixable tuples with center $0$.
\begin{proposition}\label{th:2}
  For
  $\mathbf G\in \mathcal M^n$, the following statements are equivalent.
  \begin{enumerate}[(i)]
    \item $\mathbf G$ is jointly mixable.
    \item $\mathbf G\in \overline{\mathcal A_n(\delta_c,\dots,\delta_c)}$ for some $c\in \R$.
    \item $\mathbf G\in \overline{\mathcal A_n(\mathbf F)}$ for some $\mathbf F\in \mathcal M^n$ which is jointly mixable.
  \end{enumerate}
\end{proposition}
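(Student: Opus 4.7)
The plan is to prove the cycle (ii) $\Rightarrow$ (iii) $\Rightarrow$ (i) $\Rightarrow$ (ii). The first two implications are short: (ii) $\Rightarrow$ (iii) is immediate because the tuple $(\delta_c,\dots,\delta_c)$ is trivially jointly mixable (its sum equals the constant $nc$), so one may take $\mathbf F=(\delta_c,\dots,\delta_c)$. For (iii) $\Rightarrow$ (i), Proposition \ref{prop:closure} gives $\mathcal D_n(\mathbf F)\subset \mathcal D_n(\mathbf G)$, and any point mass witnessing the joint mixability of $\mathbf F$ then also lies in $\mathcal D_n(\mathbf G)$, so $\mathbf G$ is jointly mixable.

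The substantive implication is (i) $\Rightarrow$ (ii). Suppose $\mathbf G$ is jointly mixable with center $x$ and set $c=x/n$. Fix a coupling $(X_1,\dots,X_n)$ with $X_i\sim G_i$ and $X_1+\dots+X_n=x$ a.s. The first key observation is that every point $\mathbf z=(z_1,\dots,z_n)$ on the hyperplane $V_x$ yields an element of $\mathcal A_n(\delta_c,\dots,\delta_c)$: with $\Lambda=I_n$ and shift vector $\mathbf y=(z_1-c,\dots,z_n-c)\in V_0$, one has $\mathbf T_{\mathbf y}(\delta_c,\dots,\delta_c)=(\delta_{z_1},\dots,\delta_{z_n})$. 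Since the coupling forces $(X_1,\dots,X_n)\in V_x$ almost surely, every realization of the coupling produces such a tuple of Dirac masses inside $\mathcal A_n(\delta_c,\dots,\delta_c)$.

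The second step is to pass from pointwise realizations to the tuple of marginals $\mathbf G$ itself via empirical measures. Let $\mu$ denote the joint law of $(X_1,\dots,X_n)$, supported on $V_x$, and sample iid $\mathbf z^{(1)},\mathbf z^{(2)},\dots\sim \mu$. Form the empirical measures $\mu_N=\frac{1}{N}\sum_{k=1}^N \delta_{\mathbf z^{(k)}}$. By the Varadarajan theorem (Glivenko--Cantelli in $\R^n$), almost surely $\mu_N\to \mu$ weakly; in particular, along a full-measure event, the marginals satisfy $\hat G_i^N := \frac{1}{N}\sum_{k=1}^N \delta_{z_i^{(k)}}\to G_i$ weakly for each $i$. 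Now for each $N$ the tuple $(\hat G_1^N,\dots,\hat G_n^N)$ equals the uniform convex combination $\frac{1}{N}\sum_{k=1}^N (\delta_{z_1^{(k)}},\dots,\delta_{z_n^{(k)}})$, each summand lying in $\mathcal A_n(\delta_c,\dots,\delta_c)$ by the first step. Hence $(\hat G_1^N,\dots,\hat G_n^N)\in \mathrm{cx}(\mathcal A_n(\delta_c,\dots,\delta_c))$, and taking the weak limit along a good realization yields $\mathbf G\in \overline{\mathcal A_n(\delta_c,\dots,\delta_c)}$.

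The main obstacle is step two: one must approximate $\mathbf G$ by a sequence of discrete tuples for which all $n$ marginals are controlled simultaneously and the hyperplane constraint $\sum_i z_i=x$ is preserved at every stage (otherwise the approximants fall outside $\mathcal A_n(\delta_c,\dots,\delta_c)$). Using the empirical measure of the joint distribution, rather than independently discretising each marginal, is what makes this work; the Varadarajan-type Glivenko--Cantelli theorem supplies the required simultaneous weak convergence of marginals along a single coupled sequence.
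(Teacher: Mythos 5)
Your proof is correct and follows essentially the same route as the paper: all three implications match, and the heart of (i)$\Rightarrow$(ii) is the same representation of $\mathbf G$ as a mixture of shifted Dirac tuples $\mathbf T_{\mathbf y}(\delta_c,\dots,\delta_c)$ indexed by points $\mathbf y$ of the hyperplane carrying the mixing coupling. The only difference is that the paper stops at the integral (mixture) representation and asserts membership in $\overline{\mathcal A_n(\delta_0,\dots,\delta_0)}$ directly, whereas your empirical-measure/Varadarajan step explicitly produces finite convex combinations converging weakly to $\mathbf G$, which is a legitimate and slightly more careful justification of that final passage.
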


\begin{proof}
  (ii)$\Rightarrow$(iii) is trivial. (iii)$\Rightarrow$(i):
  Suppose that $\mathbf G\in \overline{\mathcal A_n(\mathbf F)}$ and   $\mathbf F$ is jointly mixable with center $x\in \R$.
  By Proposition \ref{prop:closure}, we have
  $\{\delta_{x}\}\subset \mathcal D_n(\mathbf F) \subset \mathcal D_n(\mathbf G).$
  This shows $\mathbf G$ is jointly mixable.
  Next, we show (i)$\Rightarrow$(ii).
  Suppose that $\mathbf G$ is jointly mixable, and without loss of generality we can assume it has center $0$. By definition, there exists a random vector $\mathbf X=(X_1,\dots,X_n)$ such that $X_i\sim G_i$ and $X_1+\dots+X_n=0$.
  Denote by $H$ the distribution measure of $\mathbf X$.
  For $A \in \mathcal B(\R)$ and $i=1,2,\dots, n$,
  $$
  G_i(A)=\p(X_i\in A)=\int_{\R^n} \p(X_i\in A|\mathbf X=\mathbf y) H(\d  \mathbf y)=\int_{\R^n} \delta_{y_i}(A)   H(\d \mathbf y),
  $$
  and as a consequence,
  $$\mathbf G(A)=(G_1(A),\dots,G_n(A))= \int_{\R^n} (\delta_{y_1}(A),\dots,\delta_{y_n}(A))  H( \d \mathbf y).$$
  Noting that $H$ is supported in $V_0=\{(y_1,\dots,y_n)\in \R^n:y_1+\dots+y_n=0\}$, we have
  $$
  \mathbf G(A)= \int_{V_0} (\delta_{y_1}(A),\dots,\delta_{y_n}(A))  H( \d \mathbf y)  =  \int_{V_0}\mathbf T_{\mathbf y} (\delta_0(A),\dots,\delta_0(A)) H(\d \mathbf y).
  $$
  Hence, we conclude that $ \mathbf G \in  \overline{\mathcal A_n(\delta_0,\dots,\delta_0)}. $
\end{proof}

The set $\overline{\mathcal A_n(\delta_c,\dots,\delta_c)}$ is quite rich and cannot be analytically characterized.
The simple example of uniform distributions might be helpful to understand Proposition \ref{th:2}.
Suppose that $F_i=\mathrm{U}[0,a_i]$, $a_i>0$, $i=1,\dots,n$, and $\sum_{i=1}^n a_i \ge 2 \bigvee_{i=1}^n a_i$.
By Theorem 3.1 of \cite{WW16}, we know that $\mathbf F$ is jointly mixable.
Then, Proposition \ref{th:2} implies that every   tuple in the set
$\overline{\mathcal A_n(\mathbf F)}$ is jointly mixable.

It remains an open question whether it is possible to characterize the set  $\overline{\mathcal A_n(\mathbf F)}$
for uniform random variables. This would lead to many classes of jointly mixable distributions including those with monotone densities and symmetric densities; see \cite{WW16}.

\section{Concluding remarks and open questions}\label{sec:conc}
%\subsection{Conclusion}
This paper studies the ordering relationship for aggregation sets where the marginal distributions for different sets are connected by either a distribution mixture  or a  quantile mixture. For general marginal distributions, the aggregation set becomes larger after making a distribution mixture  on the marginal risks, whereas the aggregation sets are not necessarily comparable in general by a quantile mixture on the marginal risks. Nevertheless, we obtain several useful results especially on the comparison of VaR aggregation, which has applications in and outside financial risk management.

Although   the marginal distributions are assumed known in our main setting,   this assumption is not essential for the interpretation of our results in practical situations.
In case  both marginal uncertainty and dependence uncertainty are present, our results can be directly applied to obtain  ordering relationships, as we explain below. Suppose that $\Lambda\in \mathcal Q_n$ and  $\mathcal F\subset \mathcal M^n$ is a set of possible marginal  models, representing   uncertainty on the marginal distributions.
In this case, the set of all possible distributions of aggregate risk is $\bigcup_{\mathbf F\in \mathcal F}  \D_n(\mathbf F)$,
and the worst-case value of a risk measure $\rho$ is
$
  \sup\{\rho(G): G\in \D_n(\mathbf F), ~\mathbf F\in \mathcal F\} =\sup_{\mathbf F\in \mathcal F} \overline{\rho} (\mathbf F ) .
$
Using Theorem \ref{th:1}, Proposition \ref{prop:trivial3} and Theorem \ref{th:scale}, we have
$$
\bigcup_{\mathbf F\in \mathcal F}  \D_n(\mathbf F)\subset \bigcup_{\mathbf F\in \mathcal F}   \D_n(\Lambda   \mathbf F),
\mbox{~~~~}
\sup_{\mathbf F\in \mathcal F} \overline{\rho} (\mathbf F )   \le \sup_{\mathbf F\in \mathcal F}  \overline{\rho} (\Lambda  \mathbf{F}),
$$
and, if $ \mathcal F \subset \M_D^n \cup \M_I^n$,
  $$\sup_{\mathbf F\in \mathcal F} \overline{\VaR}_p (\mathbf F )
  \le
 \sup_{\mathbf F\in \mathcal F}  \overline{\VaR}_p (\Lambda \otimes \mathbf{F}).
  $$
  Thus, our results on  set inclusion and   risk measure inequalities   remain valid in the presence of marginal uncertainty.
 %Moreover, if uncertainty on the matrix $\Lambda$ is present, it will further enlarge
%  the set $\bigcup_{\mathbf F\in \mathcal F}   \D_n(\Lambda   \mathbf F) $, leading to similar conclusions.

Many questions on quantile mixtures are still open, and we conclude the paper with four of them.
The first question concerns whether $\D_n(\mathbf F)\subset \D_n(\Lambda \otimes \mathbf F)$ holds for cases other than the uniform distributions in Proposition \ref{prop:uniform}.
As we have seen from Example \ref{ex:1}, for $\mathbf F\in \mathcal M^n$ and $\Lambda \in \mathcal Q_n$,
$\D_n(\mathbf F)$ and $\D_n(\Lambda \otimes \mathbf F)$
are generally not comparable.
It remains open whether  $\D_n(\mathbf F) \subset \D_n(\Lambda \otimes \mathbf F)$
under some conditions.
For instance,  Proposition \ref{prop:uniform} requires $n\ge 3$ and $\Lambda$ being a constant times the identity, to use the characterization of $\D_n(\mathbf F)$ from \cite{MWW19}.
It remains unclear whether the same conclusion holds for $n=2$ or other choices of $\Lambda$.

The second   question concerns decreasing densities (or increasing densities).
A concrete conjecture is presented below, which is inspired by  Theorem \ref{th:scale}.
It is unclear how to formulate natural classes of distributions other  than $\mathcal M_D$ (or $\mathcal M_I$)  such that  similar statements can be expected.

%Next, inspiring by Theorem \ref{th:scale}, we further ask some ordering questions for distributions with decreasing densities (or increasing densities).%, which are relatively well studied in the literature of risk aggregation under dependence uncertainty.

%
%\begin{conjecture}\label{conj1}
%
%\end{conjecture}
%
%\begin{conjecture}\label{conj2}
%
%\end{conjecture}
%
%\begin{conjecture}\label{conj3}
%
%\end{conjecture}

\begin{conjecture}\label{conj4}
  For   $\Lambda \in \mathcal Q_n$ and $\mathbf F \in \mathcal M_D^n$, we have
  $   \D_n(\mathbf F) \subset \D_n( \Lambda \otimes \mathbf F)$.
  Weaker versions of this conjecture are:
  \begin{enumerate}[(i)]
    \item For  $F \in \mathcal M_D$, and $\boldsymbol \lambda,\boldsymbol \gamma \in \R_+^n$, if
    $ \boldsymbol  \gamma\prec \boldsymbol \lambda$, then   $   \D_n(F^{\lambda_1},\dots,F^{\lambda_n}) \subset \D_n(F^{\gamma_1},\dots,F^{\gamma_n})$.
    \item For $F_1,\dots,F_n \in \mathcal M_D$, $\D_n(F_1,\dots,F_n)\subset \D_n(F,\dots,F)$ where $F^{-1}=\frac{1}n \sum_{i=1}^n F^{-1}_i$.
    \item  For $F  \in \mathcal M_D$ and $(\lambda_1,\dots,\lambda_n)\in \Delta_n$, $\D_n(F^{n\lambda_1},\dots,F^{n\lambda_n})\subset \D_n(F ,\dots,F )$.
  \end{enumerate}
\end{conjecture}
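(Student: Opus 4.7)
My plan is to reduce Conjecture \ref{conj4} to a two-dimensional statement via Birkhoff plus a pathwise coupling trick, and then attack the 2D problem using the convexity of quantile functions that is forced by the monotone-density hypothesis.

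Reduction to T-transforms and to $n=2$. By the classical decomposition theorem (Section 2.B of \cite{MOA11}), every $\Lambda\in\mathcal Q_n$ is a finite product of T-transforms $T=(1-\mu)I+\mu\Pi_{ij}$ with $\mu\in[0,1]$. Quantile mixtures compose, $(\Lambda_1\Lambda_2)\otimes\mathbf F=\Lambda_1\otimes(\Lambda_2\otimes\mathbf F)$, and a convex combination of convex quantile functions is convex, so $\Lambda\otimes\mathbf F\in\mathcal M_D^n$ whenever $\mathbf F\in\mathcal M_D^n$; induction on the number of T-transforms therefore reduces the conjecture to the case of a single $T$ acting on positions $(i,j)$. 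Given $H\in\mathcal D_n(\mathbf F)$, realize it as the distribution of $X_1+\cdots+X_n$ with $X_k\sim F_k$, and set $V=X_i+X_j$. If one can construct $(Y_i,Y_j)$ with marginals $F_i'$, $F_j'$ and $Y_i+Y_j=V$ pathwise, then keeping $Y_k=X_k$ for $k\neq i,j$ yields $\sum Y_k=\sum X_k\sim H$ pathwise; such $(Y_i,Y_j)$ exists exactly when $\mathcal L(V)\in\mathcal D_2(F_i',F_j')$, obtained by disintegrating along the sum. Hence the problem reduces to the two-variable statement $\mathcal D_2(F,G)\subset \mathcal D_2(F',G')$, where $F,G\in\mathcal M_D$, $(F')^{-1}=(1-\mu)F^{-1}+\mu G^{-1}$, and $(G')^{-1}=\mu F^{-1}+(1-\mu)G^{-1}$.

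The core 2D problem and the proposed coupling. The key algebraic identity is $(F')^{-1}(u)+(G')^{-1}(u)=F^{-1}(u)+G^{-1}(u)$, which immediately settles the comonotonic extreme $F\oplus G$ and, symmetrically, the antimonotonic one. For a general element $M\in\mathcal D_2(F,G)$, write $M$ as the law of $F^{-1}(U)+G^{-1}(V)$ for a copula $(U,V)$. A natural candidate is $(Y_1,Y_2)=((F')^{-1}(U),(G')^{-1}(V))$ with probability $1-\mu$ and $(Y_1,Y_2)=((F')^{-1}(V),(G')^{-1}(U))$ with probability $\mu$, independently of $(U,V)$; this gives the correct marginals $F'$, $G'$, but the sum $Y_1+Y_2$ deviates from $F^{-1}(U)+G^{-1}(V)$ by a residual proportional to $\mu[h(U)-h(V)]$ with $h=G^{-1}-F^{-1}$. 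The plan is to absorb this residual by precomposing $(U,V)$ with an additional copula transformation exploiting the convexity of $h$ (and of $F^{-1}$, $G^{-1}$ individually) inherited from the monotone-density hypothesis, so that the averaged sum equals $M$ in distribution.

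Main obstacle. The absorption step is precisely where the analogue of Lemma \ref{lem:f}(ii), so useful for distribution mixtures in Theorem \ref{th:1}, is unavailable for quantile mixtures, and the monotone-density hypothesis must do real work by being converted into a distributional identity rather than a mere randomization of outcomes. A sensible first target is the weakest form (iii), where all $F_i$ are scalings of a single $F\in\mathcal M_D$: approximating $F$ by piecewise-uniform distributions and applying Proposition \ref{prop:uniform} piecewise, with the scaling factors playing the role of uniform lengths, could yield (iii) by a limiting argument via the compactness in Theorem 2.1 of \cite{BJW14}. This would already close a substantial gap, since the Pareto family with a fixed shape parameter falls under (iii) up to location shift, and the resulting set inclusion would strengthen Proposition \ref{cor:VaR_pareto} from a worst-case VaR inequality to a full distributional comparison.
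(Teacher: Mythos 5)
This statement is a \emph{conjecture}: the paper offers no proof and explicitly lists it among its open questions in Section \ref{sec:conc}, so there is nothing to compare your argument against except its own internal soundness. As written, your proposal is not a proof --- you say yourself that the ``absorption'' step, which is the entire mathematical content of the two-dimensional problem, is unresolved. Beyond that admitted gap, the reduction itself has two concrete flaws. First, the decomposition theorem in Section 2.B of \cite{MOA11} says that for two \emph{real vectors} with $\mathbf y\prec\mathbf x$ there is a finite chain of T-transforms carrying $\mathbf x$ to $\mathbf y$; it does \emph{not} say that every $\Lambda\in\mathcal Q_n$ factors as a product of T-transforms, and for $n\ge 3$ this is false (Marshall--Olkin--Arnold give an explicit $3\times 3$ doubly stochastic counterexample). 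For the main statement you must transform the function-valued vector $\mathbf F^{-1}$, and a chain of T-transforms chosen at each fixed $u$ would have pairs $(i,j)$ and parameters $\mu$ depending on $u$, so it cannot be glued into a single sequence of quantile-mixture operations; only parts (i) and (iii), where the majorized objects are the real scale vectors $\boldsymbol\gamma\prec\boldsymbol\lambda$, genuinely reduce to single T-transforms.

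Second, even where the T-transform reduction is legitimate, your passage to $n=2$ replaces the conjecture by the \emph{sufficient} condition $\mathcal D_2(F_i,F_j)\subset\mathcal D_2(F_i',F_j')$, which may well be strictly harder or false while the $n$-dimensional inclusion holds: the one nontrivial positive instance known, Proposition \ref{prop:uniform}, requires $n\ge 3$ precisely because the characterization $\D_n=\C_n$ for uniforms fails at $n=2$, and the paper flags the $n=2$ case as open. Your Example-\ref{ex:1}-style obstruction lives exactly in dimension two, so collapsing to $n=2$ discards the extra freedom that makes the conjecture plausible. The fallback for (iii) --- piecewise-uniform approximation plus Proposition \ref{prop:uniform} ``applied piecewise'' --- also does not go through as stated, since the dependence structure of an element of $\D_n(\mathbf F)$ couples the uniform pieces across coordinates and cannot be decomposed into separate uniform aggregation problems, and the closing limit argument needs a continuity property of $\mathbf F\mapsto\D_n(\mathbf F)$ under weak convergence of the marginals that Theorem 2.1 of \cite{BJW14} does not by itself supply. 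In short: the reduction machinery is partly broken, and where it works it leads to a subproblem that is at least as open as the conjecture itself.
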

It is obvious that
the main statement in
Conjecture \ref{conj4}
implies (i)  by noting that one can choose $\Lambda$ such that
$\boldsymbol \gamma =\Lambda \boldsymbol \lambda$ and it implies (ii)  by choosing $\Lambda=(\frac 1n)_{n\times n}$.
Both (i) and (ii) imply (iii).  An example is provided below to illustrate the connection of Conjecture \ref{conj4} to joint mixability.
\begin{example}
We make a connection of Conjecture \ref{conj4} to Theorem 3.2 of \cite{WW16}, which says that
for  $F_i \in \mathcal M_D$ with essential support $[0,b_i]$, $i=1,\dots,n$,
$\D_n(F_1,\dots,F_n)$ contains a point mass if and only if the \emph{mean-length condition} holds, that is,
$$\sum_{i=1}^n \mu_i \ge \max_{i=1,\dots,n} b_i $$
where $\mu_i$ is the mean of $F_i$, $i=1,\dots,n$.
For   $\Lambda \in \mathcal Q_n$ and $\mathbf F \in \mathcal M_D^n$, let $(\hat \mu_1,\dots,\hat \mu_n)$ be the mean vector of
$\Lambda \otimes \mathbf F$. Note that
 $$\sum_{i=1}^n \hat \mu_i = \mathbf 1_n^\top \Lambda \boldsymbol \mu = \mathbf 1_n^\top \boldsymbol \mu =   \sum_{i=1}^n \mu_i,$$
 where $\mathbf 1_n=(1,\dots,1)\in\mathbb{R}^n$.
 On the other hand, each component of $\Lambda \otimes \mathbf F$ has a shorter  or equal length of  support than the maximum length of $\mathbf F$.
As a consequence, if the mean-length condition holds for $ \mathbf F$, then it also holds for $\Lambda \otimes \mathbf F$.
Therefore, if
 $\D_n(\mathbf F) $ contains a point mass, then so does  $\D_n(\Lambda \otimes \mathbf F) $;
 on the contrary, if $\D_n(\Lambda \otimes \mathbf F) $ contains a point mass,  $\D_n(\mathbf F) $ does not necessarily contains a point mass, since it may have a longer length of the maximum support.
 This, at least intuitively, suggests that
  $\D_n(\mathbf F)  \subsetneq \D_n(\Lambda \otimes \mathbf F) $ may hold, as in Conjecture \ref{conj4}.
  \end{example}

The third question is about the order of VaR for quantile mixture.
Our numerical results in Figure \ref{fig:power_het}
suggest that the VaR relation
  $$\overline{\VaR}_p (\mathbf F )
  \le
  \overline{\VaR}_p (\Lambda \otimes \mathbf{F})
  $$
    holds for more general choices of $\mathbf F$ than the ones in Theorem \ref{th:scale}.
 We are not sure what general conditions on $\mathbf F$ will guarantee this relation to hold.

The last question concerns a cross comparison of distribution and quantile mixtures. As we see from Proposition \ref{cor:VaR_pareto},
 $$  \overline{\VaR}_p (\Lambda\mathbf F )  \le
     \overline{\VaR}_p (\Lambda  \otimes  \mathbf F  ) $$ holds for $\mathbf F$ being a vector of Pareto distributions with the same shape parameter and infinite mean.
     We wonder whether the same relationship holds for other distributions without a finite mean. Note that for the case of finite mean, the relationship may be reversed, as illustrated in Figure \ref{fig:power_pareto}; however we do not have a proof for the reverse inequality (assuming finite mean) either.
     Generally, it is unclear to us whether and in which situation $\D_n (\Lambda\mathbf F )$ and $\D_n (\Lambda\otimes \mathbf F )$  are comparable.

\subsection*{Acknowledgements}
We thank an Associate Editor and two anonymous referees for helpful comments on an earlier version of the paper.
Y.~Liu is financially supported by the China Scholarship Council. R.~Wang
acknowledges financial support from the Natural Sciences and Engineering Research Council of Canada
(NSERC, RGPIN-2018-03823, RGPAS-2018-522590) and from the Center of Actuarial Excellence Research
Grant from the Society of Actuaries.

\subsection*{AUTHOR CONTRIBUTION}
All authors contributed equally to the paper.

\subsection*{DATA AVAILABILITY STATEMENT}
Data sharing not applicable - no new data generated.
\appendix
\normalsize

\section{Some proofs and further technical results}

\subsection{A lemma used in the proof of Theorem \ref{th:scale}}
\label{app:a1}
The following lemma is rephrased from Theorem 2 of \cite{BLLW20}.
\begin{lemma}\label{lem:LW20}
 For $p \in (0,1)$  and  any $\mathbf F=(F_1, \cdots, F_n) \in \M^n$,
  \begin{equation}\label{eq:primal-app}
  \overline{\VaR}_p^* (\mathbf F) \le \inf_{\boldsymbol{\beta} \in \mathbb{B}_n} \sum_{i=1}^{n} \frac{1}{(1-p)(1-\beta)} \int_{p+(1-p)(\beta -\beta_i)}^{1-(1-p)\beta_i} \VaR_u(F_i) \d u,
  \end{equation}
  where $\boldsymbol{\beta}= (\beta_1, \cdots, \beta_n)$, $\beta = \sum_{i=1}^n \beta_i$ and
  $
  \mathbb{B}_n = \{\boldsymbol{\beta} \in [0,1)^n:\beta < 1  \},
  $
  and the above inequality is an equality if $\mathbf F\in \M_D^n \cup \M_I^n$.
\end{lemma}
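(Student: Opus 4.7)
Since Lemma \ref{lem:LW20} is stated as a paraphrase of Theorem 2 of \cite{BLLW20}, my proof proposal is essentially to verify that the formula in \eqref{eq:primal-app} is the correct rewriting of that theorem in our notation, and then appeal to it. I would split the argument into two parts: the general upper bound \eqref{eq:primal-app}, and the equality under the monotone density assumption.

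For the upper bound, the plan is to recover it from the standard ES-bound technique with slack parameters, which is the starting point of \cite{BLLW20}. Fix $\boldsymbol\beta=(\beta_1,\dots,\beta_n)\in\mathbb{B}_n$ and random variables $X_i\sim F_i$. The idea is to split the probability budget $1-p$ across the $n$ marginals by assigning mass $(1-p)\beta_i$ to component $i$, leaving $(1-p)(1-\beta)$ as the common ``overflow'' budget. For a comonotonic-like coupling on the top $(1-p)(1-\beta)$-tail of each $F_i$, the sum $\sum_i X_i$ can be majorized by its average over a carefully chosen window $[p+(1-p)(\beta-\beta_i),\,1-(1-p)\beta_i]$ of length $(1-p)(1-\beta)$. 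The expected value of $\VaR_U(F_i)$ over $U$ uniform on this window yields exactly the summand in \eqref{eq:primal-app}, and then monotonicity of $\VaR$ together with Markov's/Chebyshev's-type arguments yields $\VaR_p^*(X_1+\dots+X_n)\le \sum_i\frac{1}{(1-p)(1-\beta)}\int_{p+(1-p)(\beta-\beta_i)}^{1-(1-p)\beta_i}\VaR_u(F_i)\,\d u$. Taking the infimum over $\boldsymbol\beta$ gives \eqref{eq:primal-app}.

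For equality when $\mathbf F\in\M_D^n\cup\M_I^n$, I would invoke the sharpness part of Theorem 2 of \cite{BLLW20}. The key ingredient is joint mixability of tail distributions under monotone densities: when each $F_i$ has a monotone density, the conditional $(1-p)(1-\beta)$-tail distributions of $F_i$ (shifted to start at the boundary $\VaR_{p+(1-p)(\beta-\beta_i)}(F_i)$) admit a jointly mixable coupling by the results of \cite{WPY13} and \cite{WW16}. This coupling makes the sum constant on the tail event, so the inequality from the first part becomes an equality for the optimizing $\boldsymbol\beta$.

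The main obstacle I anticipate is not a deep new argument but a careful bookkeeping check: the window $[p+(1-p)(\beta-\beta_i),1-(1-p)\beta_i]$ in our notation must be identified with the window used in \cite{BLLW20}, which is typically written in a different parametrization (e.g.\ in terms of allocation weights summing to $1$ rather than slacks summing to at most $1$). Once the change of variables is verified, the remainder of the proof is a direct citation, which is why this lemma is stated without its own proof in the main text.
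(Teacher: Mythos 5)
The paper offers no proof of this lemma beyond the citation itself---it is stated verbatim as a rephrasing of Theorem 2 of \cite{BLLW20}---so your plan of verifying the reparametrization (setting $\alpha_i=(1-p)\beta_i$ and $\alpha_0=(1-p)(1-\beta)$, so that the window $[p+(1-p)(\beta-\beta_i),\,1-(1-p)\beta_i]$ becomes $[1-\alpha_0-\alpha_i,\,1-\alpha_i]$ with $\sum_{i=0}^n\alpha_i=1-p$) and then citing that theorem is exactly what the paper does. Your sketch of the underlying mechanics (slack-parameter upper bound plus sharpness via joint mixability of tails with monotone densities) is a faithful description of the cited proof, so the proposal is correct and takes essentially the same approach.
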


\subsection{Proof of Proposition \ref{prop:compare}}
\label{app:P4}
\begin{proof} We first focus on (i). We will show (a) $\Leftrightarrow$ (c). (c) $\Rightarrow$ (a) is trivial by the definition of stochastic order. For (a) $\Rightarrow$ (c), note that $\Lambda \mathbf F\lst \mathbf F$ with $\Lambda=(\Lambda_{ij})$ implies
\begin{align}\label{inequalites}
\sum_{j=1}^n\Lambda_{ij}F_j(x)\geq F_i(x),~x\in\mathbb{R},i=1,\dots,n.
\end{align}
Adding all the inequalities in (\ref{inequalites}) yields
\begin{align*}
 \sum_{i=1}^{n}\sum_{j=1}^n\Lambda_{ij}F_j(x)\geq \sum_{i=1}^nF_i(x),~x\in\mathbb{R}.
\end{align*}
Due to the fact that $\Lambda$ is a doubly stochastic matrix, we have
\begin{align*}
 \sum_{i=1}^{n}\sum_{j=1}^n\Lambda_{ij}F_j(x)=\sum_{i=1}^nF_i(x),~x\in\mathbb{R}.
\end{align*}
Hence all the inequalities in (\ref{inequalites}) are essentially equalities.
This  proves (c). We can analogously show that (b) $\Leftrightarrow$ (c). This establishes the claims in (i).
We will omit the proof of (ii) since it is  similar to the proof of (i).

\noindent
We next focus on (iii). Trivially, (c) $\Rightarrow$ (a) and (c) $\Rightarrow$ (b). Next, we will only show (a) $\Rightarrow$ (c) since (b) $\Rightarrow$ (c) is similar. Denote by $\mathbf G=(G_1,\dots,G_n)=\Lambda\otimes \mathbf F$. Hence
\begin{align*}
  G_i^{-1}=\sum_{j=1}^{n}\Lambda_{ij}F_j^{-1}.
\end{align*}
By definition, $\Lambda\otimes \mathbf F\lcx \mathbf F$ implies
$G_i\lcx F_i, i=1,\dots, n$. It is well known
(see e.g., Theorem 3.A.5 of \cite{SS07}) that for any two distributions  $F$ and $G$ in $\mathcal M_1$,
\begin{align} \label{eq:equivalence}
 F\lcx G  ~~\Leftrightarrow ~~
\ES_p(F)\le \ES_p(G)~ \mbox{for all} ~p\in (0,1).
\end{align}
Moreover, by the comonotonic-additivity of $\ES_p$, we have
\begin{align*}
  \ES_p(G_i)=\sum_{j=1}^{n}\Lambda_{ij}\ES_p(F_j),~i=1,\dots,n.
\end{align*}
Consequently,
\begin{align}\label{inequalites1}
  \ES_p(G_i)=\sum_{j=1}^{n}\Lambda_{ij}\ES_p(F_j)\leq \ES_p(F_i),~p\in (0,1), i=1,\dots, n.
\end{align}
Noting that $\Lambda$ is a doubly stochastic matrix, similarly as in the proof of (i), adding  all the inequalities in (\ref{inequalites1}) leads to
\begin{align*}
  \sum_{i=1}^{n}\ES_p(G_i)=
  \sum_{i=1}^{n}\sum_{j=1}^{n}\Lambda_{ij}\ES_p(F_j)=\sum_{i=1}^{n} \ES_p(F_i),~p\in (0,1).
\end{align*}
This implies that the inequalities in (\ref{inequalites1}) are equalities, which means that $\Lambda\otimes \mathbf F=\mathbf F$ by (\ref{eq:equivalence}). We complete the proof of (iii).

\noindent
Finally, we consider (iv). (b) $\Rightarrow$ (a) is trivial. We will show (a) $\Rightarrow$ (b). By (\ref{eq:equivalence}), $\Lambda \mathbf F\lcx \mathbf F$ is equivalent to
\begin{align}\label{inequalities2}
\ES_p(F_i)\geq \ES_p\left(\sum_{j=1}^{n}\Lambda_{ij}F_j\right),~~~i=1,\dots,n.
\end{align}
Moreover, by the concavity of  $\ES_p$ on mixtures (e.g., Theorem 3 of \cite{WWW20}), we have
$$\ES_p\left(\sum_{j=1}^{n}\Lambda_{ij}F_j\right)\ge  \sum_{j=1}^n \Lambda_{ij} \ES_p(F_j).$$
Therefore, we have
\begin{align}\label{inequalities3}
\ES_p(F_i)\geq \ES_p\left(\sum_{j=1}^{n}\Lambda_{ij}F_j\right)\geq \sum_{j=1}^n \Lambda_{ij} \ES_p(F_j),~~~i=1,\dots,n.
\end{align}
Adding the inequalities in (\ref{inequalities3}) with noting that $\Lambda$ is a doubly stochastic matrix  yields
\begin{align*}
\sum_{i=1}^{n}\ES_p(F_i)\geq \sum_{i=1}^{n}\ES_p\left(\sum_{j=1}^{n}\Lambda_{ij}F_j\right)\geq \sum_{i=1}^n\sum_{j=1}^n \Lambda_{ij} \ES_p(F_j)=\sum_{i=1}^{n}\ES_p(F_i).
\end{align*}
Hence
\begin{align*}
\sum_{i=1}^{n}\ES_p(F_i)=\sum_{i=1}^{n}\ES_p\left(\sum_{j=1}^{n}\Lambda_{ij}F_j\right),
\end{align*}
which implies that inequalities in (\ref{inequalities2}) are all equalities.  We establish the claim by (\ref{eq:equivalence}). \end{proof}

\subsection{Proof of Proposition \ref{prop:trivial4}}
\label{app:Pa}

\begin{proof}
\begin{enumerate}[(i)]
\item Note that $(P_{\alpha,\theta})^{-1}=\theta (P_{\alpha,1})^{-1}$ for $\theta, \alpha>0$. Hence we prove (i) by showing that $$(\Lambda \otimes \mathbf P_{\alpha,\boldsymbol \theta})^{-1}=\Lambda (\mathbf P_{\alpha,\boldsymbol \theta})^{-1}=\mathbf P_{\alpha,\Lambda\boldsymbol\theta}^{-1}.$$
\item Let $\mathcal{U}$ be the set of uniform random variables on $[0,1]$. By monotonicity of $\rho$, we have, for $0<\alpha_1<\alpha_2$,
     \begin{align*}
     \overline{\rho} (\mathbf P_{\alpha_1,  \boldsymbol \theta} )=&\sup\left\{\rho\left(\theta_{1}U_{1}^{-1/\alpha_1}+\dots+\theta_{n}U_{n}^{-1/\alpha_1}\right)\mid U_{1},\dots,U_{n}\in\mathcal{U}\right\}\\
      \geq&\sup\left\{\rho\left(\theta_{1}U_{1}^{-1/\alpha_2}+\dots+\theta_{n}U_{n}^{-1/\alpha_2}\right)\mid U_{1},\dots,U_{n}\in\mathcal{U}\right\}= \overline{\rho} (  \mathbf P_{\alpha_2,  \boldsymbol \theta} ).
     \end{align*}
     This implies that $\overline{\rho} (\mathbf P_{\alpha,  \boldsymbol \theta} )$ is decreasing in $\alpha$.
\item By monotonicity of $\rho$, we can establish the claim of (iii) similarly as the proof of (ii). \qedhere
\end{enumerate}
\end{proof}

\subsection{Some further properties of $\overline{\VaR}_p(\mathbf P_{\alpha,\boldsymbol \theta})$}
\label{app:Pa2}
Properties of $\overline{\rho}(P_{\alpha,\boldsymbol \theta})$  in Proposition \ref{prop:trivial4} can be strengthened for $\rho=\VaR_p$.
\begin{proposition}\label{prop:trivial2}

 For  $p\in (0,1)$, $\alpha>0 $ and $\boldsymbol \theta  \in (0,\infty)^n$,
 \begin{enumerate}[(i)]
 \item $\overline{\VaR}_p(\mathbf P_{\alpha,\boldsymbol \theta} )$ is increasing  and continuous in $p$;
\item  $\overline{\VaR}_p(\mathbf P_{\alpha,\boldsymbol \theta} )$  is decreasing and continuous in $\alpha$;
\item $\overline{\VaR}_p(\mathbf P_{\alpha,\boldsymbol \theta} )$  is increasing and continuous in each component of  $\boldsymbol \theta$;
\item  $\overline{\VaR}_p(\mathbf P_{\alpha,\boldsymbol \theta} )$  is homogeneous in $\boldsymbol \theta$, that is, for $\lambda > 0$,
$$
\overline{\VaR}_p (\mathbf P_{\alpha,\lambda \boldsymbol \theta} ) = \lambda \overline{\VaR}_p (\mathbf P_{\alpha,\boldsymbol \theta} );
$$
\item If $\alpha>1$, then
\begin{equation}\label{eq:supVaR_Pareto}
\frac{\mathbf 1 \cdot \boldsymbol \theta} {(1-p)^{1/\alpha} }\le \overline{\VaR}_p (\mathbf P_{\alpha,  \boldsymbol \theta} )   \le \frac{\alpha} {\alpha-1} \times \frac{ \mathbf 1 \cdot \boldsymbol \theta} {(1-p)^{1/\alpha} } .
\end{equation}
\end{enumerate}
\end{proposition}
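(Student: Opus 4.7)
My plan is to handle (iv) and (v) by direct computation, use Proposition \ref{prop:trivial4} for monotonicity in (i)--(iii), and obtain continuity from the explicit representation given by Lemma \ref{lem:LW20}, which applies since Pareto distributions have decreasing densities on $[\theta,\infty)$, so $\mathbf P_{\alpha,\boldsymbol\theta}\in\M_D^n$.

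For (iv), I will observe that if $Y_i\sim P_{\alpha,\theta_i}$ then $\lambda Y_i\sim P_{\alpha,\lambda\theta_i}$ (the quantile function scales linearly), so $\D_n(\mathbf P_{\alpha,\lambda\boldsymbol\theta})=\lambda\,\D_n(\mathbf P_{\alpha,\boldsymbol\theta})$, and positive homogeneity of $\VaR_p$ gives the claim. For (v), the lower bound comes from the comonotonic sum, which lies in $\D_n(\mathbf P_{\alpha,\boldsymbol\theta})$ and for which $\VaR_p$ is additive:
$$\overline{\VaR}_p(\mathbf P_{\alpha,\boldsymbol\theta})\ge\sum_{i=1}^n\VaR_p(P_{\alpha,\theta_i})=\frac{\mathbf 1\cdot\boldsymbol\theta}{(1-p)^{1/\alpha}}.$$
For the upper bound, I will use $\VaR_p\le\ES_p$ together with the fact that $\ES_p$ is consistent with convex order, so $\overline{\ES}_p$ is attained at the comonotonic sum (the convex-order maximum of $\D_n$):
$$\overline{\VaR}_p(\mathbf P_{\alpha,\boldsymbol\theta})\le\overline{\ES}_p(\mathbf P_{\alpha,\boldsymbol\theta})=\sum_{i=1}^n\ES_p(P_{\alpha,\theta_i})=\frac{\alpha}{\alpha-1}\cdot\frac{\mathbf 1\cdot\boldsymbol\theta}{(1-p)^{1/\alpha}},$$
using the standard formula $\ES_p(P_{\alpha,\theta})=\alpha\theta/((\alpha-1)(1-p)^{1/\alpha})$ valid for $\alpha>1$.

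Monotonicity in (i)--(iii) is almost free: increasing in $p$ because $p\mapsto\VaR_p(G)$ is increasing for each $G$, and the other monotonicities are Proposition \ref{prop:trivial4}(ii)--(iii). For continuity, I will apply Lemma \ref{lem:LW20} to write
$$\overline{\VaR}_p(\mathbf P_{\alpha,\boldsymbol\theta})=\inf_{\boldsymbol\beta\in\mathbb{B}_n}\sum_{i=1}^n\frac{\theta_i}{(1-p)(1-\beta)}\int_{p+(1-p)(\beta-\beta_i)}^{1-(1-p)\beta_i}(1-u)^{-1/\alpha}\,du,$$
and then substitute $v=(1-u)/(1-p)$ to obtain the clean factorization
$$\overline{\VaR}_p(\mathbf P_{\alpha,\boldsymbol\theta})=\frac{1}{(1-p)^{1/\alpha}}\,\inf_{\boldsymbol\beta\in\mathbb{B}_n}\frac{1}{1-\beta}\sum_{i=1}^n\theta_i\int_{\beta_i}^{1-\beta+\beta_i}v^{-1/\alpha}\,dv.$$
This factorization makes continuity in $p$ immediate and re-derives the homogeneity in (iv). Continuity in $\boldsymbol\theta$ follows because the inner expression is linear in $\boldsymbol\theta$, so the infimum is concave in $\boldsymbol\theta$ and hence continuous on the open orthant $(0,\infty)^n$. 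For continuity in $\alpha$, I will combine upper semicontinuity of the infimum (each summand is jointly continuous in $(\alpha,\boldsymbol\beta)$ by dominated convergence, with integrability uniform in $\alpha$ on compact subsets of $(0,\infty)$ via the bounds of (v)) with the monotonicity in $\alpha$ from part (ii) to rule out jumps.

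The main obstacle is the continuity in $\alpha$: the feasible set $\mathbb{B}_n$ is not compact and the infimum need not be attained, so upper semicontinuity of an infimum of continuous functions is straightforward but the opposite inequality requires monotonicity-plus-USC reasoning applied on both sides of the limit. Everything else is routine once the factored formula is in hand.
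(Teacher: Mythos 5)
Most of your proposal is sound, and parts (iv) and (v) coincide exactly with the paper's own argument (scaling of the aggregation set; comonotonic lower bound and the upper bound $\overline{\VaR}_p\le\overline{\ES}_p=\sum_i\ES_p(P_{\alpha,\theta_i})$). Your treatment of continuity is genuinely different from the paper's: the paper gets continuity in $p$ by citing Lemmas 4.4--4.5 of \cite{BJW14}, continuity in $\alpha$ via a power-mean sandwich based on \cite{HLP34}, and continuity in $\boldsymbol\theta$ via a homogeneity-plus-monotonicity squeeze, whereas you exploit the exact formula of Lemma \ref{lem:LW20} (legitimately, since Pareto densities are decreasing so the inequality is an equality) and the substitution $v=(1-u)/(1-p)$. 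I checked the resulting factorization and it is correct; it makes continuity and monotonicity in $p$, homogeneity in $\boldsymbol\theta$, and continuity in $\boldsymbol\theta$ (a finite infimum of affine functions of $\boldsymbol\theta$ is concave, hence continuous on the open orthant) immediate and arguably cleaner than the paper's route. The monotonicity claims in (i)--(iii) via Proposition \ref{prop:trivial4} are also fine.

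The genuine gap is continuity in $\alpha$. An infimum of functions continuous in $\alpha$ is upper semicontinuous, and combining USC with the fact that $\alpha\mapsto\overline{\VaR}_p(\mathbf P_{\alpha,\boldsymbol\theta})$ is decreasing yields only \emph{left}-continuity: monotonicity gives $f(\alpha-)\ge f(\alpha)$ and USC gives $f(\alpha-)\le f(\alpha)$, but on the right both properties deliver the same inequality $f(\alpha+)\le f(\alpha)$, so a downward jump as $\alpha$ increases through a point is not excluded. This is not a vacuous worry for infima over the non-compact set $\mathbb{B}_n$: for instance $h(\alpha,\beta)=\max(0,\,1-\beta(\alpha-1))$ is continuous and decreasing in $\alpha$ for every $\beta>0$, yet $\inf_{\beta>0}h(\alpha,\beta)$ equals $1$ for $\alpha\le1$ and $0$ for $\alpha>1$. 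The danger in your setting is precisely the one you flag --- near-optimal $\boldsymbol\beta$ for $\alpha'>\alpha$ drifting toward the boundary of $\mathbb{B}_n$ --- and ``monotonicity-plus-USC reasoning applied on both sides'' does not by itself supply the missing lower semicontinuity. To close the argument you need an additional input that controls the infimum from below as $\alpha'\downarrow\alpha$, e.g.\ the two-sided power-mean sandwich the paper uses, namely $\overline{\VaR}_p(\mathbf P_{\alpha_2,\boldsymbol\theta})^{\alpha_2/\alpha_1}\le\overline{\VaR}_p(\mathbf P_{\alpha_1,\boldsymbol\theta})\le\underline{\boldsymbol\theta}^{\,1-\alpha_2/\alpha_1}\,\overline{\VaR}_p(\mathbf P_{\alpha_2,\boldsymbol\theta})^{\alpha_2/\alpha_1}$ for $0<\alpha_1<\alpha_2$, which squeezes from both sides, or some uniform (in $\boldsymbol\beta$, on the relevant near-optimal region) comparison between the integrands $v^{-1/\alpha'}$ and $v^{-1/\alpha}$. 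Everything else in your proposal goes through.
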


\begin{proof}
  \begin{enumerate}[(i)]
    \item
    As the quantile of Pareto distribution is continuous, by Lemma 4.4 and 4.5 of \cite{BJW14}, $\overline{\VaR}_p(\mathbf P_{\alpha,\boldsymbol \theta} )$ is continuous in $p$ on  $(0,1)$.
    \item
    Let $\mathcal{U}$ be the set of uniform random variables distributed on $(0,1)$. We note that
    \begin{align*}
    \overline{\VaR}_p (\mathbf P_{\alpha, \boldsymbol \theta} )=&\sup\left\{\VaR_{p}\left(\theta_{1}U_{1}^{-1/\alpha}+\dots+\theta_{n}U_{n}^{-1/\alpha}\right):U_{1},\dots,U_{n}\in\mathcal{U}\right\}\\
    =&\sum_{i=1}^{n}\theta_{i}\sup\left\{\VaR_{1-p}\left(M_{\alpha, \boldsymbol \theta}(U_{1},\dots,U_{n})\right):U_{1},\dots,U_{n}\in\mathcal{U}\right\}^{-\frac{1}{\alpha}},
    \end{align*}
    where $M_{\alpha, \boldsymbol \theta}(u_{1},\dots,u_{n})=\left(\theta_{1}u_{1}^{-1/\alpha}+\dots+\theta_{n}u_{n}^{-1/\alpha}\right)^{-\alpha}/\left(\sum_{i=1}^{n}\theta_{i}\right)^{-\alpha}$, $u_{i}\in (0,1)$ for $i=1,\dots,n.$ Let $\underline{\boldsymbol \theta}=\min(\boldsymbol \theta/\left(\sum_{i=1}^{n}\theta_{i}\right)).$ With the classic averaging inequalities, for $0<\alpha_{1}<\alpha_{2}$, $M_{\alpha_{1}, \boldsymbol \theta}\le M_{\alpha_{2}, \boldsymbol \theta}$ (\cite{HLP34}, Theorem 16) and $\underline{\boldsymbol \theta}^{\alpha_{1}}M_{\alpha_{1}, \boldsymbol \theta}\ge \underline{\boldsymbol \theta}^{\alpha_{2}}M_{\alpha_{2}, \boldsymbol \theta}$ (\cite{HLP34}, Theorem 23). We note that $0<M_{\alpha, \boldsymbol \theta}<1$ and these two inequalities are directly translated to $$\overline{\VaR}_p (\mathbf P_{\alpha_{2}, \boldsymbol \theta} )^{\alpha_{2}/\alpha_{1}}\le \overline{\VaR}_p (\mathbf P_{\alpha_{1}, \boldsymbol \theta} )\le \underline{\boldsymbol \theta}^{1-\alpha_{2}/\alpha_{1}}\overline{\VaR}_p (\mathbf P_{\alpha_{2}, \boldsymbol \theta} )^{\alpha_{2}/\alpha_{1}}.$$
    By letting $\alpha_{1}\uparrow\alpha_{2}$ and $\alpha_{2}\downarrow\alpha_{1}$, we get the continuity of  $\overline{\VaR}_p (\mathbf P_{\alpha, \boldsymbol \theta} )$ in $\alpha>0$.

    \item

    Without loss of generality, we assume $\boldsymbol \theta_{1}=(\theta_{1},\dots,\theta_{n})$ and $\boldsymbol \theta_{2}=(\lambda\theta_{1},\dots,\theta_{n})$, $\lambda>0$. The monotonicity relative to $\boldsymbol \theta$ follows directly from Proposition \ref{prop:trivial4}.  Using the homogeneity of $\overline{\VaR}_p (\mathbf P_{\alpha, \boldsymbol \theta_{}} )$, which is proved in (iv), and the monotonicity with respect to $\boldsymbol \theta$ if $0<\lambda<1$,
    \begin{align*}
    \lambda\overline{\VaR}_p (\mathbf P_{\alpha, \boldsymbol \theta_{1}} )\leq \overline{\VaR}_p (\mathbf P_{\alpha, \boldsymbol \theta_{2}} )\leq\overline{\VaR}_p (\mathbf P_{\alpha, \boldsymbol \theta_{1}} ),
    \end{align*}
    otherwise
    \begin{align*}
    \overline{\VaR}_p (\mathbf P_{\alpha, \boldsymbol \theta_{1}} )\leq \overline{\VaR}_p (\mathbf P_{\alpha, \boldsymbol \theta_{2}} )\leq\lambda\overline{\VaR}_p (\mathbf P_{\alpha, \boldsymbol \theta_{1}} ).
    \end{align*}
    By letting $\lambda\uparrow 1$ and $\lambda\downarrow 1$, we get the desired result.

    \item
    For $\lambda>0$,
    \begin{align*}
    \overline{\VaR}_p (\mathbf P_{\alpha,\lambda \boldsymbol \theta} )&=\sup \{\VaR_p (G): G\in \mathcal D_n(\mathbf P_{\alpha,\lambda \boldsymbol \theta} )\}\\
    &=\sup \left\{\VaR_p \left(G\left(\frac{\cdot}{\lambda}\right)\right): G\in \mathcal D_n(\mathbf P_{\alpha, \boldsymbol \theta} )\right\}\\
    &=\lambda\sup \{\VaR_p ( G): G\in \mathcal D_n(\mathbf P_{\alpha, \boldsymbol \theta} )\}  =\lambda \overline{\VaR}_p (\mathbf P_{\alpha,\boldsymbol \theta} ).
    \end{align*}
    \item
    For $\alpha>1$, $\overline{\VaR}_p (\mathbf P_{\alpha,  \boldsymbol \theta} )\le \overline{\ES}_p (\mathbf P_{\alpha,  \boldsymbol \theta} )=\sum_{i=1}^{n}\ES_{p}(P_{\alpha,\theta_{i}})=\alpha \sum_{i=1}^{n}\theta_{i}/\left((\alpha -1 )   (1-p)^{1/\alpha} \right) ,$ and $\overline{\VaR}_p (\mathbf P_{\alpha,  \boldsymbol \theta} )\ge \sum_{i=1}^{n}\VaR_{p}(P_{\alpha,\theta_{i}})=\sum_{i=1}^{n}\theta_{i}/(1-p)^{1/\alpha}.$ \qedhere
  \end{enumerate}
\end{proof}

\small
 
\end{document}